\documentclass[11pt,a4paper]{article}

\usepackage[margin=1in]{geometry}
\usepackage{amsmath,amssymb,amsthm,amsfonts}
\usepackage{mathtools}
\usepackage[T1]{fontenc}
\usepackage{algorithm}
\usepackage{algpseudocode}
\usepackage{hyperref}
\usepackage{url}

\providecommand{\doi}[1]{}
\providecommand{\href}[2]{#2}

\usepackage{cleveref}
\usepackage{enumerate}
\usepackage{graphicx}
\usepackage{float}
\usepackage{booktabs}
\usepackage{xcolor}
\usepackage{tikz}
\usetikzlibrary{quantikz2}
\usepackage{pgfplots}
\pgfplotsset{compat=1.18}
\usepackage[final]{microtype}
\usepackage{cite}
\usepackage{url}

\newtheorem{theorem}{Theorem}[section]
\newtheorem{lemma}[theorem]{Lemma}
\newtheorem{proposition}[theorem]{Proposition}
\newtheorem{corollary}[theorem]{Corollary}

\theoremstyle{definition}
\newtheorem{definition}[theorem]{Definition}

\newtheorem{remark}[theorem]{Remark}
\newtheorem{problem}[theorem]{Problem}
\newtheorem{openproblem}[theorem]{Open Problem}
\newtheorem{subclaim}{Subclaim}[theorem]

\newcommand{\C}{\mathbb{C}}

\renewcommand{\ket}[1]{|#1\rangle}

\renewcommand{\braket}[2]{\langle#1|#2\rangle}

\newcommand{\norm}[1]{\left\|#1\right\|}
\newcommand{\abs}[1]{\left|#1\right|}

\newcommand{\poly}{\operatorname{poly}}
\newcommand{\polylog}{\operatorname{polylog}}
\newcommand{\ed}{\operatorname{ed}}

\newcommand{\OO}{\mathcal{O}}
\newcommand{\EE}{\mathbb{E}}
\newcommand{\Sym}{\mathfrak{S}}
\DeclareMathOperator{\spec}{spec}
\DeclareMathOperator{\spn}{span}

\title{Quantum Algorithms for \\Approximate Graph Isomorphism Testing}

\author{Prateek P.\ Kulkarni \\
PES University \\
\texttt{\textcolor{magenta}{pkulkarni2425@gmail.com}}}

\date{}

\begin{document}
\maketitle

\begin{abstract}
The graph isomorphism problem asks whether two graphs are identical up to vertex relabeling.
While the exact problem admits quasi-polynomial-time classical algorithms, many applications
in molecular comparison, noisy network analysis, and pattern recognition require a flexible
notion of structural similarity. We study the \emph{quantum query complexity} of approximate
graph isomorphism testing, where the two input graphs are marginally distributed as $\mathcal{G}(n,1/2)$ but
may be arbitrarily correlated through a latent planted permutation, and are considered
approximately isomorphic if they can be made isomorphic by at most $k$ edge edits.

We present a quantum algorithm based on MNRS quantum walk search over the product graph
$\Gamma(G,H)$ of the two input graphs. When the graphs are approximately isomorphic,
the quantum walk search detects vertex pairs belonging to a dense near-isomorphic
matching set; candidate pairings are then reconstructed via local consistency propagation
and verified via a Grover-accelerated consistency check. We prove that this approach achieves
query complexity $\OO(n^{3/2} \log n/\varepsilon)$, where $\varepsilon$
parameterizes the approximation threshold. We complement this with an $\Omega(n^2)$ classical
lower bound for constant approximation, establishing a genuine polynomial quantum speedup
in the query model.

We extend the framework to spectral similarity measures based on graph Laplacian eigenvalues,
as well as weighted and attributed graphs. Small-scale simulation results on quantum simulators
for graphs with up to twenty vertices demonstrate compatibility with near-term quantum devices.
\end{abstract}

\newpage
\tableofcontents
\newpage

\section{Introduction}\label{sec:intro}

The \emph{graph isomorphism problem} (GI) asks whether two graphs $G$ and $H$ on $n$ vertices
are identical up to a relabeling of their vertex sets. Formally, one must decide whether
there exists a bijection $\pi\colon V(G)\to V(H)$ such that $\{u,v\}\in E(G)$ if and only if
$\{\pi(u),\pi(v)\}\in E(H)$. Despite decades of study, the precise complexity of GI
remains unresolved: it lies in $\mathsf{NP}\cap\mathsf{coAM}$ and is not known to be
$\mathsf{NP}$-complete \cite{BabaiMoser1977,Schoning1988}. In a landmark result, Babai
\cite{Babai2016} gave a quasi-polynomial-time algorithm running in
$\exp(\polylog(n))$ time, substantially improving the previous best bound of
$\exp(\sqrt{n\log n})$ due to Luks \cite{Luks1982}.

\subsection{Motivation: Approximate Isomorphism}\label{sec:motivation}

Many practical applications do not require \emph{exact} isomorphism but rather a robust notion
of structural similarity. In cheminformatics, comparing molecular graphs that may differ by a
few bonds requires tolerance for small perturbations
\cite{Raymond2002,Willett1998}. In network analysis, noise in edge measurements makes exact
isomorphism tests overly rigid \cite{Conte2004}. In computer vision, pattern recognition under
occlusion or measurement error demands flexible graph matching
\cite{Bunke2000,FoggiaSVV2001}.

\paragraph{The correlated-input model and why it is the right formulation.}
A critical subtlety concerns the \emph{joint} distribution of the input pair $(G,H)$.
Two \emph{independent} draws from $\mathcal{G}(n,1/2)$ satisfy $\bar{ed}(G,H)\geq 1/4$ with
probability $1-e^{-\Omega(n^2)}$ (Lemma~\ref{lem:dno-valid}), so an algorithm that outputs
\textsc{No} without making a single query already succeeds with overwhelming probability.
The NO case is therefore trivially easy under the independent model, and any nontrivial
separation must come from something else. The meaningful and natural formulation—the one we adopt throughout—considers pairs
$(G,H)$ whose \emph{marginal} distributions are both $\mathcal{G}(n,1/2)$, yet which may be
\emph{arbitrarily correlated}:
\begin{itemize}
  \item \textbf{YES instances}: $G\sim\mathcal{G}(n,1/2)$; a permutation $\pi\sim\mathrm{Uniform}(S_n)$
        is drawn independently; then $H$ is obtained from $\pi(G)$ by independently flipping each
        edge with probability $\varepsilon/2$.  Marginally, $H\sim\mathcal{G}(n,1/2)$, but $(G,H)$ are
        \emph{strongly} correlated through the latent~$\pi$.
  \item \textbf{NO instances}: $G$ and $H$ are drawn \emph{independently} from
        $\mathcal{G}(n,1/2)$, so $(G,H)$ are uncorrelated and $\bar{ed}(G,H)\geq 2\varepsilon$ w.o.p.
\end{itemize}
From the marginals alone the two cases are \emph{identical}; an algorithm must probe the
\emph{joint} adjacency structure to detect the latent correlation.  This ``correlated planted
model'' is standard in the graph alignment and hypothesis-testing
literatures~\cite{PG11,CK16,MWZ23,RS21}, and captures practical scenarios such as comparing
two social-network snapshots of the same population observed at different times, or two
molecular graphs that are noisy measurements of the same underlying molecule.  Our upper
bound (Theorem~\ref{thm:main-upper}) and classical lower bound (Theorem~\ref{thm:classical-lower}) are
both about the complexity of \emph{detecting this correlation}, making the formulation
internally consistent: the classical algorithm must work against a YES distribution that is
genuinely hard, not one that is trivially easy from the NO side.

These considerations motivate the \emph{approximate graph isomorphism} problem: given graphs
$G$ and $H$, decide whether they can be made isomorphic by a ``small'' number of edge
modifications. The relevant measure of distance is the \emph{edit distance} between $G$ and
$H$, defined as
\[
  \ed(G,H) \;=\; \min_{\pi\in\Sym_n}
  \bigl|\bigl\{\{u,v\}\colon \{u,v\}\in E(G)\,\triangle\,\{\pi(u),\pi(v)\}\in E(H)
  \bigr\}\bigr|,
\]
where $\Sym_n$ denotes the symmetric group on $n$ elements and $\triangle$ is symmetric
difference. In words, $\ed(G,H)$ counts the minimum number of edge additions and deletions
required to transform $G$ into a graph isomorphic to $H$.

We study the promise problem of distinguishing \emph{close} pairs ($\ed(G,H)\le k$) from
\emph{far} pairs ($\ed(G,H)\ge K$) in the quantum query model, where the algorithm accesses
the adjacency matrices of $G$ and $H$ through quantum queries.

\subsection{Our Contributions}

We present a quantum algorithm for approximate graph isomorphism testing and establish its
query complexity, complementing it with a classical lower bound that demonstrates a genuine
polynomial speedup. Our main results are:

\begin{enumerate}[\bfseries (1)]
\item \textbf{Quantum Upper Bound (Theorem~\ref{thm:main-upper}).}
  We design a quantum algorithm based on MNRS quantum walk search over the \emph{product
  graph} $\Gamma(G,H)$ that decides $(\varepsilon,2\varepsilon)$-approximate graph isomorphism
  for graphs $G,H\sim\mathcal{G}(n,1/2)$ using
  \[
    \OO\!\left(\frac{n^{3/2}\log n}{\varepsilon}\right)
  \]
  queries to the adjacency matrices of $G$ and $H$, succeeding with probability at least
  $2/3$.

\item \textbf{Classical Lower Bound (Theorem~\ref{thm:classical-lower}).}
  Any classical randomized algorithm solving $(\varepsilon,2\varepsilon)$-approximate
  graph isomorphism with constant success probability requires $\Omega(n^2)$ queries for
  $\varepsilon = \Theta(1)$.

\item \textbf{Extensions.}
  We extend the framework to spectral distance measures based on graph Laplacian eigenvalues
  (Section~\ref{sec:extensions}), and to weighted and vertex-attributed graphs.

\end{enumerate}

\noindent
As supporting evidence, we report small-scale numerical simulations on quantum
simulators (Qiskit Aer) for graphs with up to $n=20$ vertices
(Section~\ref{sec:simulations}), confirming the predicted $n^{3/2}$ scaling and
assessing near-term noise resilience.

\subsection{Technical Overview}

\paragraph{Product graph and quantum walks.}
Given two graphs $G$ and $H$ on $[n]$, their \emph{compatibility product graph}
$\Gamma(G,H)$
has vertex set $[n]\times[n]$ and edges that reflect compatible adjacency structure
(Definition~\ref{def:product-graph}).
An approximate isomorphism $\pi$ with $\ed_\pi(G,H)\le k$ induces a set $M_\pi$ of $n$
vertex pairs $\{(i,\pi(i))\}$ in the product graph that form a dense, highly connected
substructure. Conversely, when the graphs are far from isomorphic, no large structured
cluster exists.

Our algorithm uses the MNRS quantum walk search framework on $\Gamma(G,H)$. In the
positive case, the matching set $M_\pi$ forms a dense cluster that the MNRS marking oracle
identifies; the quantum walk search finds a marked vertex in $\OO(\sqrt{n})$ walk
steps, yielding a candidate pair $(i,\pi(i))$ with constant probability.

\paragraph{Grover-accelerated verification.}
Given polynomially many candidate pairs, we assemble a candidate permutation and verify its
consistency using a Grover search over the $\binom{n}{2}$ potential edge positions. This
requires $\OO(\sqrt{n^2})=\OO(n)$ queries per candidate, and $\OO(\sqrt{n})$ candidates
suffice for reconstruction with high probability, yielding a total verification cost of
$\OO(n^{3/2})$.

\paragraph{Classical lower bound.}
We establish the $\Omega(n^2)$ classical lower bound through a reduction from the problem
of distinguishing a random graph from a graph obtained by planting a random permutation and
performing $k$ random edge edits. We show that any adaptive classical algorithm must inspect
$\Omega(n^2)$ entries of the adjacency matrices to distinguish these distributions.

\subsection{Related Work}

\paragraph{Graph isomorphism.}
The exact GI problem has a rich history. Beyond Babai's quasi-polynomial algorithm
\cite{Babai2016}, efficient algorithms exist for restricted graph classes: planar graphs
\cite{HopcroftWong1974}, graphs of bounded degree \cite{Luks1982}, and graphs of bounded
treewidth \cite{Bodlaender1990}. In the quantum setting, Ettinger, H\o yer, and Knill
\cite{EttingerHK2004} showed that the hidden subgroup approach for GI over $\Sym_n$ faces
information-theoretic barriers, and no quantum polynomial-time algorithm is known for exact
GI.

\paragraph{Quantum walks.}
Quantum walks have emerged as a powerful paradigm for quantum algorithm design.
Discrete-time quantum walks on graphs were formalized by Aharonov, Ambainis, Kempe, and
Vazirani \cite{AharonovAKV2001}, while Szegedy \cite{Szegedy2004} developed a general
framework connecting classical Markov chains to quantum walks with quadratic speedup in
hitting times. Magniez, Nayak, Roland, and Santha \cite{MNRS2011} extended this to
quantum walk search, and Ambainis \cite{Ambainis2007} used quantum walks to achieve optimal
query complexity for element distinctness.

\paragraph{Approximate and inexact graph matching.}
Classical algorithms for approximate or inexact graph matching include the work of Bunke and
Shearer \cite{Bunke1998} on graph edit distance, Umeyama \cite{Umeyama1988} on weighted
graph matching via eigendecomposition, and recent work on sublinear-time algorithms for
graph property testing \cite{Goldreich2017,Fischer2006}. The connection between graph
similarity and spectrum has been extensively studied
\cite{VanDamHaemers2003,WilsonZhu2008}.

\paragraph{Quantum property testing.}
Quantum speedups for property testing problems have been demonstrated in various settings,
including testing graph properties \cite{Ambainis2011}, testing Boolean functions
\cite{BuhrmanFNRWdW2010}, and distribution testing \cite{BravyiHMSSW2011}.

\paragraph{Our approach.}
Our work differs from the above lines of research in several respects.
Unlike the hidden subgroup approach to exact GI, which requires quantum Fourier transforms
over $\Sym_n$ and faces representation-theoretic barriers \cite{EttingerHK2004,MooreRSS2008},
we work entirely in the adjacency-query model and target the easier {\em approximate}
variant of the problem. Unlike classical sublinear graph property testing, where the natural
random-sampling baseline is bottlenecked by the combinatorial explosion of the permutation
search space, we exploit the MNRS quantum walk search framework \cite{MNRS2011} on a
purpose-built compatibility product graph whose dense near-clique structure encodes
approximate isomorphisms. This combination of quantum walk search on a structured product
graph with Grover-accelerated reconstruction and verification is, to our knowledge,
the first application of the MNRS framework to a graph similarity problem and yields a
provable polynomial quantum speedup over any classical strategy.

\subsection{Organization}
Section~\ref{sec:prelim} introduces notation and background on quantum computation, quantum
walks, and graph theory. Section~\ref{sec:problem} formalizes the approximate graph
isomorphism problem. Section~\ref{sec:product} defines the product graph and analyzes its
spectral properties. Section~\ref{sec:main} presents the main algorithm and proves the upper
bound. Section~\ref{sec:lower} establishes the classical lower bound.
Sections~\ref{sec:extensions} and~\ref{sec:simulations} discuss extensions and simulation
results, and Section~\ref{sec:discussion} concludes with open problems.

\section{Preliminaries}\label{sec:prelim}

We establish notation and review background material on graph theory, quantum computation,
and quantum walks.

\subsection{Graph-Theoretic Notation}\label{sec:graph-notation}

Throughout this paper, $G=(V,E)$ denotes a simple, undirected graph with vertex set
$V=V(G)$ and edge set $E=E(G)\subseteq\binom{V}{2}$. Unless otherwise stated, $|V|=n$.
We identify $V$ with $[n]=\{1,2,\ldots,n\}$.

\begin{definition}[Adjacency matrix]
The \emph{adjacency matrix} of $G$ is the symmetric matrix
$A_G\in\{0,1\}^{n\times n}$ with $(A_G)_{ij}=1$ if and only if $\{i,j\}\in E(G)$.
\end{definition}

\begin{definition}[Graph Laplacian]
The \emph{combinatorial Laplacian} of $G$ is $L_G=D_G-A_G$, where
$D_G=\mathrm{diag}(\deg(1),\ldots,\deg(n))$ is the degree matrix. The
\emph{normalized Laplacian} is $\mathcal{L}_G=D_G^{-1/2}L_G D_G^{-1/2}$.
\end{definition}

We write $0=\lambda_1(G)\le\lambda_2(G)\le\cdots\le\lambda_n(G)$ for the eigenvalues of
$L_G$ and $\spec(G)=\{\lambda_1(G),\ldots,\lambda_n(G)\}$ for the spectrum.

\begin{definition}[Graph isomorphism]
Graphs $G$ and $H$ on $[n]$ are \emph{isomorphic}, written $G\cong H$, if there exists a
bijection $\pi\in\Sym_n$ such that $\{u,v\}\in E(G)\iff\{\pi(u),\pi(v)\}\in E(H)$.
\end{definition}

\begin{definition}[Edit distance]\label{def:edit-distance}
For graphs $G$ and $H$ on $[n]$ and a permutation $\pi\in\Sym_n$, define
\[
  \ed_\pi(G,H) \;=\;
  \bigl|\bigl\{\{u,v\}\in\tbinom{[n]}{2}\colon
  (A_G)_{uv}\neq(A_H)_{\pi(u)\pi(v)}\bigr\}\bigr|.
\]
The \emph{edit distance} is $\ed(G,H)=\min_{\pi\in\Sym_n}\ed_\pi(G,H)$.
\end{definition}

We also use the normalized edit distance $\bar{\ed}(G,H)=\ed(G,H)/\binom{n}{2}$, which
takes values in $[0,1]$.

\subsection{Quantum Computation Basics}\label{sec:quantum-basics}

We assume familiarity with the standard quantum circuit model (see \cite{NielsenChuang2010}).
A quantum state on $m$ qubits is a unit vector $\ket{\psi}\in\C^{2^m}$. A quantum query
to a Boolean function $f\colon\{0,1\}^n\to\{0,1\}$ is the unitary
$O_f\colon\ket{x}\ket{b}\mapsto\ket{x}\ket{b\oplus f(x)}$.

\begin{definition}[Quantum query complexity]
The \emph{quantum query complexity} of a decision problem $P$ with oracle access to input
$x$ is the minimum number of queries to $O_x$ required by a quantum algorithm that computes
$P(x)$ correctly with probability at least $2/3$ on every input.
\end{definition}

In our setting, the input consists of two graphs $G,H$ on $[n]$, and queries access
individual entries of the adjacency matrices:
\begin{equation}\label{eq:oracle}
  O_G\colon\ket{i}\ket{j}\ket{b}\;\mapsto\;\ket{i}\ket{j}\ket{b\oplus(A_G)_{ij}},
  \qquad
  O_H\colon\ket{i}\ket{j}\ket{b}\;\mapsto\;\ket{i}\ket{j}\ket{b\oplus(A_H)_{ij}}.
\end{equation}

\begin{theorem}[Grover search, {\cite{Grover1996}}]\label{thm:grover}
Given quantum query access to $f\colon[N]\to\{0,1\}$ with $t$ marked elements
($f(x)=1$), a marked element can be found with probability $\ge 2/3$ using
$\OO(\sqrt{N/t})$ queries.
\end{theorem}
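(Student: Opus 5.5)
The plan is to prove Theorem~\ref{thm:grover} with the standard amplitude-amplification argument. We analyze the Grover iterate as a rotation in a two-dimensional invariant subspace. Then we handle the unknown number of marked elements $t$ with a randomized exponential-search schedule. Throughout, one query is one application of a phase oracle. We obtain it from $O_f$ by preparing the target qubit in $(\ket{0}-\ket{1})/\sqrt{2}$, so $O_f$ acts as $S_f\colon\ket{x}\mapsto(-1)^{f(x)}\ket{x}$ at the cost of one query.

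First, fix $t\ge 1$ and set $\ket{s}=N^{-1/2}\sum_x\ket{x}$. Decompose $\ket{s}=\sin\theta\ket{m}+\cos\theta\ket{u}$, where $\ket{m}$ and $\ket{u}$ are the normalized uniform superpositions over marked and unmarked elements, and $\sin\theta=\sqrt{t/N}$. The Grover iterate is $G=(2\ket{s}\langle s|-I)S_f$. Both factors are reflections that preserve $\spn\{\ket{m},\ket{u}\}$, so their product is a rotation by angle $2\theta$ in this plane. Hence
\[
  G^r\ket{s}=\sin((2r+1)\theta)\ket{m}+\cos((2r+1)\theta)\ket{u}.
\]
Measuring yields a marked element with probability $\sin^2((2r+1)\theta)$. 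If $t$ is known, we choose $r=\lfloor \pi/(4\theta)\rfloor$. Then $(2r+1)\theta$ lies within $\theta$ of $\pi/2$, and since $\theta\approx\sqrt{t/N}$ this costs $\OO(\sqrt{N/t})$ queries. The success probability is $\ge\cos^2\theta\ge 1-t/N$. That is at least $2/3$ unless $t>N/3$, and in that case a single classical sample plus one query to check it already succeeds with probability $\ge 1/3$. A constant number of repetitions, each checked with one query to $f$, boosts this to $2/3$.

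Second, we remove the assumption that $t$ is known, following Boyer–Brassard–Høyer–Tapp. Choose $r$ uniformly from $\{0,\dots,M-1\}$. A trigonometric averaging identity gives
\[
  \frac1M\sum_{r<M}\sin^2((2r+1)\theta)=\frac12-\frac{\sin(4M\theta)}{4M\sin 2\theta},
\]
which is $\ge 1/4$ once $M\ge 1/\sin 2\theta$. We run rounds with $M=1,\lambda,\lambda^2,\dots$ for a fixed $\lambda\in(1,4/3)$, capped at $\sqrt N$. Each round measures and checks the outcome with one query. Before $M$ reaches $\Theta(\sqrt{N/t})$, the cost is a geometric sum bounded by $\OO(\sqrt{N/t})$. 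After that point, each round succeeds with probability $\ge 1/4$, while its cost grows by only a factor $\lambda<4/3$. So the expected remaining cost is a convergent geometric series, again $\OO(\sqrt{N/t})$. Markov's inequality turns the expected bound into a worst-case bound: truncating at a constant multiple of the expectation keeps the success probability at least $2/3$.

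The main obstacle is this unknown-$t$ step. We must verify the averaging identity and check that the ratio $\lambda<4/3$ makes the post-threshold series converge. The rotation analysis itself is routine. Since the statement as used in the paper is the standard result of~\cite{Grover1996}, a short version of the proof could simply cite it, together with the exponential-search extension.
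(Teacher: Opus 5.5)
The paper does not prove this statement. It imports it as a black box with a citation to Grover, so you are supplying the standard proof rather than competing with an internal one, and your argument is correct. The citation buys brevity; your write-up buys a clean separation between the two readings of the theorem.

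\textbf{Known $t$.} This part is complete and already gives the statement with a genuine worst-case query bound. It consists of:
\begin{enumerate}[(i)]
  \item the Grover iterate acting as a rotation by $2\theta$ in $\spn\{\ket{m},\ket{u}\}$;
  \item the choice $r=\lfloor\pi/(4\theta)\rfloor\le\frac{\pi}{4}\sqrt{N/t}$, valid because $\theta\ge\sin\theta$;
  \item success probability at least $\cos^2\theta=1-t/N$;
  \item classical sampling when $t>N/3$.
\end{enumerate}

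\textbf{Unknown $t$.} This part is really the Boyer--Brassard--H{\o}yer--Tapp result rather than anything in Grover's paper, and two steps deserve tightening.

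First, the threshold $1/\sin 2\theta=N/(2\sqrt{t(N-t)})$ is at most $\sqrt{N/t}$ only for $t\le 3N/4$. Near $t=N$ it is about $\sqrt{N}/2$. So your claim ``once $M$ reaches $\Theta(\sqrt{N/t})$, each round succeeds with probability at least $1/4$'' is not justified by your averaging identity in that regime. One fix is to dispose of $t>3N/4$ with a few initial classical samples, exactly as you did in the known-$t$ part. The other is to bound the averaged success probability directly for $\theta\ge\pi/4$, where it stays above a constant for every $M$.

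Second, ``truncating at a constant multiple of the expectation'' is not an algorithmic step when $t$ is unknown, because the cutoff $3c\sqrt{N/t}$ depends on $t$. What Markov's inequality actually gives is that the untruncated search has found a marked element within $3c\sqrt{N/t}$ queries with probability at least $2/3$. That is the appropriate reading of the theorem for unknown $t$. However, it is a statement about a random stopping time, not the worst-case bound you call it; an explicit cutoff of order $\sqrt{N/t}$ requires knowing $t$ up to a constant factor.
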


\begin{theorem}[Amplitude estimation, {\cite{BrassardHMT2002}}]\label{thm:amp-est}
Given a quantum circuit $\mathcal{A}$ such that measuring the output of $\mathcal{A}\ket{0}$
yields $\ket{1}$ with probability $p$, there exists a quantum algorithm that outputs
$\tilde{p}$ satisfying $|\tilde{p}-p|\le\delta$ with probability $\ge 2/3$ using
$\OO(1/\delta)$ applications of $\mathcal{A}$ and $\mathcal{A}^\dagger$.
\end{theorem}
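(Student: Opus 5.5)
The statement just before this point is Theorem~\ref{thm:amp-est} (amplitude estimation), which the paper quotes from \cite{BrassardHMT2002}. The plan is to reduce estimation of $p$ to phase estimation on the Grover-type iterate built from $\mathcal{A}$. Let $\ket{\psi}=\mathcal{A}\ket{0}$ and split it as $\ket{\psi}=\sqrt{p}\,\ket{\psi_1}+\sqrt{1-p}\,\ket{\psi_0}$, where $\ket{\psi_1}$ is the normalized projection onto outcomes with flag $\ket{1}$ and $\ket{\psi_0}$ is the normalized remainder. Write $p=\sin^2\theta$ with $\theta\in[0,\pi/2]$. Define
\[
Q=-\mathcal{A}S_0\mathcal{A}^\dagger S_\chi ,
\]
where $S_\chi$ flips the sign of the flagged subspace and $S_0$ flips the sign of $\ket{0}$. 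Each application of $Q$ uses one call to $\mathcal{A}$ and one to $\mathcal{A}^\dagger$.

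\textbf{Invariant plane and eigenvalues.} First I will check that $Q$ preserves the two-dimensional space $\spn\{\ket{\psi_0},\ket{\psi_1}\}$ and acts there as a rotation by angle $2\theta$. Its eigenvectors are $\ket{\psi_\pm}=(\ket{\psi_1}\pm i\ket{\psi_0})/\sqrt2$, with eigenvalues $e^{\pm 2i\theta}$. Consequently $\ket{\psi}$ is an equal-weight superposition of the two eigenvectors, up to phases.

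\textbf{Phase estimation.} Next I run phase estimation with $M$ controlled powers $Q^{j}$, $j<M$. The standard analysis of the phase-estimation output distribution then gives, with probability at least $8/\pi^2>2/3$, an estimate $\tilde\theta$ with $|\tilde\theta-\theta|\le \pi/M$. This holds whichever eigenvector branch occurs, because $\sin^2$ is invariant under $\theta\mapsto\pi-\theta$.

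\textbf{Error transfer.} Set $\tilde p=\sin^2\tilde\theta$. Then
\[
|\tilde p-p|=|\sin(\tilde\theta+\theta)\sin(\tilde\theta-\theta)|\le 2\sqrt{p(1-p)}\,\tfrac{\pi}{M}+\tfrac{\pi^2}{M^2},
\]
which is at most $\delta$ once $M=\lceil c/\delta\rceil$ for a suitable constant $c$. This gives $\OO(1/\delta)$ applications of $\mathcal{A}$ and $\mathcal{A}^\dagger$.

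\textbf{Main obstacle.} The only delicate step is the phase-estimation tail bound, specifically getting a success probability above $2/3$ with error $\pi/M$ rather than a weaker constant. I would take the standard Fej\'er-kernel estimate on the nearest grid point. If that needs slack, I would instead run phase estimation with $M$ multiplied by a constant and take the median of $\OO(1)$ repetitions; this only changes constants.
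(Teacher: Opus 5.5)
Your proposal is correct and is essentially the standard Brassard--H\o yer--Mosca--Tapp argument: the Grover iterate $Q=-\mathcal{A}S_0\mathcal{A}^\dagger S_\chi$, then phase estimation, then the $\sin^2$ error transfer giving $|\tilde p-p|\le 2\pi\sqrt{p(1-p)}/M+\pi^2/M^2$; the paper proves nothing itself here and simply cites \cite{BrassardHMT2002}, so this is exactly the argument it relies on. One small precision: the $8/\pi^2$ guarantee for error at most $\pi/M$ comes from the combined mass on the two grid points bracketing $M\theta/\pi$, not from the single nearest grid point, which alone only guarantees about $4/\pi^2<2/3$; the bound you actually state is the correct one.
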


\subsection{Quantum Walks}\label{sec:quantum-walks}

We use the framework of Szegedy \cite{Szegedy2004} for quantizing reversible Markov chains.

\begin{definition}[Szegedy walk]\label{def:szegedy}
Let $P$ be the transition matrix of an ergodic, reversible Markov chain on state space
$\mathcal{X}$ with $|\mathcal{X}|=N$ and stationary distribution $\mu$. Define the
\emph{Szegedy walk operator} $W(P)$ on $\C^N\otimes\C^N$ by
\[
  W(P) \;=\; \mathrm{ref}(B)\;\mathrm{ref}(A),
\]
where $A=\spn\{\ket{\psi_x}\colon x\in\mathcal{X}\}$,
$B=\spn\{\ket{\phi_x}\colon x\in\mathcal{X}\}$, with
$\ket{\psi_x}=\ket{x}\sum_{y}\sqrt{P(x,y)}\ket{y}$ and
$\ket{\phi_y}=\sum_x\sqrt{P(x,y)}\ket{x}\ket{y}$, and $\mathrm{ref}(S)=2\Pi_S-I$ is
the reflection about subspace $S$.
\end{definition}

The key spectral theorem connecting the Szegedy walk to the classical chain is:

\begin{theorem}[Szegedy spectral theorem, {\cite{Szegedy2004}}]\label{thm:szegedy-spectral}
Let $P$ be a reversible Markov chain with eigenvalues $1=\lambda_1\ge\lambda_2\ge\cdots\ge
\lambda_N\ge -1$. The eigenvalues of $W(P)$ corresponding to each classical eigenvalue
$\lambda_j$ are $e^{\pm i\arccos(\lambda_j)}$. In particular, the spectral gap $\delta$
of $P$ (defined as $\delta=1-\lambda_2$) gives rise to a phase gap
$\Delta=\Theta(\sqrt{\delta})$ in $W(P)$.
\end{theorem}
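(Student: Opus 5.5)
The statement to prove is Theorem~\ref{thm:szegedy-spectral}: relate the spectrum of $W(P)=\mathrm{ref}(B)\,\mathrm{ref}(A)$ to that of $P$. The plan is the standard two-reflections argument. The whole computation reduces to the discriminant matrix $D=\Pi_A\Pi_B\Pi_A$ restricted to $A$, or equivalently the $N\times N$ matrix of overlaps $D_{xy}=\braket{\psi_x}{\phi_y}=\sqrt{P(x,y)P(y,x)}$.

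\textbf{Step 1: the discriminant.} Define isometries $T\colon\ket{x}\mapsto\ket{\psi_x}$ and $S\colon\ket{y}\mapsto\ket{\phi_y}$. They are isometries because the $\ket{\psi_x}$ are orthonormal ($\sum_y P(x,y)=1$) and likewise the $\ket{\phi_y}$. Then $\Pi_A=TT^\dagger$, $\Pi_B=SS^\dagger$, and $T^\dagger S=D$. By reversibility, $\mu(x)P(x,y)=\mu(y)P(y,x)$, so
\[
\sqrt{P(x,y)P(y,x)}=\sqrt{\mu(x)/\mu(y)}\,P(x,y).
\]
Hence $D=\mathrm{diag}(\sqrt{\mu})\,P\,\mathrm{diag}(\sqrt{\mu})^{-1}$ is symmetric and similar to $P$, and its eigenvalues are exactly $\lambda_1,\dots,\lambda_N$ with orthonormal eigenvectors $v_j$.

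\textbf{Step 2: invariant planes (Jordan's lemma).} For each $j$, set $a_j=Tv_j\in A$ and $b_j=Sv_j\in B$, so that $\langle a_j|b_j\rangle=\lambda_j$. Using $T^\dagger S=D$ we get
\[
\Pi_A b_j=\lambda_j a_j,\qquad \Pi_B a_j=\lambda_j b_j.
\]
Therefore $\spn\{a_j,b_j\}$ is invariant under both reflections. On this two-dimensional space the reflections act as reflections about two lines meeting at angle $\theta_j=\arccos\lambda_j$. Their product is a rotation by $2\theta_j$, with eigenvalues $e^{\pm 2i\theta_j}$. The distinct planes are mutually orthogonal, because $\langle a_j|a_k\rangle=\delta_{jk}$ and the cross terms reduce to $D$-matrix entries that vanish. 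On the orthogonal complement of $A+B$, $W$ acts as the identity (times a sign), which contributes only eigenvalues $\pm1$, irrelevant for the gap.

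Note that a literal product of two reflections gives phases $\pm 2\arccos\lambda_j$, not $\pm\arccos\lambda_j$ as the statement claims. I would state the result with the $2\theta_j$ convention and remark that the factor of $2$ is absorbed in the asymptotic claim. Alternatively, one can follow a convention where the walk is viewed as a single step. The degenerate planes $\lambda_j=\pm1$ contribute the eigenvalue $1$ (the stationary state $\sum_x\sqrt{\mu(x)}\ket{\psi_x}$) and are handled separately.

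\textbf{Step 3: phase gap.} The smallest nonzero phase is $2\arccos\lambda_2$ (assuming laziness or $|\lambda_N|\le\lambda_2$; otherwise use $\delta=1-\max_{j\ge2}|\lambda_j|$). Using
\[
\arccos(1-\delta)=\Theta(\sqrt{\delta}),
\]
which follows from $1-\cos\theta=\theta^2/2+O(\theta^4)$ together with the bound $\arccos(1-\delta)\le\pi\sqrt{\delta/2}$ over the whole range $\delta\in[0,2]$, we obtain $\Delta=\Theta(\sqrt{\delta})$.

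\textbf{Main obstacle.} The main difficulty is the bookkeeping in Step 2: showing that the planes are orthogonal and together cover $A+B$, and handling degenerate planes and the complement correctly. The mismatch between the factor-$2$ phase and the statement's $\arccos\lambda_j$ also needs to be flagged explicitly.
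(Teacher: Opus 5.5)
Your argument is correct and is Szegedy's own proof: a symmetric discriminant $D_{xy}=\sqrt{P(x,y)P(y,x)}$, followed by two-dimensional invariant subspaces on which $W(P)$ acts as a rotation. The paper gives no proof and only cites \cite{Szegedy2004}. Both corrections you flag are real errors in the statement. For $W(P)=\mathrm{ref}(B)\,\mathrm{ref}(A)$ the eigenvalues are $e^{\pm 2i\arccos\lambda_j}$; the $e^{\pm i\arccos\lambda_j}$ spectrum belongs to the single-step operator $\mathrm{SWAP}\cdot\mathrm{ref}(A)$, whose square is $W(P)$ when $B=\mathrm{SWAP}\cdot A$. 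Without laziness, the gap must be $1-\max_{j\ge 2}|\lambda_j|$. The bookkeeping you worry about is immediate. The sets $\{a_j\}$ and $\{b_j\}$ are orthonormal bases of $A$ and $B$, so the planes exhaust $A+B$. On $(A+B)^\perp$ both reflections equal $-I$, so $W(P)=I$ there exactly, with no sign ambiguity.

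The one thing you missed is that Step~1 is false for the paper's Definition~\ref{def:szegedy}, which sets $\ket{\phi_y}=\sum_x\sqrt{P(x,y)}\ket{x}\ket{y}$. With that choice, $\braket{\psi_x}{\phi_y}=P(x,y)$ and $\norm{\ket{\phi_y}}^2=\sum_x P(x,y)$. So neither the orthonormality of the $\ket{\phi_y}$ nor $\braket{\psi_x}{\phi_y}=\sqrt{P(x,y)P(y,x)}$ holds unless $P$ is doubly stochastic. You are silently using Szegedy's definition $\ket{\phi_y}=\sum_x\sqrt{P(y,x)}\ket{x}\ket{y}=\mathrm{SWAP}\ket{\psi_y}$.

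This is not cosmetic. Take the two-state chain with $P(1,2)=a\neq b=P(2,1)$. The paper's $B$ meets $A$ at principal angles with cosines $1$ and $|1-a-b|/\sqrt{1-(a-b)^2}$, not $1$ and $|\lambda_2|=|1-a-b|$. For $a=1/2$, $b=1/10$ this is about $0.436$ versus $0.4$. So state explicitly that you prove the theorem for the standard $B$; this is a third correction to the statement.

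None of the three corrections affects the paper's use of the theorem. The product-graph walk is symmetric and lazy, so all $\lambda_j\in[0,1]$, and only $\Delta=\Theta(\sqrt{\delta})$ is invoked.
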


\begin{theorem}[Quantum walk search, {\cite{MNRS2011}}]\label{thm:qwalk-search}
Let $P$ be an ergodic, reversible Markov chain on $\mathcal{X}$ with stationary distribution
$\mu$, spectral gap $\delta$, and let $M\subseteq\mathcal{X}$ be a set of marked vertices
with $\mu(M)=\epsilon$. There exists a quantum algorithm that finds a marked vertex (if one
exists) with $\OO(1/\sqrt{\epsilon})$ applications of the Szegedy walk operator $W(P)$,
each requiring one step of the walk, plus setup cost $S$, update cost $U$, and checking cost
$C$, giving total complexity
\[
  \OO\!\left(\frac{1}{\sqrt{\epsilon}}\left(\frac{1}{\sqrt{\delta}}\cdot U + C\right)+S\right).
\]
\end{theorem}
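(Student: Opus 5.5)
The statement is the MNRS search theorem, so I would reconstruct the argument of \cite{MNRS2011} as amplitude amplification driven by an approximate reflection about the stationary state. First I fix the Szegedy embedding of Definition~\ref{def:szegedy}. Let
\[
  \ket{\pi}=\sum_{x}\sqrt{\mu(x)}\,\ket{\psi_x}=\sum_{x,y}\sqrt{\mu(x)P(x,y)}\,\ket{x}\ket{y}.
\]
By reversibility, $\ket{\pi}$ lies in $A\cap B$, so it is a $(+1)$-eigenvector of $W(P)$. The setup cost $S$ is by definition the cost of preparing $\ket{\pi}$, together with whatever data structure the walk carries. Write $\ket{\pi}=\sqrt{\epsilon}\,\ket{\pi_M}+\sqrt{1-\epsilon}\,\ket{\pi_{\bar M}}$, splitting over whether the first register is marked. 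The ideal algorithm is Grover-type amplitude amplification in the two-dimensional span of $\ket{\pi_M}$ and $\ket{\pi_{\bar M}}$. It alternates the reflection $\mathrm{ref}(M)$, which costs one check $C$, with the reflection $\mathrm{ref}(\pi)=2\ket{\pi}\bra{\pi}-I$. After $\OO(1/\sqrt{\epsilon})$ rounds it lands near $\ket{\pi_M}$, and measuring the first register gives a marked vertex with constant probability.

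The core step is implementing $\mathrm{ref}(\pi)$ approximately using only walk steps. By Theorem~\ref{thm:szegedy-spectral}, on the span $A+B$ the only eigenvalue-$1$ eigenvector of $W(P)$ is $\ket{\pi}$. Every other eigenvector there has eigenphase of magnitude at least $\Delta=\Theta(\sqrt{\delta})$. (Vectors orthogonal to $A+B$ are handled by noting that the states we feed in stay inside $A+B$.) I would run phase estimation on $W(P)$ with precision $\Delta/2$, which uses $\OO(1/\sqrt{\delta})$ controlled walk steps. I then flip the sign whenever the estimated phase is nonzero and uncompute. This leaves $\ket{\pi}$ fixed and sends each state orthogonal to $\ket{\pi}$ to something within constant distance of its negation. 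Repeating the estimation $k$ times in parallel and flipping unless all $k$ estimates are zero drives that error down to $2^{-k}$, at cost $\OO(k/\sqrt{\delta})$ walk steps. Each walk step costs $U$ in the paper's accounting, so one approximate reflection costs $\OO(kU/\sqrt{\delta})$.

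The hard step is controlling how these errors accumulate over the $T=\OO(1/\sqrt{\epsilon})$ rounds of amplification. Errors add at most linearly, so the first thing I would try is $k=\Theta(\log(1/\epsilon))$. That makes the total deviation $\OO(T\,2^{-k})=o(1)$ and gives a bound of $S+\OO\bigl(\epsilon^{-1/2}(\log(1/\epsilon)\,U/\sqrt{\delta}+C)\bigr)$, which is off by a logarithmic factor. To remove the logarithm I would follow MNRS. One refinement is a recursive amplification in which the phase-estimation precision is tightened geometrically. The other is a tolerant analysis in which the error of the approximate reflection is only charged on the component of the state orthogonal to $\ket{\pi}$, and a constant $k$ suffices in the constant-success regime. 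Either route should recover the stated bound $\OO(\epsilon^{-1/2}(\delta^{-1/2}U+C)+S)$. Finally, if $\epsilon$ is not known in advance, I would use the standard exponential search over guesses $\epsilon=2^{-j}$. The costs form a geometric series, so this only affects constant factors.
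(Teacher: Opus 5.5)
The paper gives no proof of this statement; it imports it from \cite{MNRS2011}, so your proposal has to stand on its own as a reconstruction of MNRS. The core is sound. $\ket{\pi}$ is the unique $(+1)$-eigenvector of $W(P)$ on $A+B$. Phase estimation to precision $\Theta(\sqrt{\delta})$, repeated $k$ times, gives a reflection that is exact on $\ket{\pi}$ and has error $2^{-\Omega(k)}$ on $(A+B)\cap\ket{\pi}^{\perp}$. The hybrid argument over the Grover rounds only evaluates this reflection on ideal states, which lie in $A$.

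What this actually proves is $S+\OO\bigl(\epsilon^{-1/2}(\log(1/\epsilon)\,U/\sqrt{\delta}+C)\bigr)$. Removing the logarithm is the only part of the stated bound beyond this textbook argument, and you only assert that one of two refinements ``should'' do it. Neither works as described.

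\textbf{Why the tolerant analysis fails.} Let $T\approx\pi/(4\theta)$ be the number of rounds, with $\sin\theta=\sqrt{\epsilon}$, and let $\ket{\phi_t}$ be the ideal state after $t$ rounds. The input to the $(t+1)$-st reflection is $\mathrm{ref}(M)\ket{\phi_t}$, and its component orthogonal to $\ket{\pi}$ has norm $|\sin(2(t+1)\theta)|$. For $t+1\ge T/4$ this is at least $\sin(\pi/8)$, because from then on the state has rotated a constant angle from $\ket{\pi}$ toward $\ket{\pi_M}$. So charging error only on that component buys nothing in the last three quarters of the run. With constant $k$, your bound on the accumulated error is $\Theta(2^{-k}/\sqrt{\epsilon})$, not $O(1)$.

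More generally, for any schedule $(k_t)$, keeping the additive bound $O(1)$ requires $\sum_{t\ge T/4}2^{-k_t}=O(1)$. By convexity of $x\mapsto 2^{-x}$, this forces the average $k_t$ over those $\Theta(T)$ rounds to be $\log_2 T-O(1)$.

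\textbf{Why the recursive route fails as described.} Suppose it means $B_{\ell+1}=B_\ell\,\mathrm{ref}(\pi)\,B_\ell^{-1}\,\mathrm{ref}(M)\,B_\ell$ with level-dependent precision. This unrolls to $B_\ell=G^{(3^\ell-1)/2}$ with $G=\mathrm{ref}(\pi)\,\mathrm{ref}(M)$. That is the same alternating sequence of reflections, so it is just another schedule $(k_t)$ and falls under the same bound.

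\textbf{What is missing.} Removing the logarithm needs a change to the algorithm or to the error analysis, not a redistribution of precision. Errors must be controlled relative to the current good amplitude instead of being summed, for example by interleaving amplification with error reduction in the spirit of H\o yer--Mosca--de Wolf. That argument is the missing piece.

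Two side remarks. First, the log-lossy bound you did prove is all the paper needs. In Proposition~\ref{prop:find-marked}, $\epsilon=\mu(M)\ge 1/(2n)$, $U=\OO(1)$, $\delta=\Omega(1)$ and $C=\Theta(\log n/\varepsilon^2)$, so the extra $\log n$ on the $U/\sqrt{\delta}$ term is absorbed by $C$. Second, your argument correctly uses $\Theta(k/\sqrt{\epsilon\delta})$ walk steps. The statement's clause ``$\OO(1/\sqrt{\epsilon})$ applications of $W(P)$'' matches the displayed cost only when $\delta=\Omega(1)$, so the displayed formula is what should be proved.
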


\subsection{Useful Inequalities}

We collect several standard results that will be used throughout.

\begin{lemma}[Hoeffding's inequality]\label{lem:hoeffding}
Let $X_1,\ldots,X_m$ be independent random variables with $X_i\in[a_i,b_i]$. Then
for $\bar{X}=\frac{1}{m}\sum_{i=1}^{m}X_i$,
\[
  \Pr\bigl[\abs{\bar{X}-\EE[\bar{X}]}\ge t\bigr]
  \;\le\; 2\exp\!\left(-\frac{2m^2 t^2}{\sum_{i=1}^m(b_i-a_i)^2}\right).
\]
\end{lemma}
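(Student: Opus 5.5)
The plan is the textbook Chernoff--Cramér argument: bound the upper tail by exponentiating, bound each moment generating factor via Hoeffding's lemma, optimize over the free parameter, and treat the lower tail symmetrically.

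\textbf{Reduction.} Write $S=\sum_{i=1}^m X_i$. The event $\bar X-\EE[\bar X]\ge t$ is the same as $S-\EE[S]\ge mt$, and likewise for the lower tail. So it suffices to prove
\[
\Pr\bigl[S-\EE[S]\ge mt\bigr]\le\exp\!\left(-\frac{2m^2t^2}{\sum_i(b_i-a_i)^2}\right).
\]
The mirror bound then follows by applying the same argument to the variables $-X_i\in[-b_i,-a_i]$, which have the same range widths. A union bound over the two tails gives the factor $2$.

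\textbf{Exponential Markov step.} Fix $\lambda>0$. Markov's inequality applied to $e^{\lambda(S-\EE S)}$ gives
\[
\Pr[S-\EE S\ge mt]\le e^{-\lambda mt}\,\EE\bigl[e^{\lambda(S-\EE S)}\bigr]=e^{-\lambda mt}\prod_{i=1}^m\EE\bigl[e^{\lambda(X_i-\EE X_i)}\bigr].
\]
The factorization into a product uses independence of the $X_i$.

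\textbf{Hoeffding's lemma (main step).} The key claim is that if $Y\in[a,b]$ and $\EE Y=0$, then $\EE e^{\lambda Y}\le e^{\lambda^2(b-a)^2/8}$. I would prove it in three steps.
\begin{itemize}
\item By convexity of $y\mapsto e^{\lambda y}$ on $[a,b]$, we have $e^{\lambda y}\le\frac{b-y}{b-a}e^{\lambda a}+\frac{y-a}{b-a}e^{\lambda b}$. Taking expectations and using $\EE Y=0$ gives $\EE e^{\lambda Y}\le e^{\varphi(h)}$. Here $h=\lambda(b-a)$, $p=-a/(b-a)$, and $\varphi(h)=-hp+\log(1-p+pe^h)$.
\item Next, $\varphi(0)=\varphi'(0)=0$. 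Also $\varphi''(h)=q(1-q)$ for $q=pe^h/(1-p+pe^h)\in[0,1]$, so $\varphi''\le 1/4$.
\item Taylor's theorem with remainder then yields $\varphi(h)\le h^2/8$.
\end{itemize}
This is the only step that needs an actual computation, and I expect it to be the main (though standard) obstacle. The second derivative bound $q(1-q)\le 1/4$ is the crux.

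\textbf{Optimization.} Applying the lemma to $Y_i=X_i-\EE X_i$, which ranges over an interval of length $b_i-a_i$, gives
\[
\Pr[S-\EE S\ge mt]\le\exp\!\Bigl(-\lambda mt+\tfrac{\lambda^2}{8}\sum_i(b_i-a_i)^2\Bigr).
\]
Choose $\lambda=4mt/\sum_i(b_i-a_i)^2$. Substituting gives exactly $\exp\bigl(-2m^2t^2/\sum_i(b_i-a_i)^2\bigr)$. Combined with the lower tail and the union bound, this completes the proof.
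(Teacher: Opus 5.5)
Your proof is correct. It is the standard Cram\'er--Chernoff argument combined with Hoeffding's lemma: the convexity bound, then $\varphi''=q(1-q)\le 1/4$, then Taylor. The choice $\lambda=4mt/\sum_i(b_i-a_i)^2$ gives exactly the stated exponent, and the union bound over the two tails supplies the factor $2$.

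The paper gives no proof of this lemma; it quotes it as a standard tool, so yours is the textbook argument it implicitly relies on. The one point worth stating explicitly is that $t>0$ is assumed. Your $\lambda>0$ requires it, and for negative $t$ the displayed bound as written can fail.
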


\begin{lemma}[Matrix Chernoff bound, {\cite{Tropp2012}}]\label{lem:matrix-chernoff}
Let $X_1,\ldots,X_m$ be independent random symmetric matrices with $0\preceq X_i\preceq I$
and let $\mu_{\min}=\lambda_{\min}(\EE[\sum_i X_i])$. Then
\[
  \Pr\!\left[\lambda_{\min}\!\left(\sum_{i=1}^m X_i\right)
  \le (1-\delta)\mu_{\min}\right]
  \;\le\; n\cdot\left(\frac{e^{-\delta}}{(1-\delta)^{1-\delta}}\right)^{\mu_{\min}}
\]
for any $\delta\in[0,1)$.
\end{lemma}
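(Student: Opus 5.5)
The Matrix Chernoff bound (Lemma~\ref{lem:matrix-chernoff}) is a standard result of Tropp, and I would prove it by the matrix Laplace transform method. Throughout, $n$ denotes the common dimension of the $X_i$.

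\textbf{Step 1: reduce to a trace bound.} For $\theta>0$, apply Markov's inequality to the minimum eigenvalue. Write $Y=\sum_i X_i$. Then
\[
\Pr[\lambda_{\min}(Y)\le t]\le e^{\theta t}\,\EE\bigl[e^{-\theta\lambda_{\min}(Y)}\bigr]=e^{\theta t}\,\EE\bigl[\lambda_{\max}(e^{-\theta Y})\bigr]\le e^{\theta t}\,\EE\operatorname{tr} e^{-\theta Y}.
\]
The last step uses positivity of the exponential, since the largest eigenvalue of $e^{-\theta Y}$ is at most its trace.

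\textbf{Step 2: handle the sum.} The key and hardest step is controlling $\EE\operatorname{tr}\exp(\sum_i -\theta X_i)$ despite non-commutativity. Golden--Thompson only handles two summands, so I would invoke Lieb's concavity theorem: the map $A\mapsto\operatorname{tr}\exp(H+\log A)$ is concave on positive definite matrices. Conditioning successively on each $X_i$ and applying Jensen gives subadditivity of matrix cumulant generating functions:
\[
\EE\operatorname{tr}\exp\Bigl(\sum_i -\theta X_i\Bigr)\le \operatorname{tr}\exp\Bigl(\sum_i\log\EE e^{-\theta X_i}\Bigr).
\]

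\textbf{Step 3: bound each factor.} Since $0\preceq X_i\preceq I$, convexity of $x\mapsto e^{-\theta x}$ on $[0,1]$ gives the scalar inequality $e^{-\theta x}\le 1-(1-e^{-\theta})x$. By the transfer rule this lifts to $e^{-\theta X_i}\preceq I-(1-e^{-\theta})X_i$. Taking expectations, then using operator monotonicity of $\log$ together with $\log(1-u)\le -u$, yields
\[
\log\EE e^{-\theta X_i}\preceq -(1-e^{-\theta})\,\EE X_i.
\]
Summing, and using monotonicity of $\operatorname{tr}\exp$, the trace is at most $\operatorname{tr}\exp\bigl(-(1-e^{-\theta})\EE Y\bigr)\le n\,e^{-(1-e^{-\theta})\mu_{\min}}$.

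\textbf{Step 4: optimize.} Set $t=(1-\delta)\mu_{\min}$. This gives the bound
\[
n\exp\bigl(\theta(1-\delta)\mu_{\min}-(1-e^{-\theta})\mu_{\min}\bigr).
\]
Choosing $\theta=-\log(1-\delta)$, so that $1-e^{-\theta}=\delta$, produces
\[
n\bigl[(1-\delta)^{-(1-\delta)}e^{-\delta}\bigr]^{\mu_{\min}},
\]
which is the claimed bound. The only delicate ingredient is Lieb's theorem in Step~2, which I would cite rather than reprove.
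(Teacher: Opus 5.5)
Your argument is correct, and it is essentially the same approach as the paper's: the paper gives no proof of its own and simply cites Tropp, whose proof is exactly this matrix Laplace-transform argument (Markov on $\lambda_{\min}$, Lieb-based subadditivity of the matrix cumulant generating function, the chord bound $e^{-\theta x}\le 1-(1-e^{-\theta})x$ lifted by the transfer rule, and the choice $\theta=\log\frac{1}{1-\delta}$). One small addition: for $\delta=0$ your choice gives $\theta=0$, which falls outside Step~1's requirement $\theta>0$, but there the claimed bound equals $n\ge 1$ and holds trivially.
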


\begin{lemma}[Gentle measurement lemma, {\cite{Winter1999}}]\label{lem:gentle}
If a measurement on state $\rho$ yields outcome $k$ with probability $1-\epsilon$, then
the post-measurement state $\rho_k$ satisfies
$\norm{\rho-\rho_k}_1\le 2\sqrt{\epsilon}$.
\end{lemma}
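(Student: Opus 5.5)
The statement is the gentle measurement lemma, so I would prove it directly, first for pure states and then for mixed states by purification.

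\emph{Setup.} Model the measurement by a POVM element $0\preceq E\preceq I$ with $\operatorname{tr}(E\rho)=1-\epsilon$. Take the standard post-measurement state $\rho_k=\sqrt{E}\rho\sqrt{E}/(1-\epsilon)$.

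\emph{Pure-state case.} Let $\rho=\ket{\psi}\langle\psi|$ and $\ket{\phi}=\sqrt{E}\ket{\psi}/\sqrt{1-\epsilon}$.
\begin{itemize}
\item For pure states,
\[
\norm{\ket{\psi}\langle\psi|-\ket{\phi}\langle\phi|}_1=2\sqrt{1-|\braket{\psi}{\phi}|^2}.
\]
\item The overlap is $\braket{\psi}{\phi}=\langle\psi|\sqrt{E}|\psi\rangle/\sqrt{1-\epsilon}$.
\item Since $0\preceq E\preceq I$, we have $\sqrt{E}\succeq E$. Hence $\langle\psi|\sqrt{E}|\psi\rangle\ge\langle\psi|E|\psi\rangle=1-\epsilon$.
\item This gives $|\braket{\psi}{\phi}|^2\ge 1-\epsilon$, so the trace distance is at most $2\sqrt{\epsilon}$.
\end{itemize}

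\emph{Mixed-state case.} Purify $\rho$ as $\ket{\Psi}$ on system plus reference.
\begin{itemize}
\item Apply $\sqrt{E}\otimes I$ and normalize. The acceptance probability is still $1-\epsilon$.
\item The pure-state bound therefore applies on the enlarged space.
\item Partial trace is contractive in trace norm. Tracing out the reference gives $\norm{\rho-\rho_k}_1\le 2\sqrt{\epsilon}$.
\end{itemize}

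\emph{Main obstacle.} The delicate step is the operator inequality $\sqrt{E}\succeq E$. It holds because $E$ and $\sqrt{E}$ commute and $t\ge t^2$ on $[0,1]$, so the inequality follows eigenvalue-wise from the spectral decomposition. This step is exactly where the choice of the square-root (Lüders) instrument matters.

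If the lemma is instead read for a projective measurement, $E$ is a projector and $\sqrt{E}=E$. The argument then goes through without this step.
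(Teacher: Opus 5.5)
Your proof is correct. The paper gives no proof of this lemma; it only cites Winter, so your argument supplies what the paper omits.

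What you give is the standard purification proof. Textbook treatments usually run it through fidelity: the purifications have fidelity at least $1-\epsilon$ because $\sqrt{E}\succeq E$, fidelity is monotone under partial trace, and Fuchs--van de Graaf converts this to trace distance. You instead use the exact pure-state identity $\|\,|\psi\rangle\langle\psi|-|\phi\rangle\langle\phi|\,\|_1=2\sqrt{1-|\langle\psi|\phi\rangle|^2}$ together with contractivity of the trace norm under partial trace. That is the same argument with the pure-state equality case of Fuchs--van de Graaf built in.

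Two points are worth recording.

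\begin{itemize}
\item Your argument gives the sharp constant $2\sqrt{\epsilon}$, which is the form the paper states. The bound is attained by $|\psi\rangle=\sqrt{1-\epsilon}\,|0\rangle+\sqrt{\epsilon}\,|1\rangle$ with $E=|0\rangle\langle 0|$. Winter's original lemma is usually quoted for the unnormalized operator $\sqrt{E}\rho\sqrt{E}$ with the weaker constant $\sqrt{8\epsilon}$. So the citation alone does not quite deliver the stated bound, while your proof does.
\item Your explicit choice of the L\"uders state $\rho_k=\sqrt{E}\rho\sqrt{E}/\operatorname{tr}(E\rho)$ is necessary, not just convenient. The lemma as written does not specify the instrument. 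For a general instrument with the same outcome statistics (e.g.\ L\"uders followed by a unitary) the conclusion is false, so your reading is the one under which the statement holds.
\end{itemize}
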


\section{Problem Formulation}\label{sec:problem}

In this section we formally define the approximate graph isomorphism problem and establish
the promise-problem variant that we study.

\subsection{Edit Distance and Approximate Isomorphism}

Recall from Definition~\ref{def:edit-distance} that $\mathrm{ed}(G,H)$ measures the minimum
number of edge edits needed to make $G$ isomorphic to~$H$.  As argued in
Section~\ref{sec:motivation}, the meaningful decision problem is one in which both graphs are
marginally $\mathcal{G}(n,1/2)$ but may be arbitrarily correlated; the algorithm must detect
the presence or absence of a latent planted permutation.

\begin{problem}[$(\varepsilon_1,\varepsilon_2)$-Approximate Graph Isomorphism]\label{prob:agi}
Given two graphs $G$ and $H$ on $n$ vertices with $0\le\varepsilon_1<\varepsilon_2\le 1$,
and quantum query access to $O_G$ and $O_H$ as in \eqref{eq:oracle}, distinguish:
\begin{itemize}
  \item \textbf{YES case:} $\bar{\ed}(G,H)\le\varepsilon_1$
    \quad (i.e., $\ed(G,H)\le\varepsilon_1\binom{n}{2}$);
  \item \textbf{NO case:} $\bar{\ed}(G,H)\ge\varepsilon_2$
    \quad (i.e., $\ed(G,H)\ge\varepsilon_2\binom{n}{2}$).
\end{itemize}
The algorithm must succeed with probability at least $2/3$ on every input satisfying the
promise.
\end{problem}

When $\varepsilon_1=0$, the YES case corresponds to exact isomorphism. The gap
$\varepsilon_2-\varepsilon_1$ is the \emph{distinguishing gap}. We focus on the natural
setting $\varepsilon_2=2\varepsilon_1$ (commonly denoted $\varepsilon_2=2\varepsilon$), which
is the standard gap regime for property testing problems.

\begin{remark}[Explicit correlation structure of the input distributions]
\label{rem:correlation}
To make the correlation structure completely explicit: in the YES case the joint distribution
of $(G,H)$ is
\[
  G\sim\mathcal{G}(n,\tfrac{1}{2}), 
  \pi\sim\mathrm{Uniform}(S_n) \text{ independent of }G, 
  (A_H)_{ij}=(A_G)_{\pi^{-1}(i)\pi^{-1}(j)}\oplus B_{ij},
  B_{ij}\overset{\mathrm{iid}}{\sim}\mathrm{Bern}(\varepsilon/2),
\]
so that marginally $H\sim\mathcal{G}(n,\tfrac{1}{2})$.  In the NO case $G$ and $H$ are
drawn \emph{independently} from $\mathcal{G}(n,\tfrac{1}{2})$, giving the same marginals but
zero correlation.  An algorithm that has access only to the marginals of $G$ and $H$
individually cannot distinguish the two cases; it must compare the two adjacency matrices
jointly.
\end{remark}

\begin{remark}
Problem~\ref{prob:agi} can also be parameterized by an absolute threshold $k$: the YES case
is $\ed(G,H)\le k$ and the NO case is $\ed(G,H)\ge 2k$. For sparse graphs with $m=\OO(n)$
edges, taking $k=\varepsilon m$ gives a natural parameterization. Our results apply to both
formulations.
\end{remark}

\subsection{Query Model}

We work in the \emph{adjacency matrix query model}: the algorithm receives $n$ and has
quantum query access to the oracles $O_G,O_H$ defined in \eqref{eq:oracle}. Each query
accesses one entry of $A_G$ or $A_H$. The query complexity is the total number of oracle
calls.

This model is standard in quantum property testing \cite{Ambainis2011,Goldreich2017}. For
graph problems in this model, the trivial upper bound is $\OO(n^2)$ queries (read the entire
adjacency matrix), and a classical lower bound of $\Omega(n^2)$ for many problems follows
from birthday-type arguments.

\subsection{Structural Lemmas}

We establish key structural properties of the edit distance that underpin the algorithm.

\begin{lemma}[Vertex defect bound]\label{lem:perm-recovery}
Let $G,H$ be graphs on $[n]$ with $\ed(G,H)\le k$, and let $\pi^*\in\Sym_n$ be an optimal
permutation achieving $\ed_{\pi^*}(G,H)=k$. Define the \emph{defect} of vertex $v$ under
$\pi^*$ as
\[
  d(v) \;=\; \bigl|\{u\in[n]\setminus\{v\}\colon
  (A_G)_{vu}\neq(A_H)_{\pi^*(v)\pi^*(u)}\}\bigr|.
\]
Then $d(v)<\sqrt{k}$ for all but at most $2\sqrt{k}$ vertices $v$.
\end{lemma}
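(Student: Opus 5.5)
This is a double-counting argument followed by Markov's inequality. I would first relate the sum of the defects to the edit distance under $\pi^*$. Each mismatched pair $\{u,v\}$, meaning one with $(A_G)_{uv}\neq(A_H)_{\pi^*(u)\pi^*(v)}$, is counted exactly twice in $\sum_v d(v)$: once in $d(u)$ and once in $d(v)$. This uses the symmetry of $A_G$ and $A_H$, so the mismatch condition does not depend on the order of the pair. Hence
\[
  \sum_{v\in[n]} d(v) \;=\; 2\,\ed_{\pi^*}(G,H) \;=\; 2k .
\]

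Second, I would apply Markov's inequality in counting form. Let $B=\{v : d(v)\ge\sqrt{k}\}$ be the set of bad vertices. Since defects are nonnegative,
\[
  |B|\sqrt{k}\;\le\;\sum_{v\in B} d(v)\;\le\;2k,
\]
so $|B|\le 2\sqrt{k}$ whenever $k>0$. Every vertex outside $B$ satisfies $d(v)<\sqrt{k}$, which is exactly the claim.

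The only step needing care is the degenerate case $k=0$. Then $\sqrt{k}=0$ and the strict inequality $d(v)<0$ cannot hold, so the statement as written would require the number of exceptional vertices to be at most $0$ while every vertex fails. I would handle this by reading the threshold as $d(v)\le\sqrt{k}$ in this case: when $k=0$, every defect is $0$ and there are no bad vertices. Alternatively, I would simply assume $k\ge 1$. With that caveat, the argument needs no earlier result from the paper, only the definition of $\ed_{\pi^*}$ and the double count. I do not expect any real obstacle beyond stating this edge case cleanly.
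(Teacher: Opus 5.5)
Your proof is correct and is essentially the paper's own argument: the double count $\sum_v d(v)=2k$ followed by Markov's inequality. Your $k=0$ caveat is a genuine gap in the statement that the paper's proof skips, since its bound $2k/\sqrt{k}$ is undefined there. Assuming $k\ge 1$, or using the non-strict threshold as you propose, is the right fix.
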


\begin{proof}
Each edge discrepancy $\{u,v\}$ with $(A_G)_{uv}\neq(A_H)_{\pi^*(u)\pi^*(v)}$ contributes
$1$ to both $d(u)$ and $d(v)$, so $\sum_{v\in[n]} d(v)=2k$.
By Markov's inequality, the number of vertices with $d(v)\ge\sqrt{k}$ is at most
$2k/\sqrt{k}=2\sqrt{k}$.
\end{proof}

We call vertices with $d(v)<\sqrt{k}$ \emph{low-defect} and let
$L=\{v\in[n]\colon d(v)<\sqrt{k}\}$, so $|L|\ge n-2\sqrt{k}$.

\begin{lemma}[Local consistency propagation]\label{lem:local-consistency}
Let $G,H$ be graphs on $[n]$ drawn from $\mathcal{G}(n,1/2)$ conditioned on
$\ed_\pi(G,H)\le k$ for a permutation $\pi\in\Sym_n$. Suppose the vertex-image pairs
$\{(v_1,\pi(v_1)),\ldots,(v_s,\pi(v_s))\}$ are known, where each $v_i$ is a low-defect
vertex drawn uniformly at random from $L$, and $s\ge C\log n$ for a sufficiently large
constant $C>0$. Then with probability $\ge 1-1/\poly(n)$ over the randomness of $G,H$
and the choice of seeds, the permutation $\pi$ is uniquely determined on all of $L$.
\end{lemma}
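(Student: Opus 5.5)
Given the seeds $S=\{v_1,\dots,v_s\}$ with known images $\pi(v_i)$, I will determine $\pi(v)$ for each $v\in L$ by comparing neighborhood signatures. For $v\in[n]$ define its seed signature $\sigma_G(v)=((A_G)_{v v_1},\ldots,(A_G)_{v v_s})\in\{0,1\}^s$. For $w\in[n]$ define the candidate signature $\sigma_H(w)=((A_H)_{w\pi(v_1)},\ldots,(A_H)_{w\pi(v_s)})$. The rule assigns to $v$ the unique $w$ minimizing the Hamming distance $\mathrm{dist}(\sigma_G(v),\sigma_H(w))$, and accepts the assignment only if this distance is below a threshold $\tau s$, with $\tau=1/4$.

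Uniqueness will follow from two estimates, each made uniform over all pairs $(v,w)$ by a union bound over $n^2$ pairs. Since $s\ge C\log n$, this costs a factor $n^2$ against an $e^{-\Omega(s)}$ tail.

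\textbf{Step 1: the correct image is close.} For $w=\pi(v)$, coordinate $i$ disagrees exactly when $\{v,v_i\}$ is a discrepant pair under $\pi$. Two sources of discrepancy must be controlled. The first comes from $v$ having small defect, $d(v)<\sqrt{k}$. Because the seeds are uniform in $L$, the expected number of seeds hitting one of $v$'s discrepant neighbors is at most $s\sqrt{k}/|L|$. I will need this ratio to be at most a small constant fraction of $s$, that is, $\sqrt{k}\le cn$. That holds whenever $k\le c^2 n^2$, i.e. in the constant-$\varepsilon$ regime with $\varepsilon$ small. For larger $\varepsilon$ I would instead use the planted model directly: each coordinate then disagrees independently with probability $\varepsilon/2<1/2$. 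Hoeffding (Lemma~\ref{lem:hoeffding}) gives a disagreement count of at most $(\varepsilon/2+\eta)s$ with probability $1-e^{-\Omega(s)}$.

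\textbf{Step 2: wrong candidates are far.} For $w\neq\pi(v)$, I write $w=\pi(u)$ with $u\ne v$. Coordinate $i$ compares $(A_G)_{vv_i}$ with $(A_H)_{\pi(u)\pi(v_i)}$, which is $(A_G)_{uv_i}$ up to discrepancies. Under $\mathcal{G}(n,1/2)$, with $u,v\notin S$, the entries $(A_G)_{vv_i}$ and $(A_G)_{uv_i}$ are independent fair bits. So the disagreement count is $\mathrm{Bin}(s,1/2)$, perturbed by at most the discrepancies of $u$ and $v$ that hit seeds. Hoeffding then gives a distance of at least $(1/2-\eta)s$ with probability $1-e^{-\Omega(s)}$. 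Seeds $v$ themselves are handled trivially, since their images are known.

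\textbf{Main obstacle: conditioning.} The difficulty is that $G,H$ are conditioned on $\ed_\pi\le k$, and the seeds and $L$ depend on the graph, which destroys independence of the bits. I will try first to argue in the planted model of Remark~\ref{rem:correlation}, where $G$ and the flips $B$ are independent of each other. There I will first fix the seed set using randomness independent of the entries in the columns indexed by $u,v$. To handle the dependence of $L$ on $G$, I will expose only the seed-incident entries, and charge the at most $2\sqrt{k}$ high-defect exclusions to the error term. If $C$ is large, the union bound over the $n^2$ pairs leaves failure probability $n^{-\Omega(1)}$. This gives the claimed $1-1/\poly(n)$, provided $\varepsilon$ is bounded below $1$ so that the gap between $\varepsilon/2+\eta$ and $1/2-\eta$ is a positive constant.
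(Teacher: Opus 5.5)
Your skeleton matches the paper's proof: seed signatures, the correct image is close, a wrong image $w=\pi(u)$ is far because $(A_G)_{vv_i}$ and $(A_G)_{uv_i}$ are independent fair bits, then Chernoff plus a union bound over the $n^2$ pairs. Your Step~1, however, is genuinely different, and it is the version that actually supports $s=\OO(\log n)$.

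The paper bounds the weight of $\sigma_{v,\pi(v)}$ deterministically by $d(v)<\sqrt{k}$ and then asserts $\sqrt{k}<s/4$. In the regime $k=\varepsilon\binom{n}{2}$ where the lemma is used, this would require $s=\Omega(\sqrt{\varepsilon}\,n)$ seeds. You instead use that the seeds are uniform on $L$. For independent draws, the number of disagreeing coordinates for the correct image is then $\mathrm{Bin}(s,p)$ with $p\le d(v)/|L|<\sqrt{k}/(n-2\sqrt{k})$. So argmin decoding succeeds with an $e^{-\Omega(s)}$ tail as long as $p$ stays below $1/2$ by a margin, i.e.\ roughly $\sqrt{k}<n/4$ (about $\varepsilon<1/8$). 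You are right that the statement needs some such smallness hypothesis on $k$, which it omits.

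Your ``main obstacle'' is lighter than you fear. The defects, hence $L$ and the law of the seeds, depend only on the discrepancy pattern $X_{uv}=(A_G)_{uv}\oplus(A_H)_{\pi(u)\pi(v)}$. This pattern is independent of $G$ in both readings of the model:
\begin{itemize}
\item in the planted model, $X_{uv}=B_{\pi(u)\pi(v)}$;
\item for independent $G,H$ conditioned on $\ed_\pi(G,H)\le k$, the map $(G,H)\mapsto(G,X)$ is a bijection for fixed $\pi$, so $G$ and $X$ are independent, and the conditioning involves only $X$.
\end{itemize}
Hence, given $X$ and the seeds, the wrong-candidate bits $(A_G)_{vv_i}\oplus(A_G)_{uv_i}\oplus X_{uv_i}$, over distinct seeds outside $\{u,v\}$, are exactly i.i.d.\ fair. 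Repeated seeds and seeds equal to $u$ or $v$ affect only $\OO(1)$ coordinates w.h.p.

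Use this fact rather than ``$\mathrm{Bin}(s,1/2)$ perturbed by the discrepancies of $u$ and $v$.'' Since $u=\pi^{-1}(w)$ ranges over all vertices, including high-defect ones, that perturbation is not controlled in general, whereas the exact XOR argument needs no bound on $d(u)$. Also drop the plan to charge the $2\sqrt{k}$ high-defect exclusions to the error term. It would fail as stated because $2\sqrt{k}\gg s$, and with the independence observation above it is unnecessary.
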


\begin{proof}
Consider a low-defect vertex $v\in L\setminus\{v_1,\ldots,v_s\}$. Define the
\emph{signature} of a candidate pairing $(v,w)$ relative to the known seeds:
\[
  \sigma_{v,w}=\bigl((A_G)_{v,v_i}\oplus(A_H)_{w,\pi(v_i)}\bigr)_{i=1}^{s}.
\]
For the correct image $w=\pi(v)$, the Hamming weight of $\sigma_{v,\pi(v)}$ is exactly
$|\{i\colon (A_G)_{v,v_i}\neq(A_H)_{\pi(v),\pi(v_i)}\}|\le d(v)<\sqrt{k}$, since each
mismatch corresponds to a defect edge incident to $v$.

For an incorrect candidate $w'\neq\pi(v)$, we bound the expected Hamming weight from below.
Set $u'=\pi^{-1}(w')\neq v$. The $i$-th bit of $\sigma_{v,w'}$ is
$(A_G)_{v,v_i}\oplus(A_H)_{w',\pi(v_i)}$, and under the $\mathcal{G}(n,1/2)$ model,
$(A_H)_{w',\pi(v_i)}=(A_G)_{u',v_i}\oplus B_{w',\pi(v_i)}$ where $B$ is the noise flip.
Since $G\sim\mathcal{G}(n,1/2)$ and distinct rows of the adjacency matrix are independent,
the entries $(A_G)_{v,v_i}$ and $(A_G)_{u',v_i}$ are \emph{independent}
Bernoulli$(1/2)$ random variables (using $v\neq u'$). Therefore each bit of $\sigma_{v,w'}$
is Bernoulli$(1/2\pm\varepsilon/2)$, and the bits are independent across distinct seeds $v_i$.
The expected Hamming weight is $\ge s(1/2-\varepsilon/2)\ge s/3$ for $\varepsilon<1/3$.

By a Chernoff bound, $\Pr[|\sigma_{v,w'}|\le s/4]\le e^{-\Omega(s)}$. We take a union bound
over all $n^2$ pairs $(v,w')$ with $v\in L$ and $w'\in[n]\setminus\{\pi(v)\}$:
\[
  \Pr[\exists v\in L,\;\exists w'\neq\pi(v)\colon |\sigma_{v,w'}|\le s/4]
  \;\le\; n^2\cdot e^{-\Omega(s)} \;\le\; \frac{1}{\poly(n)}
\]
when $s\ge C\log n$ for $C$ large enough. On this event, for every $v\in L$, the correct
image $\pi(v)$ is the unique $w$ with $|\sigma_{v,w}|<\sqrt{k}<s/4$ (the latter holding
for $s$ large enough relative to $k$). Hence $\pi$ is uniquely determined on $L$.
\end{proof}

\begin{corollary}[Seed count for reconstruction]\label{cor:seeds}
Under the hypotheses of Lemma~\ref{lem:local-consistency}, to reconstruct
$\pi$ on all low-defect vertices, it suffices to obtain
$\OO(\log n)$ \emph{correct} seed pairs $(v,\pi(v))$ where each $v\in L$ is low-defect.
In our algorithm, these correct pairs are produced by Phase~1 (MNRS quantum walk search,
Section~\ref{sec:phase1}), which returns matching vertices with high probability;
incorrect candidates are filtered out by the marking oracle
(Definition~\ref{def:marking}). The remaining $\le 2\sqrt{k}$ high-defect vertices
must be resolved separately (see Phase~2, Section~\ref{sec:phase2}).
\end{corollary}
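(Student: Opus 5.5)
The plan is to obtain Corollary~\ref{cor:seeds} directly from Lemma~\ref{lem:local-consistency} by instantiating $s = C\log n$. Three things need checking: the hypotheses of that lemma hold for the seeds we are handed, the conclusion ("$\pi$ uniquely determined on $L$") is effective, and the residual set $[n]\setminus L$ has the claimed size.

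\textbf{Step 1: identifiability from $s=C\log n$ seeds.} Fix $s=\lceil C\log n\rceil$ with $C$ the constant from Lemma~\ref{lem:local-consistency}. Suppose we are given $s$ correct pairs $(v_i,\pi(v_i))$ with each $v_i$ uniform in $L$. Then, with probability $1-1/\poly(n)$, every $v\in L$ has a unique $w$ with $|\sigma_{v,w}|<s/4$, namely $w=\pi(v)$.

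\textbf{Step 2: the reconstruction procedure.} For each $v$, compute $\sigma_{v,w}$ for all $w\in[n]$. This costs $2s$ queries per pair, so $\OO(n^2\log n)$ classically; the point here is only sufficiency, not query cost. Assign $v\mapsto w$ whenever exactly one $w$ falls below the threshold $s/4$. On the good event of Step 1, this recovers $\pi|_L$ exactly.

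\textbf{Step 3: the residual set.} Lemma~\ref{lem:perm-recovery} gives $|[n]\setminus L|\le 2\sqrt{k}$, so at most $2\sqrt{k}$ high-defect vertices remain and are deferred to Phase~2. The remarks about Phase~1 and the marking oracle are forward references to the algorithm, not mathematical claims. I would state them as the hypothesis "provided Phase~1 outputs $\OO(\log n)$ correct low-defect seeds" rather than prove them here.

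\textbf{Main obstacle: the threshold regime.} Lemma~\ref{lem:local-consistency} needs $\sqrt{k}<s/4$, which with $s=\Theta(\log n)$ forces $k=\OO(\log^2 n)$. In the regime of interest, $k=\varepsilon\binom{n}{2}$, so this fails.

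The first fix I would try is to replace the threshold by one that scales with $s$. Under the planted model, for the correct image each signature bit is an independent $\mathrm{Bern}(\varepsilon/2)$ noise bit, so the expected weight is $s\varepsilon/2$. For an incorrect image the expected weight is about $s/2$. A Chernoff bound at threshold $s(\varepsilon/2+1/2)/2$ separates the two with failure probability $e^{-\Omega(s(1-\varepsilon)^2)}$. A union bound over $n^2$ pairs then works with $s=\OO(\log n/(1-\varepsilon)^2)=\OO(\log n)$ for constant $\varepsilon<1$. This argument uses the random noise model, not the worst-case defect bound.

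This requires reinterpreting "low-defect" as typical under the noise model. With that reading, every vertex is recovered w.h.p., since each vertex's per-seed mismatch bits are i.i.d. $\mathrm{Bern}(\varepsilon/2)$. I expect this step to be where care is needed. Seeds drawn from $L$ (a set depending on $B$) slightly condition the noise bits; I would handle this by taking seeds uniform from $[n]$ instead, which keeps independence intact.
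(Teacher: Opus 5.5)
Your Steps 1--3 match how the paper gets this corollary: it has no separate proof, and is read off from Lemma~\ref{lem:local-consistency} with $s=C\log n$ plus the Markov count $|[n]\setminus L|\le 2\sqrt{k}$ of Lemma~\ref{lem:perm-recovery}. Your diagnosis of that route is correct, and it is the crux. The lemma needs $|\sigma_{v,\pi(v)}|\le d(v)<\sqrt{k}<s/4$. For $k=\varepsilon\binom{n}{2}$ we have $\sqrt{k}=\Theta(n\sqrt{\varepsilon})\gg s$, so the bound on the correct signature is vacuous (its weight is at most $s$ anyway). As written, the paper's argument needs $s=\Omega(\sqrt{k})$ seeds, not $\OO(\log n)$.

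Your repair is a genuinely different key step, and it works. At threshold $s(1+\varepsilon)/4$, the correct-image bits are the noise bits $B_{\pi(v)\pi(v_i)}$, i.i.d.\ $\mathrm{Bern}(\varepsilon/2)$. The wrong-image bits are exactly i.i.d.\ $\mathrm{Bern}(1/2)$: each contains the entry $(A_G)_{v,v_i}$, which occurs nowhere else in $\sigma_{v,w'}$ and is independent of $B$. Hoeffding plus a union bound over $n^2$ pairs then gives $s=\OO(\log n/(1-\varepsilon)^2)$. Your seed-conditioning worry is moot: in the planted model $d(v)\approx\varepsilon n/2<\sqrt{k}$ for every $v$, so $L=[n]$ w.h.p. and the high-defect set is empty.

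Your route covers any constant $\varepsilon<1$ but leans on i.i.d.\ noise. You can instead keep the lemma's own hypotheses. Condition on the discrepancy pattern $B$, which is independent of $G$ there too, so $L$ is a fixed set. Then use the randomness of the seeds, which the paper assumes but never uses. A seed uniform in $L$ is a defect neighbour of $v$ with probability at most $d(v)/(|L|-1)\lesssim\sqrt{\varepsilon/2}/(1-\sqrt{2\varepsilon})$. This is $<1/2$ for constant $\varepsilon<1/8$, so the same midpoint argument recovers $\pi$ on $L$ for an arbitrary noise pattern.

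Treating the Phase~1 and marking-oracle sentences as assumptions is also the right call. The paper gives no argument for them here, and Lemma~\ref{lem:marking-correct} cannot supply one. By its own computation, a matching vertex has consistency rate only $\tfrac12+\OO(1/n)$, below the marking threshold $\tfrac12+\varepsilon/4$.
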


\section{Product Graph Construction and Spectral Analysis}\label{sec:product}

The central data structure in our algorithm is the \emph{product graph} formed from the two
input graphs. In this section, we define the product graph, analyze its spectral properties,
and show how approximate isomorphisms manifest as structured clusters that can be detected
via quantum walk search.

\subsection{The Product Graph}\label{sec:product-def}

\begin{definition}[Product graph]\label{def:product-graph}
Given graphs $G$ and $H$ on $[n]$, define the \emph{compatibility product graph}
$\Gamma=\Gamma(G,H)$ with:
\begin{itemize}
  \item \emph{Vertex set:} $V(\Gamma)=[n]\times[n]$, where vertex $(i,j)$ represents the
    candidate pairing of vertex $i\in V(G)$ with vertex $j\in V(H)$.
  \item \emph{Edge set:} $\{(i_1,j_1),(i_2,j_2)\}\in E(\Gamma)$ if and only if:
    \begin{enumerate}[(a)]
      \item $i_1\neq i_2$ and $j_1\neq j_2$ (the pairing is injective), and
      \item $(A_G)_{i_1 i_2}=(A_H)_{j_1 j_2}$ (the pairing is edge-consistent).
    \end{enumerate}
\end{itemize}
The product graph $\Gamma$ has $N=n^2$ vertices.
\end{definition}

\begin{remark}
This construction is closely related to the \emph{association graph} used in classical
maximum common subgraph algorithms \cite{Raymond2002}. A clique of size $n$ in $\Gamma$
that uses each first coordinate and each second coordinate exactly once corresponds to an
exact isomorphism between $G$ and $H$.
\end{remark}

\begin{definition}[Matching set]\label{def:approx-clique}
For a permutation $\pi\in\Sym_n$, define the \emph{matching set}
\[
  M_\pi = \{(i,\pi(i))\colon i\in[n]\}\;\subseteq\; V(\Gamma).
\]
If $\ed_\pi(G,H)=k$, then $M_\pi$ induces a subgraph with exactly $\binom{n}{2}-k$
internal edges (a near-clique missing $k$ edges out of $\binom{n}{2}$).
\end{definition}

\subsection{Degree Analysis of the Product Graph}

\begin{lemma}[Degree of matching vertices]\label{lem:deg-matching}
Let $\pi\in\Sym_n$ with $\ed_\pi(G,H)=k$. For vertex $(i,\pi(i))\in M_\pi$, its degree
within $\Gamma[M_\pi]$ (the subgraph induced by $M_\pi$) is exactly $n-1-d(i)$, where
$d(i)$ is the defect of vertex $i$ (Lemma~\ref{lem:perm-recovery}).

For low-defect vertices ($d(i)<\sqrt{k}$, of which there are $\ge n-2\sqrt{k}$),
the internal degree is at least $n-1-\sqrt{k}$.
\end{lemma}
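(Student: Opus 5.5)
The plan is to count neighbours of $(i,\pi(i))$ inside $M_\pi$ straight from Definition~\ref{def:product-graph}. The other elements of $M_\pi$ are $(u,\pi(u))$ with $u\in[n]\setminus\{i\}$, so there are $n-1$ candidates, one for each $u\neq i$.

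First I will check condition (a) for each candidate. Since $u\neq i$ and $\pi$ is a bijection, $\pi(u)\neq\pi(i)$, so injectivity always holds. Each candidate is therefore adjacent to $(i,\pi(i))$ if and only if condition (b) holds, namely $(A_G)_{iu}=(A_H)_{\pi(i)\pi(u)}$. The non-neighbours are then exactly the $u\neq i$ with $(A_G)_{iu}\neq(A_H)_{\pi(i)\pi(u)}$. By the definition in Lemma~\ref{lem:perm-recovery}, with $\pi$ in place of $\pi^*$, there are precisely $d(i)$ of these. This gives internal degree exactly $n-1-d(i)$.

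For the second claim I will invoke Lemma~\ref{lem:perm-recovery}. At most $2\sqrt{k}$ vertices have $d(i)\ge\sqrt{k}$, so at least $n-2\sqrt{k}$ vertices are low-defect. For each of these, $d(i)<\sqrt{k}$, so the internal degree $n-1-d(i)$ exceeds $n-1-\sqrt{k}$.

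The one point needing care is that Lemma~\ref{lem:perm-recovery} was stated for an optimal $\pi^*$, while here $\pi$ is arbitrary with $\ed_\pi(G,H)=k$. Its proof only uses $\sum_v d(v)=2\,\ed_\pi(G,H)=2k$ together with Markov's inequality. I will note that this argument applies verbatim to any $\pi$. Apart from this, the argument is direct and I expect no real obstacle.
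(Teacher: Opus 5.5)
Your proposal is correct and follows the paper's proof. Both arguments count the $n-1$ candidates $(u,\pi(u))$ with $u\neq i$ and observe that exactly $d(i)$ of them fail edge-consistency, giving internal degree $n-1-d(i)$. You also fill in two points the paper leaves implicit: that injectivity holds automatically, and that the Markov argument of Lemma~\ref{lem:perm-recovery} applies to any $\pi$ with $\ed_\pi(G,H)=k$, not only an optimal one.
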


\begin{proof}
By Definition~\ref{def:product-graph}, $(i,\pi(i))$ is adjacent to $(j,\pi(j))$ in
$\Gamma$ if and only if $i\neq j$ and $(A_G)_{ij}=(A_H)_{\pi(i)\pi(j)}$. The number of
$j\neq i$ violating consistency is exactly $d(i)$, so the internal degree is $(n-1)-d(i)$.
\end{proof}

\begin{lemma}[Total degree of matching vs.\ non-matching vertices]\label{lem:deg-nonmatching}
For $G,H\sim\mathcal{G}(n,1/2)$ conditioned on $\ed_\pi(G,H)=k$:
\begin{enumerate}[(a)]
  \item A low-defect matching vertex $(i,\pi(i))$ has total degree
    $\deg_\Gamma(i,\pi(i))\ge n-1-\sqrt{k}$ within $M_\pi$ alone. Including edges to
    non-matching vertices, $\deg_\Gamma(i,\pi(i))=\Theta(n^2)$.
  \item A non-matching vertex $(i,j)$ with $j\neq\pi(i)$ has expected total degree
    $\EE[\deg_\Gamma(i,j)]=(n-1)^2/2\pm\OO(n)$.
\end{enumerate}
\end{lemma}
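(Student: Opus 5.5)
The plan is to count the neighbours of a vertex $(i,j)$ in $\Gamma$ directly from Definition~\ref{def:product-graph}. Condition (a) means the candidate neighbours are exactly the pairs $(i',j')$ with $i'\neq i$ and $j'\neq j$. There are $(n-1)^2$ such pairs, so
\[
  \deg_\Gamma(i,j)=\sum_{i'\neq i}\sum_{j'\neq j}\mathbf{1}\bigl[(A_G)_{ii'}=(A_H)_{jj'}\bigr].
\]
Both parts then reduce to estimating this sum, term by term or in expectation.

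\textbf{Part (a).} The first claim is immediate from Lemma~\ref{lem:deg-matching}. The internal degree in $\Gamma[M_\pi]$ is $n-1-d(i)$, and for a low-defect vertex $d(i)<\sqrt{k}$. Since $\Gamma[M_\pi]$ is a subgraph, the total degree is at least $n-1-\sqrt{k}$.

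For the $\Theta(n^2)$ claim, the upper bound is the trivial $(n-1)^2$. For the lower bound, fix the row $i$ of $A_G$ and the row $j=\pi(i)$ of $A_H$. Let $a$ be the number of ones in the row $(A_G)_{i,\cdot}$ restricted to $i'\neq i$, and let $b$ be the corresponding count for $(A_H)_{j,\cdot}$. The double sum factorizes as
\[
  \deg_\Gamma(i,j)=ab+(n-1-a)(n-1-b).
\]
This quantity is at least $(n-1)^2/2$ whenever $a$ and $b$ lie on the same side of $(n-1)/2$. In any case it is at least $\min(a,n-1-a)\cdot\min(b,n-1-b)$.

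Under $\mathcal{G}(n,1/2)$, Hoeffding's inequality (Lemma~\ref{lem:hoeffding}) gives $a,b\in[(n-1)/4,3(n-1)/4]$ with probability $1-e^{-\Omega(n)}$. On that event the degree is at least $(n-1)^2/16$.

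The conditioning on $\ed_\pi(G,H)=k$ is the step I expect to be the main obstacle, since conditioning can distort row statistics. I would handle it without reference to the distribution of $b$: row $j$ of $A_H$ agrees with the $\pi$-image of row $i$ of $A_G$ except in $d(i)<\sqrt{k}$ positions, so $|a-b|\le\sqrt{k}$. Then
\[
  ab+(n-1-a)(n-1-b)\ \ge\ \tfrac{(n-1)^2}{2}-\OO\bigl(\sqrt{k}\,n\bigr),
\]
by writing $a=(n-1)/2+x$ and $b=(n-1)/2+y$, which gives exactly $(n-1)^2/2+2xy$ with $|x-y|\le\sqrt{k}$. Since $2xy\ge-(x-y)^2/2$, we get $\ge(n-1)^2/2-k/2$. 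This is $\Theta(n^2)$ for $k\le c n^2$ with $c<1$, deterministically, so no concentration is needed at all. I would present this cleaner deterministic bound.

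\textbf{Part (b).} Use the same factorization $\deg=(n-1)^2/2+2xy$. It then suffices to show $\EE[xy]=\OO(n)$ for $j\neq\pi(i)$.

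Row $j$ of $A_H$ is the noisy $\pi$-image of row $u'=\pi^{-1}(j)\neq i$ of $A_G$. The rows $i$ and $u'$ of $G$ share only the single entry $(A_G)_{iu'}$, so the centered row sums are nearly independent. Concretely, $\EE[xy]$ picks up covariance only from that shared entry and from the $\OO(1)$ coordinates where indices collide ($i'=u'$, $j'=\pi(i)$). This gives $\EE[xy]=\OO(1)\cdot\OO(\sqrt n)\le\OO(n)$, or more crudely $|\EE[xy]|\le\sqrt{\EE x^2\,\EE y^2}=\OO(n)$ by Cauchy--Schwarz, since $\EE x^2,\EE y^2=\OO(n)$ for sums of $n-1$ centered bounded bits.

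The Cauchy--Schwarz route is the robust choice, because it works even under the conditioning provided the row sums retain $\OO(n)$ variance. Verifying that variance bound under conditioning on $\ed_\pi=k$ is the one place where some care is required, and I would argue it via the flips being at most $k$ in total.
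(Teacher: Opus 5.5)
Your proposal is correct, and it takes a genuinely different and sharper route than the paper. The paper argues both parts by linearity of expectation: each of the $(n-1)^2$ eligible pairs $(i',j')$ is declared consistent with probability $\approx 1/2$ because the relevant entries are ``essentially independent.'' This gives $\EE[\deg_\Gamma]=(n-1)^2/2\pm\OO(n)$ for matching and non-matching vertices alike. In particular, the $\Theta(n^2)$ claim in (a) is only supported there in expectation.

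Your identity $\deg_\Gamma(i,j)=ab+(n-1-a)(n-1-b)=(n-1)^2/2+2xy$, with $a=\deg_G(i)$ and $b=\deg_H(j)$, is exact, and it buys three things.
\begin{itemize}
\item It turns (a) into the bound $\deg_\Gamma(i,\pi(i))\ge(n-1)^2/2-d(i)^2/2$, which holds for arbitrary $G,H$ with no probability at all. Since $d(i)^2<k\le\tbinom{n}{2}$, this is always at least $(n-1)(n-2)/4$, so your caveat on $k$ is automatic.
\item It reduces (b) to showing $|\EE[xy]|=\OO(n)$.
\item It exposes what the paper's heuristic hides: the $(n-1)^2$ consistency indicators are functions of only $2(n-1)$ bits. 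So the degree fluctuates on scale $\Theta(n)$, and its lower tail is $e^{-\Theta(n)}$ rather than $e^{-\Omega(n^2)}$ (take $a$ near $n-1$ and $b$ near $0$). This matters because the proof of Theorem~\ref{thm:spectral-gap} treats these indicators as independent and claims an $e^{-\Omega(n^2)}$ tail.
\end{itemize}

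One loose end in (b): do not try to get the variance bound by counting flips. A single row can carry up to $n-1$ flips. If those flips were allowed to depend on $G$, they could make $\EE[xy]$ of order $n^{3/2}$, so the bound ``at most $k$ flips'' alone cannot give $\OO(n)$. What saves you is independence. Under the conditioning (independent $G,H$ conditioned on $\ed_\pi(G,H)=k$, equivalently the planted model conditioned on exactly $k$ flips), $G$ is uniform and the flip set $B$ is a uniform $k$-subset independent of $G$. Hence $H=\pi(G\oplus B)$ is itself exactly $\mathcal{G}(n,1/2)$, so $\EE x^2=\EE y^2=(n-1)/4$. Cauchy--Schwarz then gives $\EE[\deg_\Gamma(i,j)]=(n-1)^2/2\pm(n-1)/2$ for every $(i,j)$, so the hypothesis $j\neq\pi(i)$ is not even needed. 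A direct computation is sharper still. For $j\neq\pi(i)$ only the shared entry $(A_G)_{i\pi^{-1}(j)}$ contributes, giving $\EE[xy]=\bigl(1-2k/\tbinom{n}{2}\bigr)/4=\OO(1)$ rather than your $\OO(\sqrt{n})$.
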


\begin{proof}
Part (a): The internal degree follows from Lemma~\ref{lem:deg-matching}. For the total
degree, $(i,\pi(i))$ is adjacent to $(i',j')$ whenever $i'\neq i$, $j'\neq\pi(i)$, and
$(A_G)_{ii'}=(A_H)_{\pi(i),j'}$. The entries $(A_H)_{\pi(i),j'}$ for $j'\neq\pi(i)$ are
essentially independent Bernoulli$(1/2)$ (since $G\sim\mathcal{G}(n,1/2)$), so consistency
occurs with probability $\approx 1/2$. With $(n-1)(n-1)$ eligible pairs, the expected total
degree is $(n-1)^2/2\pm\OO(n)$.

Part (b): For $(i,j)$ with $j\neq\pi(i)$, adjacent to $(i',j')$ requires $(A_G)_{ii'}=
(A_H)_{jj'}$. Since $G,H$ are random and $j\neq\pi(i)$, these entries are nearly independent,
giving consistency probability $\approx 1/2$. Total eligible pairs: $(n-1)^2$, giving
expected degree $(n-1)^2/2\pm\OO(n)$.
\end{proof}

\begin{remark}\label{rem:degree-contrast}
Crucially, matching and non-matching vertices have comparable \emph{total} degrees
($\Theta(n^2)$). The structural difference lies in the \emph{internal} connectivity of
$M_\pi$: matching vertices form a near-clique, while no set of $n$ non-matching vertices
does. Our algorithm exploits this via the marking oracle in Section~\ref{sec:main}, not
through degree-based stationary distribution concentration.
\end{remark}

\subsection{Random Walk on the Product Graph}\label{sec:random-walk}

We define a random walk $P$ on $\Gamma$ whose quantization underlies our algorithm.

\begin{definition}[Product graph walk]\label{def:product-walk}
Define the Markov chain $P$ on $V(\Gamma)=[n]\times[n]$ as follows. From state $(i,j)$:
\begin{enumerate}
  \item With probability $1/2$, stay at $(i,j)$ (lazy step).
  \item With probability $1/2$, choose $i'\in[n]\setminus\{i\}$ and $j'\in[n]\setminus\{j\}$
    uniformly at random. If $(A_G)_{ii'}=(A_H)_{jj'}$, move to $(i',j')$. Otherwise, stay
    at $(i,j)$.
\end{enumerate}
\end{definition}

\begin{lemma}[Reversibility and stationary distribution]\label{lem:stationary}
The walk $P$ is ergodic, aperiodic, and reversible with respect to the uniform
distribution $\mu(i,j)=1/n^2$ for all $(i,j)\in[n]\times[n]$.
\end{lemma}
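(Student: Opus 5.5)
The plan is to read off the transition kernel of Definition~\ref{def:product-walk} explicitly and observe that it is symmetric. For $(i,j)\neq(i',j')$ with $i'\neq i$ and $j'\neq j$, the kernel is
\[
  P\bigl((i,j),(i',j')\bigr)=\frac{1}{2(n-1)^2}\,\mathbf{1}\bigl[(A_G)_{ii'}=(A_H)_{jj'}\bigr],
\]
and $P\bigl((i,j),(i',j')\bigr)=0$ if $i'=i$ or $j'=j$ (but not both). All leftover mass (at least $1/2$) sits on the diagonal. Since $A_G$ and $A_H$ are symmetric, the indicator is invariant under swapping $(i,j)\leftrightarrow(i',j')$, so $P=P^{\top}$. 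A symmetric stochastic matrix is doubly stochastic, so the uniform $\mu(i,j)=1/n^2$ satisfies $\mu P=\mu$. The detailed balance equations $\mu(x)P(x,y)=\mu(y)P(y,x)$ then reduce to $P(x,y)=P(y,x)$, which gives reversibility. Aperiodicity is immediate from laziness: $P(x,x)\ge 1/2>0$ for every state.

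The remaining property, ergodicity, amounts to irreducibility, that is, connectivity of $\Gamma(G,H)$, since $P$ moves exactly along edges of $\Gamma$. This is the main obstacle, because it is \emph{false} for arbitrary inputs. For example, if $G$ is empty and $H$ complete, then $\Gamma$ has no edges. So the statement must be read as holding with probability $1-o(1)$ (indeed $1-e^{-\Omega(n)}$) over the input distributions of Remark~\ref{rem:correlation}, and I would state it that way.

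To prove connectivity I would show that $\Gamma$ has diameter at most $2$ w.h.p., by exhibiting a common neighbour $(a,b)$ for any two distinct vertices $(i,j),(i',j')$. This requires $a\notin\{i,i'\}$, $b\notin\{j,j'\}$, $(A_G)_{ia}=(A_H)_{jb}$ and $(A_G)_{i'a}=(A_H)_{j'b}$. Fix any admissible $a$; the target bit pattern $\bigl((A_G)_{ia},(A_G)_{i'a}\bigr)$ is then determined. The argument splits into cases.
\begin{itemize}
  \item If $j\neq j'$: for each of the $n-2$ choices of $b$, the pair $\bigl((A_H)_{jb},(A_H)_{j'b}\bigr)$ involves distinct edge slots of $H$. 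In the NO case these are independent uniform bits, independent of $G$, so no $b$ matches with probability $(3/4)^{n-2}$.
  \item If $j=j'$ (so $i\neq i'$): first choose $a$ with $(A_G)_{ia}=(A_G)_{i'a}$, which exists except with probability $2^{-(n-2)}$, and then choose any $b$ with $(A_H)_{jb}$ equal to that common bit.
\end{itemize}
A union bound over the at most $n^4$ pairs of vertices finishes the NO case.

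The YES case is the delicate point, since $H$ is correlated with $G$. Here I would use the noise bits $B$: conditioned on $G$ and $\pi$, the entries $(A_H)_{jb}$ are independent across distinct slots, each equal to a fixed bit flipped with probability $\varepsilon/2$. I would first try choosing $b$ (and $a$) among the $\Theta(n)$ candidates so that the required pattern agrees with the noiseless values. Because the relevant rows of $G$ are independent fair bits over distinct indices, each candidate succeeds with constant probability, independently over disjoint index sets, so failure again has probability $e^{-\Omega(n)}$ even when $\varepsilon=0$. If the disjointness bookkeeping between rows $i,i'$ of $G$ and rows $\pi^{-1}(j),\pi^{-1}(j')$ becomes messy, I would restrict $a,b$ to a set avoiding the $O(1)$ coincident indices and accept constants. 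The conclusion is that, with high probability, $P$ is irreducible, aperiodic and reversible with uniform stationary distribution.
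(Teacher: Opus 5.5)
Your handling of reversibility and aperiodicity coincides with the paper's. The kernel is symmetric because both the indicator $\mathbf{1}[(A_G)_{ii'}=(A_H)_{jj'}]$ and the constraint $i'\neq i$, $j'\neq j$ are invariant under swapping endpoints. Hence detailed balance holds for the uniform $\mu$, and the $1/2$ self-loop gives aperiodicity.

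The genuine difference is irreducibility. The paper only asserts that $\Gamma$ is connected w.h.p.\ ``since $G$ and $H$ have no isolated vertices''. That implication is false. Take $G=K_n$ and $H$ a perfect matching $m$: neither graph has isolated vertices, yet every edge of $\Gamma$ joins second coordinates $j$ and $m(j)$, so $\Gamma$ splits into $n/2$ components.

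Your diameter-two argument actually establishes the w.h.p.\ connectivity the paper needs. It finds a common neighbour for every pair with failure probability $e^{-\Omega(n)}$, then takes a union bound over at most $n^4$ pairs. It also buys two further things:
\begin{enumerate}
\item It makes explicit that the lemma can only hold with high probability (your empty-versus-complete example).
\item It covers the correlated YES distribution, which the paper never addresses.
\end{enumerate}
The cost is a short case analysis. The paper's version is shorter, but it is unjustified at exactly the step you flagged as the main obstacle.

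Your YES-case sketch can be closed without disjointness bookkeeping by using the diagonal candidates $b=\pi(a)$ with $a\notin\{i,i',\pi^{-1}(j),\pi^{-1}(j')\}$. The two adjacency conditions become $(A_G)_{ia}\oplus(A_G)_{\pi^{-1}(j)\,a}=B_{j\,\pi(a)}$ and $(A_G)_{i'a}\oplus(A_G)_{\pi^{-1}(j')\,a}=B_{j'\,\pi(a)}$. For $\varepsilon\le 1$ they hold jointly with probability at least $1/4$. This includes the degenerate cases $i=\pi^{-1}(j)$ and $\{i,\pi^{-1}(j)\}=\{i',\pi^{-1}(j')\}$, where the noise bits do the work. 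Distinct $a$ use disjoint slots of both $A_G$ and $B$. So no common neighbour exists with probability at most $(3/4)^{n-4}$, and the union bound finishes the YES case exactly as in your NO case.
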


\begin{proof}
\emph{Aperiodicity} follows from the $1/2$ self-loop probability.
\emph{Irreducibility:} for $G\sim\mathcal{G}(n,1/2)$, w.h.p.\ $\Gamma$ is connected
(since $G$ has no isolated vertices and $H$ has no isolated vertices, every pair of product
vertices can be connected via intermediate states).

\emph{Reversibility:} For any $(i_1,j_1)\neq(i_2,j_2)$ with $i_1\neq i_2$ and
$j_1\neq j_2$, the transition probability from $(i_1,j_1)$ to $(i_2,j_2)$ is
\[
  P((i_1,j_1),(i_2,j_2)) =
  \begin{cases}
    \frac{1}{2}\cdot\frac{1}{(n-1)^2} & \text{if } (A_G)_{i_1 i_2}=(A_H)_{j_1 j_2},\\
    0 & \text{otherwise}.
  \end{cases}
\]
Since the condition $(A_G)_{i_1 i_2}=(A_H)_{j_1 j_2}$ is symmetric in the pair
$\{(i_1,j_1),(i_2,j_2)\}$, and the proposal probability $\frac{1}{2(n-1)^2}$ is symmetric,
we have $P((i_1,j_1),(i_2,j_2))=P((i_2,j_2),(i_1,j_1))$. This means the detailed balance
equations $\mu(x)P(x,y)=\mu(y)P(y,x)$ hold for any $\mu$ that is uniform. Hence $\mu$
is the uniform distribution $\mu(i,j)=1/n^2$.
\end{proof}

\begin{remark}
Since $\mu$ is uniform, sampling from $\mu$ is trivial and provides no advantage.
The power of our algorithm comes from the \emph{quantum walk search} framework (MNRS,
Theorem~\ref{thm:qwalk-search}), which finds \emph{marked} vertices more efficiently than
classical search even when the stationary distribution is uniform. The marking oracle
identifies vertices that lie in the dense cluster $M_\pi$, and the quantum walk detects
this structural signature quadratically faster than classical random walks.
\end{remark}

\subsection{Spectral Gap Analysis}\label{sec:spectral-gap}

\begin{theorem}[Spectral gap of the product walk]\label{thm:spectral-gap}
For $G,H\sim\mathcal{G}(n,1/2)$, the random walk $P$ on $\Gamma(G,H)$
has spectral gap $\delta\ge\Omega(1)$ with high probability.
\end{theorem}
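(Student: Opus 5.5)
The plan is to write $P$ explicitly in terms of the Laplacian of $\Gamma$ and then bound that Laplacian using a tensor decomposition of the adjacency matrix $A_\Gamma$.

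\textbf{Step 1: reduce to the Laplacian.} From Definition~\ref{def:product-walk}, every permitted move has probability $\frac{1}{2(n-1)^2}$, and all rejected mass stays at the current state. Hence
\[
  P \;=\; I-\frac{1}{2(n-1)^2}\,L_\Gamma,
  \qquad L_\Gamma=D_\Gamma-A_\Gamma .
\]
Since $L_\Gamma\succeq 0$, we get $\delta=1-\lambda_2(P)=\lambda_2(L_\Gamma)/(2(n-1)^2)$. Laziness keeps $\lambda_{\min}(P)\ge0$, so only the second eigenvalue matters. It therefore suffices to show $\lambda_2(L_\Gamma)\ge n^2/2-o(n^2)$ with high probability, which gives $\delta\ge 1/4-o(1)$.

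\textbf{Step 2: tensor decomposition.} Let $S_G=J-I-2A_G$ and $S_H=J-I-2A_H$ be the $\pm1$ sign matrices with zero diagonal. For $i_1\neq i_2$ and $j_1\neq j_2$, the consistency indicator is $\mathbf 1[(A_G)_{i_1i_2}=(A_H)_{j_1j_2}]=\tfrac12\bigl(1+(S_G)_{i_1i_2}(S_H)_{j_1j_2}\bigr)$. The injectivity condition (a) is exactly encoded by the zero diagonals. This gives the exact identity
\[
  A_\Gamma \;=\; \tfrac12\Bigl[(J-I)\otimes(J-I)\;+\;S_G\otimes S_H\Bigr].
\]
The first term has eigenvector $\mathbf 1\otimes\mathbf 1$ with eigenvalue $(n-1)^2$. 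On $(\mathbf 1\otimes\mathbf 1)^\perp$ its eigenvalues are only $-(n-1)$ and $1$. For the second term, $\norm{S_G\otimes S_H}=\norm{S_G}\,\norm{S_H}$.

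\textbf{Step 3: random-matrix input.} This is the main probabilistic step. Marginally $G,H\sim\mathcal G(n,1/2)$, so $S_G$ and $S_H$ are each symmetric random sign matrices with independent entries above the diagonal. A standard bound (Füredi--Koml\'os, or a matrix Bernstein/Chernoff argument in the spirit of Lemma~\ref{lem:matrix-chernoff}) gives $\norm{S_G},\norm{S_H}=\OO(\sqrt n)$ with high probability. The key point is that the operator-norm factorization of the tensor product means \emph{no independence between $G$ and $H$ is needed}. Each bound uses only the marginal law, so the argument covers the correlated YES instances as well as the NO instances. It follows that $\norm{S_G\otimes S_H}=\OO(n)$. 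Similarly, Hoeffding (Lemma~\ref{lem:hoeffding}) plus a union bound gives row sums $|(S_G\mathbf 1)_i|,|(S_H\mathbf 1)_j|=\OO(\sqrt{n\log n})$. Hence every degree satisfies $\deg_\Gamma(i,j)=\tfrac12\bigl[(n-1)^2+(S_G\mathbf 1)_i(S_H\mathbf 1)_j\bigr]=\tfrac{n^2}{2}-\OO(n\log n)$. This also makes Lemma~\ref{lem:deg-nonmatching} rigorous.

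\textbf{Step 4: combine.} Take a unit $x\perp\mathbf 1\otimes\mathbf 1$. From Steps 2 and 3,
\[
  x^\top A_\Gamma x\le \tfrac12\bigl(1+\OO(n)\bigr)=\OO(n),
  \qquad
  x^\top D_\Gamma x\ge \tfrac{n^2}{2}-\OO(n\log n).
\]
By Courant--Fischer, $\lambda_2(L_\Gamma)\ge n^2/2-\OO(n\log n)$, so $\delta\ge\tfrac14-o(1)=\Omega(1)$. As a byproduct, a positive gap implies that $\Gamma$ is connected, which gives the irreducibility claimed in Lemma~\ref{lem:stationary} without the informal argument given there.

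The main obstacle I anticipate is Step 3. One must cite or prove the $\OO(\sqrt n)$ operator-norm bound for $S_G$ carefully, since the zero diagonal and symmetry are both present. The cleanest route is to split $S_G$ into upper and lower triangular parts, or to use matrix Bernstein, which yields $\OO(\sqrt{n\log n})$. That weaker bound still gives $\norm{S_G\otimes S_H}=\OO(n\log n)=o(n^2)$, which is enough for the conclusion.
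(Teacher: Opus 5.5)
Your argument is correct, and it takes a genuinely different route from the paper. The paper first gives a conductance estimate: a ``typical'' vertex of $S$ has a constant fraction of its neighbors in $\bar S$, so $\Phi\ge 1/16$ and $\delta\ge 1/512$ by Cheeger. The paper itself flags this as heuristic. It then argues spectrally by writing $A_\Gamma=R\circ D$, the Hadamard product of the injectivity mask and the consistency matrix $D$. It treats $D$ as having independent Bernoulli$(1/2)$ entries and appeals to the Schur product theorem to conclude that the nontrivial eigenvalues are $\OO(n)$.

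That step does not hold up as written. The $n^4$ entries of $D$ are functions of only $\OO(n^2)$ bits, so they are far from independent. The Schur product theorem also does not yield eigenvalue bounds of this kind. Your exact identity $A_\Gamma=\tfrac12\bigl[(J-I)\otimes(J-I)+S_G\otimes S_H\bigr]$ is what repairs it. It turns the Hadamard product into a sum of Kronecker products, so everything reduces to $\norm{S_G}\,\norm{S_H}$ and the row sums of $S_G,S_H$, which really are governed by independent signs. The same computation with $J-2A_G$ and $J-2A_H$ in place of $S_G,S_H$ shows that the paper's claim $\norm{D-\tfrac12\mathbf 1\mathbf 1^\top}=\OO(n)$ is in fact true, but because of this tensor structure rather than independence.

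Your route buys four things:
\begin{itemize}
\item a complete proof;
\item validity for the correlated YES distribution, since only the marginals of $G$ and $H$ enter;
\item connectivity of $\Gamma$ as a byproduct;
\item an explicit and essentially tight constant $\delta\ge 1/4-o(1)$.
\end{itemize}
On the last point, the paper's final step halves a gap of $1-\OO(1/n)$ for the simple random walk on $\Gamma$, overlooking that roughly half of all proposals are rejected. Your formulation $P=I-L_\Gamma/(2(n-1)^2)$ handles this exactly and shows $P\approx\tfrac34 I$ on $(\mathbf 1\otimes\mathbf 1)^\perp$.

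The paper's conductance viewpoint, if completed with an expansion bound valid for every $S$, would have one advantage. It does not rely on the exact product form of the consistency indicator, which matters for the weighted extension, where the kernel $\exp(-|w_G-w_H|/\sigma)$ no longer factorizes so neatly.
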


\begin{proof}
We use the Cheeger inequality. Define the \emph{conductance} of the chain as
\[
  \Phi \;=\; \min_{\substack{S\subseteq V(\Gamma)\\ 0<\mu(S)\le 1/2}}
  \frac{\sum_{x\in S,\,y\notin S}\mu(x)P(x,y)}{\mu(S)}.
\]
By the discrete Cheeger inequality \cite{SinclairJerrum1989},
$\delta\ge\Phi^2/2$.

We claim $\Phi\ge\Omega(1/n)$. Consider any set $S\subseteq V(\Gamma)$ with
$\mu(S)\le 1/2$, i.e., $|S|\le n^2/2$. For any vertex $(i,j)\in S$, the probability
of transitioning to $\bar{S}=V(\Gamma)\setminus S$ in one step is
\[
  \sum_{(i',j')\in\bar{S}} P((i,j),(i',j'))
  \;=\; \frac{1}{2(n-1)^2}|\{(i',j')\in\bar{S}\colon i'\neq i,\;j'\neq j,\;
  (A_G)_{ii'}=(A_H)_{jj'}\}|.
\]
For $G,H\sim\mathcal{G}(n,1/2)$, w.h.p.\ every vertex $(i,j)$ has total degree
$\deg_\Gamma(i,j)\ge (n-1)^2/4$ in $\Gamma$ (by Chernoff bounds, since each of
the $(n-1)^2$ potential neighbors is present independently with probability $\approx 1/2$).

Since $|\bar{S}|\ge n^2/2$, at least a constant fraction of $(i,j)$'s neighbors lie in
$\bar{S}$ for a typical $(i,j)\in S$. Specifically, since the neighbors of $(i,j)$
are spread across all $(n-1)^2$ positions approximately uniformly, and at least half the
positions are in $\bar{S}$, the expected number of neighbors in $\bar{S}$ is
$\ge (n-1)^2/8$. Thus:
\[
  \sum_{(i',j')\in\bar{S}}P((i,j),(i',j'))
  \;\ge\;\frac{1}{2(n-1)^2}\cdot\frac{(n-1)^2}{8}
  \;=\;\frac{1}{16}.
\]
Averaging over $(i,j)\in S$ with weight $\mu(i,j)=1/n^2$:
\[
  \Phi\;\ge\;\frac{1}{\mu(S)}\sum_{(i,j)\in S}\mu(i,j)\cdot\frac{1}{16}
  \;=\;\frac{1}{16}.
\]
This gives $\delta\ge\Phi^2/2\ge 1/512=\Omega(1)$, which is stronger than needed.

We now make this rigorous. The key observation is that $\Gamma$ inherits strong expansion
from the randomness of $G$ and $H$. We analyze the spectral structure directly.

\medskip\noindent\textbf{Minimum degree bound.}
Fix $(i,j)\in V(\Gamma)$. Its neighbors are precisely
$N(i,j)=\{(i',j')\colon i'\neq i,\;j'\neq j,\;(A_G)_{ii'}=(A_H)_{jj'}\}$. For
$G,H\sim\mathcal{G}(n,1/2)$, the events ``$(A_G)_{ii'}=(A_H)_{jj'}$'' are independent
across distinct pairs $(i',j')$ with $i'\neq i, j'\neq j$ (since distinct rows of $A_G$
and $A_H$ are independent for Erd\H{o}s--R\'enyi graphs), each occurring with probability
$1/2$. By a Chernoff bound:
\[
  \Pr\bigl[|N(i,j)|\le(n-1)^2/4\bigr]\le e^{-\Omega(n^2)}.
\]
Union-bounding over all $n^2$ vertices, w.h.p.\ every vertex has degree
$\deg_\Gamma(i,j)\ge(n-1)^2/4$.

\medskip\noindent\textbf{Spectral gap via random matrix theory.}
The adjacency matrix of $\Gamma$ can be decomposed as the Hadamard (entrywise) product
$A_\Gamma=R\circ D$, where $R_{(i,j),(i',j')}=\mathbf{1}[i\neq i',\;j\neq j']$
encodes the injectivity constraint, and $D_{(i,j),(i',j')}=\mathbf{1}[(A_G)_{ii'}=
(A_H)_{jj'}]$ is effectively a random $\{0,1\}$-matrix with mutually independent entries
(for distinct unordered pairs), each Bernoulli$(1/2)$.

The matrix $R$ has $N=n^2$ vertices and is the adjacency matrix of the tensor product
$K_n\boxtimes K_n$ (minus the non-injectivity constraint), with leading eigenvalue
$(n-1)^2$ and all others $\OO(n)$. The matrix $D$ is a random symmetric matrix
with i.i.d.\ entries (above the diagonal) taking values in $\{0,1\}$ with equal
probability. By standard results on random matrices \cite{Tropp2012}, the centered
matrix $D-\frac{1}{2}\mathbf{1}\mathbf{1}^\top$ has spectral norm $\OO(n)$ w.h.p.

Since $A_\Gamma=R\circ D$, and the Hadamard product of an expander-like matrix $R$ with
an approximately uniform random matrix $D$ preserves expansion (by the
Schur product theorem applied to the spectral decomposition \cite{Horn2012}: the leading
eigenvalue of $A_\Gamma$ is $\Theta(n^2)$ while all others are $\OO(n)$ w.h.p.), the
normalized transition matrix $P'$ of the non-lazy walk on $\Gamma$
has spectral gap $1-\OO(n)/\Theta(n^2)=1-\OO(1/n)\ge\Omega(1)$.

For the lazy walk $P=(I+P')/2$, $\delta_P=\delta_{P'}/2\ge\Omega(1)$.
\end{proof}

\begin{corollary}[Phase gap of the Szegedy walk]\label{cor:phase-gap}
The Szegedy walk $W(P)$ corresponding to the product graph walk $P$ has phase gap
$\Delta\ge\Omega(1)$.
\end{corollary}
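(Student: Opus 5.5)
The corollary follows by combining Theorem~\ref{thm:spectral-gap} with Theorem~\ref{thm:szegedy-spectral}. I will first check that the hypotheses of the Szegedy spectral theorem hold for the product walk $P$. By Lemma~\ref{lem:stationary}, $P$ is ergodic and reversible with uniform stationary distribution $\mu(i,j)=1/n^2$, so $W(P)$ in Definition~\ref{def:szegedy} is well defined. Because $P$ is lazy (self-loop probability at least $1/2$), its eigenvalues satisfy $\lambda_j\in[0,1]$. This will matter below, since it rules out eigenvalues near $-1$, whose phases $\arccos(\lambda_j)$ would lie near $\pi$.

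\textbf{Main step.} Let $\delta=1-\lambda_2\ge c$ for a constant $c>0$, which holds with high probability by Theorem~\ref{thm:spectral-gap}. By Theorem~\ref{thm:szegedy-spectral}, the eigenvalues of $W(P)$ on the nontrivial part of the spectrum are $e^{\pm i\theta_j}$ with $\theta_j=\arccos(\lambda_j)$ for $j\ge 2$. Since $\arccos$ is decreasing on $[-1,1]$, every such phase satisfies $\theta_j\ge\arccos(1-\delta)$. The elementary inequality $\arccos(1-x)\ge\sqrt{2x}$ for $x\in[0,1]$ follows from $\cos t\ge 1-t^2/2$. It gives $\Delta\ge\sqrt{2\delta}\ge\sqrt{2c}=\Omega(1)$, matching the quantitative form $\Delta=\Theta(\sqrt{\delta})$ stated in Theorem~\ref{thm:szegedy-spectral}.

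\textbf{Subtlety.} The main point needing care is the definition of the phase gap. It should be the smallest nonzero phase of $W(P)$ restricted to the span of $A+B$, on which the Szegedy correspondence holds. On the orthogonal complement of $A+B$, $W(P)$ acts as $\pm I$, and these eigenvalues are not among those tied to the nonzero phases $\arccos(\lambda_j)$. Following the standard convention of \cite{MNRS2011,Szegedy2004}, the phase gap is defined with respect to eigenvectors other than the stationary one $\sum_x\sqrt{\mu(x)}\ket{\psi_x}$. With this convention the corollary holds with the same high probability as Theorem~\ref{thm:spectral-gap}, and no new probabilistic argument is needed.
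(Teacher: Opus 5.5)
Your proposal is correct and follows the paper's route: the paper's proof is the one-line chain $\Delta=\Theta(\sqrt{\delta})=\Theta(\sqrt{\Omega(1)})=\Omega(1)$, combining Theorem~\ref{thm:spectral-gap} with Theorem~\ref{thm:szegedy-spectral}, and you make the $\Theta(\sqrt{\delta})$ step explicit via $\arccos(1-x)\ge\sqrt{2x}$ and check the hypotheses. Two small corrections to your side remarks: laziness matters because in the standard doubled-angle convention of \cite{MNRS2011} (eigenvalues $e^{\pm 2i\arccos\lambda_j}$) an eigenvalue of $P$ near $-1$ would give a phase near $0$ modulo $2\pi$, whereas a phase near $\pi$ (your stated reason) would be harmless; and on $(A+B)^\perp$ the operator $W(P)$ acts as exactly $+I$, not $\pm I$.
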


\begin{proof}
By Theorem~\ref{thm:szegedy-spectral}, $\Delta=\Theta(\sqrt{\delta})
=\Theta(\sqrt{\Omega(1)})=\Omega(1)$.
\end{proof}

\section{Main Algorithm and Upper Bound}\label{sec:main}

We present the quantum algorithm for approximate graph isomorphism testing. The algorithm
uses the MNRS quantum walk search framework (Theorem~\ref{thm:qwalk-search}) on the
product graph $\Gamma(G,H)$, with a marking oracle that identifies vertices belonging
to a near-isomorphic matching.

\subsection{Algorithm Overview}

The algorithm proceeds in three phases:
\begin{enumerate}[\textbf{Phase} 1.]
  \item \textbf{Quantum Walk Search.} Use the MNRS quantum walk search algorithm on
    $\Gamma(G,H)$ to find vertices $(i,j)$ whose local neighborhood structure is
    consistent with membership in a dense matching set $M_\pi$.
  \item \textbf{Candidate Reconstruction.} Collect seed pairs from Phase~1 and reconstruct
    a candidate permutation~$\hat{\pi}$ via local consistency propagation
    (Lemma~\ref{lem:local-consistency}).
  \item \textbf{Quantum Verification.} Verify $\hat{\pi}$ by estimating
    $\ed_{\hat{\pi}}(G,H)$ using amplitude estimation.
\end{enumerate}

\subsection{Phase 1: Quantum Walk Search}\label{sec:phase1}

\subsubsection{Marking Oracle}

The key to applying the MNRS framework is specifying a set of \emph{marked} vertices
$M\subseteq V(\Gamma)$ that the quantum walk search will locate.

\begin{definition}[Marking oracle]\label{def:marking}
A vertex $(i,j)\in V(\Gamma)$ is \emph{marked} if its empirical consistency score exceeds a
threshold. Concretely, we implement a \emph{probabilistic marking check}: sample
$r=\Theta(\log n/\varepsilon^2)$
pairs $(i',j')$ independently and uniformly from
$([n]\setminus\{i\})\times([n]\setminus\{j\})$, query $(A_G)_{ii'}$ and $(A_H)_{jj'}$,
and mark $(i,j)$ if the fraction of consistent pairs (i.e., those satisfying
$(A_G)_{ii'}=(A_H)_{jj'}$) exceeds $1/2+\varepsilon/4$.
\end{definition}

\begin{lemma}[Marking oracle correctness]\label{lem:marking-correct}
For $G,H\sim\mathcal{G}(n,1/2)$ with $\ed_{\pi^*}(G,H)\le k=\varepsilon\binom{n}{2}$:
\begin{enumerate}[(a)]
  \item Every low-defect matching vertex $(i,\pi^*(i))\in M_{\pi^*}$ with $d(i)<\sqrt{k}$
    is marked with probability $\ge 1-1/\poly(n)$.
  \item Every non-matching vertex $(i,j)$ with $j\neq\pi^*(i)$ is marked with probability
    $\le 1/\poly(n)$.
\end{enumerate}
Each invocation of the marking oracle uses $r=\OO(\log n/\varepsilon^2)$ queries.
\end{lemma}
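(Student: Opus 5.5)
The proof has two parts. First, a sampling-concentration step reduces the statement to a deterministic gap between the true consistency fractions of matching and non-matching vertices. Second, that gap is established from the structure of $\Gamma(G,H)$ under the planted model.

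For a vertex $(i,j)$, define its \emph{true consistency fraction}
\[
  p(i,j)\;=\;\frac{\bigl|\{(i',j')\colon i'\neq i,\;j'\neq j,\;(A_G)_{ii'}=(A_H)_{jj'}\}\bigr|}{(n-1)^2}\;=\;\frac{\deg_\Gamma(i,j)}{(n-1)^2}.
\]
The marking check of Definition~\ref{def:marking} draws $r$ i.i.d.\ uniform pairs from $([n]\setminus\{i\})\times([n]\setminus\{j\})$. Its empirical fraction $\hat p$ is therefore an average of $r$ independent $\{0,1\}$ indicators with mean exactly $p(i,j)$, where the randomness is the sampling and $G,H$ are fixed. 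Lemma~\ref{lem:hoeffding} gives
\[
  \Pr\bigl[\abs{\hat p-p(i,j)}\ge\varepsilon/8\bigr]\le 2e^{-r\varepsilon^2/32},
\]
which is at most $n^{-c}$ once $r=C\log n/\varepsilon^2$ with $C$ large. The query count is immediate: each sample costs two queries, one to $O_G$ and one to $O_H$, so the cost is $2r=\OO(\log n/\varepsilon^2)$.

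It therefore suffices to prove, with high probability over $(G,H)$, two deterministic inequalities:
\begin{itemize}
  \item (a$'$) $p(i,\pi^*(i))\ge 1/2+3\varepsilon/8$ for every low-defect $i$;
  \item (b$'$) $p(i,j)\le 1/2+\varepsilon/8$ for every $j\neq\pi^*(i)$.
\end{itemize}
A union bound over the $n^2$ vertices then yields (a) and (b) simultaneously.

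For (b$'$), I would follow the proof of Theorem~\ref{thm:spectral-gap} and Lemma~\ref{lem:deg-nonmatching}(b). Since $j\neq\pi^*(i)$, row $j$ of $A_H$ is built from row $\pi^{*-1}(j)\neq i$ of $A_G$ plus independent noise. Hence the indicators $\mathbf 1[(A_G)_{ii'}=(A_H)_{jj'}]$ have mean $1/2$, up to an $\OO(1/n)$ correction from the few pairs with $j'=\pi^*(i')$ or $i'=\pi^{*-1}(j)$. A Chernoff bound over the $(n-1)^2$ pairs gives $p(i,j)\le 1/2+\OO(\sqrt{\log n}/n)\le 1/2+\varepsilon/8$ with probability $1-e^{-\Omega(\varepsilon^2 n^2)}$, assuming $\varepsilon\gg\sqrt{\log n}/n$.

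For (a$'$), split the $(n-1)^2$ pairs into two groups:
\begin{itemize}
  \item the \emph{aligned} pairs $j'=\pi^*(i')$, whose contribution is exactly $n-1-d(i)$ by Lemma~\ref{lem:deg-matching};
  \item the remaining pairs, which behave like case (b$'$) and have mean $1/2$.
\end{itemize}
The whole difficulty lies here: the aligned pairs form only a $1/(n-1)$ fraction of the sample space. Their surplus over $1/2$ is about $\tfrac{1}{2}(n-1-2d(i))$ out of $(n-1)^2$, that is, $\OO(1/n)$ rather than $\Omega(\varepsilon)$.

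My first attempt at closing the gap will use the planted correlation of Remark~\ref{rem:correlation} in the non-aligned pairs. For $j'\neq\pi^*(i')$, the entry $(A_H)_{\pi^*(i)j'}$ equals $(A_G)_{i,\pi^{*-1}(j')}\oplus B$. So consistency is the event $(A_G)_{ii'}=(A_G)_{i,\pi^{*-1}(j')}\oplus B$, and I will check whether row-$i$ imbalance of $G$ biases this event. If even this gives only $1/2+\OO(1/n)$, then (a$'$) cannot hold at threshold $1/2+\varepsilon/4$ in this form. In that case I will record that the aligned-pair surplus $\Theta(1/n)$ is the true signal, and note the required adjustment: either restrict the sampling in Definition~\ref{def:marking} to aligned-type pairs, or rescale the threshold. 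I expect this gap step, not the concentration step, to be the main obstacle.
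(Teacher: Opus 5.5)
You reduce correctly from sampling to the true consistency fractions, and the $2r=\OO(\log n/\varepsilon^2)$ query count is right. But the proposal stops short of (a$'$), and that step cannot be completed: part (a) is false as stated.

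Your open question, whether row imbalance biases the non-aligned pairs, is settled by an exact factorization. With $m=n-1$, $a=\deg_G(i)$ and $b=\deg_H(j)$,
\[
  \deg_\Gamma(i,j)=\sum_{i'\neq i}\sum_{j'\neq j}\mathbf{1}\bigl[(A_G)_{ii'}=(A_H)_{jj'}\bigr]=ab+(m-a)(m-b),
\]
so $p(i,j)=\tfrac12+2\alpha\beta/m^2$, where $\alpha=a-\tfrac m2$ and $\beta=b-\tfrac m2$. The score sees $(G,H)$ only through two degrees. Your aligned/non-aligned split only decomposes this same total.

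Both graphs are marginally $\mathcal{G}(n,1/2)$, so w.h.p.\ every row has $|\alpha|,|\beta|=\OO(\sqrt{n\log n})$. Hence every vertex of $\Gamma$, matching or not, has $p(i,j)\le\tfrac12+\OO(\log n/n)$. For $\varepsilon\gg\log n/n$, your own Hoeffding step then shows that low-defect matching vertices are \emph{unmarked} w.h.p.

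The paper's proof fails at exactly the point you flagged. It derives $p_{\text{match}}\ge\tfrac12+\tfrac1{4n}$, then applies Hoeffding against the threshold $\tfrac12+\varepsilon/4$ as if $p_{\text{match}}$ sat a margin $\varepsilon/8$ above it. In fact it sits below the threshold.

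Your (b$'$) argument also has a problem. You apply a Chernoff bound to the $(n-1)^2$ indicators, but they are functions of only $2(n-1)$ bits and far from independent. The formula above gives the correct bound $\OO(\log n/n)$, not $\OO(\sqrt{\log n}/n)$. The failure probability is $e^{-\Omega(\varepsilon n)}$ per row rather than $e^{-\Omega(\varepsilon^2n^2)}$, which still suffices for (b) when $\varepsilon\gg\log n/n$.

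Neither of your proposed repairs works. Sampling only aligned pairs $j'=\pi^*(i')$ requires already knowing $\pi^*$. Rescaling the threshold cannot help either, because the marking probability is nondecreasing in $p(i,j)$. Take any low-defect $i$ whose image $\pi^*(i)$ is neither the unique maximum- nor the unique minimum-degree vertex of $H$. Choose $j\neq\pi^*(i)$ of maximum $H$-degree if $\alpha\ge0$, or of minimum $H$-degree if $\alpha<0$. Then $p(i,j)\ge p(i,\pi^*(i))$.

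So (a) and (b) are incompatible for every threshold under this sampling distribution. A working marking test would have to score against already-known matched pairs, as in the signature test of Lemma~\ref{lem:local-consistency}. That is circular for Phase~1.
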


\begin{proof}
For part (a): a matching vertex $(i,\pi^*(i))$ has consistency probability
\[
  \Pr_{(i',j')}[(A_G)_{ii'}=(A_H)_{\pi^*(i),j'}]
  \;=\;\frac{1}{(n-1)^2}\sum_{\substack{i'\neq i\\ j'\neq\pi^*(i)}}
  \mathbf{1}[(A_G)_{ii'}=(A_H)_{\pi^*(i),j'}].
\]
When $j'=\pi^*(i')$, consistency holds iff $(A_G)_{ii'}=(A_H)_{\pi^*(i)\pi^*(i')}$,
which holds for $n-1-d(i)\ge n-1-\sqrt{k}$ values of $i'$. When $j'\neq\pi^*(i')$,
by independence of $\mathcal{G}(n,1/2)$, consistency holds with probability $1/2\pm o(1)$.

The total consistency rate is
\[
  p_{\text{match}} = \frac{n-1-d(i)}{(n-1)^2}+\frac{(n-1)(n-2)}{(n-1)^2}\cdot\frac{1}{2}\pm o(1)
  =\frac{1}{2}+\frac{n-1-d(i)}{2(n-1)^2}\ge\frac{1}{2}+\frac{1}{4n}.
\]
The empirical consistency rate over $r$ samples concentrates around $p_{\text{match}}$ by
Hoeffding's inequality. Setting the threshold at $1/2+\varepsilon/4$ and using
$r=\Theta(\log n/\varepsilon^2)$:
\[
  \Pr[\text{not marked}]\le\exp\!\left(-2r\cdot(\varepsilon/8)^2\right)
  =1/\poly(n).
\]

For part (b): a non-matching vertex $(i,j)$ with $j\neq\pi^*(i)$ has consistency rate
$p_{\text{non}}=1/2\pm o(1)$ (since both $(A_G)_{ii'}$ and $(A_H)_{jj'}$ are essentially
independent Bernoulli$(1/2)$). By the same Hoeffding bound:
\[
  \Pr[\text{marked}]\le\exp\!\left(-2r\cdot(\varepsilon/8)^2\right)
  =1/\poly(n). \qedhere
\]
\end{proof}

\subsubsection{MNRS Quantum Walk Search Application}

We now apply the MNRS framework (Theorem~\ref{thm:qwalk-search}).

\begin{lemma}[Walk implementation]\label{lem:walk-impl}
One step of the Szegedy walk $W(P)$ on the product graph can be implemented using
$\OO(1)$ queries to $O_G$ and $O_H$ and $\OO(\log n)$ auxiliary gates.
\end{lemma}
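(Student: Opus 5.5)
The plan is to implement $W(P)=\mathrm{ref}(B)\,\mathrm{ref}(A)$ by building each reflection from a state-preparation unitary and its inverse. Write $\mathrm{ref}(A)=U_A(2\ket{0}\langle 0|-I)U_A^\dagger$, where $U_A\colon\ket{x}\ket{0}\mapsto\ket{\psi_x}=\ket{x}\sum_y\sqrt{P(x,y)}\ket{y}$. The operator $\mathrm{ref}(B)$ is obtained in the same way after conjugating by the register swap $S\colon\ket{x}\ket{y}\mapsto\ket{y}\ket{x}$. The swap is valid because $P$ is symmetric (Lemma~\ref{lem:stationary}), so $\ket{\phi_y}=S\ket{\psi_y}$. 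It then suffices to show that $U_A$ uses $\OO(1)$ queries and $\OO(\log n)$ gates. The reflection about $\ket{0}$ and the swap use no queries and $\OO(\log n)$ gates.

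For $x=(i,j)$, the state $\ket{\psi_x}$ has a simple explicit form by Definition~\ref{def:product-walk}. The laziness is handled with one ancilla coin qubit prepared in $(\ket{0}+\ket{1})/\sqrt2$. On the ``move'' branch we prepare the uniform superposition over $i'\neq i$ and $j'\neq j$. This takes $\OO(\log n)$ gates, for example by a uniform superposition over $[n-1]$ followed by the map $i'\mapsto i'+\mathbf 1[i'\ge i]$. Next we query $O_G$ on $(i,i')$ and $O_H$ on $(j,j')$ into two ancillas and compute the bit $c=\mathbf 1[(A_G)_{ii'}=(A_H)_{jj'}]$. Controlled on $c=0$, or on the lazy branch, we overwrite the target register with $(i,j)$. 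Finally we uncompute the query bits with two more queries. In total, $U_A$ uses four queries and $\OO(\log n)$ arithmetic and comparison gates on $\OO(\log n)$-qubit registers.

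The main obstacle is that rejected moves collapse many branches onto $\ket{(i,j)}$. The ``overwrite'' step is therefore not reversible without garbage: the rejected $(i',j')$ remain entangled in the proposal register. Without care, the prepared state would be $\ket{x}\ket{\psi_x}\ket{\text{garbage}_x}$ rather than $\ket{\psi_x}$, and the reflections would not be the Szegedy reflections. I would resolve this in the standard Metropolis-walk way by keeping the proposal $(i',j')$, the coin, and $c$ as part of an enlarged edge register. The walk then runs on the state space $\{(x,\text{coin},x')\}$. This is the usual generalization of Szegedy's construction, in which the amplitudes $\sqrt{P(x,y)}$ are realized as marginals: the reflection is about $\spn\{U_A\ket{x}\ket{0}\}$, and its spectrum on the relevant subspace is governed by the discriminant $\langle\psi_x|\phi_y\rangle$, which still equals $P(x,y)$ because $P$ is symmetric. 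This keeps the phase-gap analysis of Corollary~\ref{cor:phase-gap} intact. The cost bound is unaffected: still $\OO(1)$ queries and $\OO(\log n)$ auxiliary gates per step.
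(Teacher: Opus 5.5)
Your route is the paper's: a uniform superposition over proposals in $([n]\setminus\{i\})\times([n]\setminus\{j\})$, two queries, and a consistency bit $c$. You also correctly flag something the paper's proof skips. Its sentence ``conditioned on the consistency check, the appropriate amplitude is assigned'' would have to produce the self-loop amplitude $\sqrt{P(x,x)}=\sqrt{1-\deg_\Gamma(x)/(2(n-1)^2)}$. That amplitude depends on all $(n-1)^2$ consistency bits at $x$, so the exact $\mathrm{ref}(A)$ on $\C^N\otimes\C^N$ cannot be built this way, and an enlarged register is genuinely needed.

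\textbf{Where the argument breaks.} The gap is in the step you assert rather than check: that $\mathrm{ref}(B)$ is still obtained by conjugating with the register swap, and that the discriminant ``still equals $P(x,y)$ because $P$ is symmetric.'' Once garbage is present, the discriminant is an inner product of the full enlarged states. For your $U_A$, with registers (vertex, target, coin, proposal, $c$), it vanishes off the diagonal. The accepted branch of $\ket{\tilde\psi_x}$ toward $y$ is $\ket{x}\ket{y}\ket{1}\ket{y}\ket{1}$. The branch of $S\ket{\tilde\psi_y}$ toward $x$ is $\ket{x}\ket{y}\ket{1}\ket{x}\ket{1}$, so the proposal registers disagree. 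The lazy and rejected branches $\ket{x}\ket{x}\cdots$ and $\ket{y}\ket{y}\cdots$ are orthogonal as well. Hence $\langle\tilde\psi_x|S|\tilde\psi_y\rangle=0$ for all $x\neq y$, and the discriminant is $\mathrm{diag}(P(x,x))$. The operator then never moves between vertices, and the phase gap of Corollary~\ref{cor:phase-gap} is lost. If you instead exchange the vertex and proposal registers on the move branch without conditioning on $c$, it fails differently. Since $c(x,y)=c(y,x)$, the $c$ register drops out of every inner product, and the discriminant is the proposal kernel $\tfrac12 I+\tfrac{1}{2(n-1)^2}R$ rather than $P$.

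\textbf{The fix.} What is missing is the correct involution on the enlarged space: an acceptance-controlled flip-flop. Drop the target register and set $\ket{\tilde\psi_x}=\ket{x}\otimes\bigl(\tfrac{1}{\sqrt2}\ket{0}\ket{0}+\tfrac{1}{\sqrt2\,(n-1)}\sum_{p}\ket{1}\ket{p}\bigr)$, with $p$ ranging over legal proposals from $x$. Let $S'$ map $\ket{x}\ket{1}\ket{p}\mapsto\ket{p}\ket{1}\ket{x}$ when $c(x,p)=1$, and act as the identity on the lazy branch and on rejected proposals.
\begin{itemize}
\item $S'$ is a well-defined involution. A pair $p$ is a legal proposal from $x$ iff $x$ is one from $p$, and $c(x,p)=c(p,x)$ because $A_G$ and $A_H$ are symmetric.
\item $S'$ costs $\OO(1)$ queries: compute $c$, apply the controlled exchange, then uncompute $c$ by re-querying the same entries from the swapped registers.
\item For $x\neq y$, $\langle\tilde\psi_x|S'|\tilde\psi_y\rangle=c(x,y)/(2(n-1)^2)=P(x,y)$.
\item On the diagonal you collect $\tfrac12$ from the lazy branch plus $1/(2(n-1)^2)$ per rejected proposal, which is exactly $P(x,x)$.
\end{itemize}
Both $\{\ket{\tilde\psi_x}\}$ and $\{S'\ket{\tilde\psi_y}\}$ are orthonormal, so Szegedy's spectral theorem applies with discriminant $P$ and the phase gap is preserved. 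Moreover $U_A$ now needs no queries at all, and $S'\,\mathrm{ref}(\tilde A)\,S'\,\mathrm{ref}(\tilde A)$ uses $\OO(1)$ queries and $\OO(\log n)$ gates. This gives an operator spectrally equivalent to $W(P)$ on the relevant invariant subspace, not $W(P)$ on $\C^N\otimes\C^N$ verbatim. That is what MNRS actually requires, and the paper's one-line sketch does not address the self-loop amplitude at all.
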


\begin{proof}
Each step of $W(P)=\mathrm{ref}(B)\,\mathrm{ref}(A)$ requires preparing the state
$\ket{\psi_{(i,j)}}=\ket{i,j}\sum_{i',j'}\sqrt{P((i,j),(i',j'))}\ket{i',j'}$.
By Definition~\ref{def:product-walk}, the transition $(i,j)\to(i',j')$ has probability
$\frac{1}{2(n-1)^2}$ if $(A_G)_{ii'}=(A_H)_{jj'}$, and $0$ otherwise (plus self-loop).
Preparing the uniform superposition over $([n]\setminus\{i\})\times([n]\setminus\{j\})$
costs $\OO(\log n)$ gates. Querying $(A_G)_{ii'}$ and $(A_H)_{jj'}$ costs $2$ queries.
Conditioned on the consistency check, the appropriate amplitude is assigned. Total:
$\OO(1)$ queries per step.
\end{proof}

\begin{proposition}[Finding a marked vertex]\label{prop:find-marked}
If $\ed(G,H)\le\varepsilon\binom{n}{2}$ and $G,H\sim\mathcal{G}(n,1/2)$,
a single marked vertex in $\Gamma(G,H)$ can be found using
$\OO(n^{3/2}/\varepsilon)$ queries, with constant probability.
\end{proposition}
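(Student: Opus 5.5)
The plan is to instantiate the MNRS bound of Theorem~\ref{thm:qwalk-search} with the product-graph walk $P$ of Definition~\ref{def:product-walk} and the marked set $M$ of Definition~\ref{def:marking}. Four quantities are needed: the marked fraction $\epsilon_M=\mu(M)$, the spectral gap $\delta$, and the setup, update and checking costs $S,U,C$.

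\textbf{Step 1: the marked fraction.} Since $\mu$ is uniform on $n^2$ states (Lemma~\ref{lem:stationary}), I will lower bound $|M|$. Lemma~\ref{lem:perm-recovery} gives $|L|\ge n-2\sqrt{k}$. Lemma~\ref{lem:marking-correct}(a) then says each low-defect matching vertex $(i,\pi^*(i))$, $i\in L$, is marked with probability $1-1/\poly(n)$. Hence w.h.p.\ $|M|\ge |L|/2=\Omega(n)$, provided $\sqrt{k}\le n/4$. In the regime where $k=\varepsilon\binom{n}{2}$ makes this fail, I would restrict attention to parameters where $L$ is nonempty. Part (b) bounds the expected number of spurious marks by $n^2/\poly(n)=o(1)$, so $|M|=\Theta(n)$ and $\epsilon_M=\Theta(1/n)$, which gives $1/\sqrt{\epsilon_M}=\OO(\sqrt{n})$.

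\textbf{Step 2: costs.} By Theorem~\ref{thm:spectral-gap}, $\delta=\Omega(1)$. By Lemma~\ref{lem:walk-impl}, $U=\OO(1)$ queries. Setup prepares the uniform superposition over $[n]\times[n]$ together with the walk register, which costs $S=\OO(1)$ queries. The checking cost is one coherent run of the marking test, so $C=r=\OO(\log n/\varepsilon^2)$ by Lemma~\ref{lem:marking-correct}. Substituting gives
\[
\OO\!\left(\sqrt{n}\left(\OO(1)+\frac{\log n}{\varepsilon^2}\right)\right).
\]

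\textbf{Main obstacle: reconciling this with the claimed bound.} The direct count does not give the stated $\OO(n^{3/2}/\varepsilon)$, so this step needs the most care, and it is also where soundness is most fragile. The threshold in Definition~\ref{def:marking} sits at $1/2+\varepsilon/4$, but the computation in Lemma~\ref{lem:marking-correct}(a) only yields a bias of order $1/n$. A faithful check must therefore resolve a bias of $\Theta(1/n)$ (scaled by $\varepsilon$), not a constant. I would implement the check by amplitude estimation (Theorem~\ref{thm:amp-est}) to additive precision $\Theta(\varepsilon/n)$. That costs $C=\OO(n/\varepsilon)$ queries, boosted to error $1/\poly(n)$ by $\OO(\log n)$ repetitions and a median, which absorbs the $\log$ factor into the overall bound. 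The total then becomes
\[
\OO\!\left(\sqrt{n}\cdot\frac{n}{\varepsilon}\right)=\OO\!\left(\frac{n^{3/2}}{\varepsilon}\right),
\]
with the $1/\sqrt{\delta}\cdot U=\OO(1)$ term dominated by $C$.

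\textbf{Step 3: error control.} A checking procedure with error $1/\poly(n)$ per call perturbs the MNRS state by at most $\OO(\sqrt{n})\cdot 1/\poly(n)$ over all $\OO(\sqrt{n})$ iterations. This follows from Lemma~\ref{lem:gentle} and a hybrid argument, so the success probability stays constant.
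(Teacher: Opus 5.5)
You correctly notice what the paper's proof glosses over. The paper simply plugs $C=\OO(\log n/\varepsilon^2)$ from Lemma~\ref{lem:marking-correct} into the MNRS formula, yet that lemma's own computation gives a matching bias of order $1/n$, far below the threshold $\tfrac12+\varepsilon/4$. Your repair, however, fails: estimating the same consistency rate to precision $\Theta(\varepsilon/n)$ cannot help, because that rate contains no information about $\pi^*$. Since $i'$ and $j'$ are sampled independently, the observed pair $\bigl((A_G)_{ii'},(A_H)_{jj'}\bigr)$ has law $\mathrm{Bern}(a/m)\otimes\mathrm{Bern}(b/m)$, where $m=n-1$, $a=\deg_G(i)$, $b=\deg_H(j)$. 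Exactly,
\[
  \frac{1}{m^2}\sum_{i'\neq i}\sum_{j'\neq j}\mathbf{1}\bigl[(A_G)_{ii'}=(A_H)_{jj'}\bigr]
  =\frac{ab+(m-a)(m-b)}{m^2}
  =\frac12+\frac{2(a-\frac m2)(b-\frac m2)}{m^2}.
\]
In $\mathcal{G}(n,1/2)$ the degree deviations are $\Theta(\sqrt n)$. So a constant fraction of all $n^2$ pairs deviate from $\tfrac12$ by $\Theta(1/n)$, with either sign. That is the same scale as the bias $\approx\tfrac{1}{2n}$, which a matching pair has only on average; a matching pair with $\deg_G(i)\approx m/2$ has rate $\approx\tfrac12$. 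Consequently, any threshold $\tfrac12+\Theta(1/n)$ that marks $\Omega(n)$ matching pairs also marks $\Theta(n^2)$ non-matching pairs. MNRS then returns a non-matching pair except with probability $\OO(1/n)$. Your Step~1 conclusion, that $|M|=\Theta(n)$ and $M$ consists of low-defect matching pairs, rests on Lemma~\ref{lem:marking-correct}(a),(b). Those parts were argued for the threshold $\tfrac12+\varepsilon/4$ and do not transfer to your check. The ``$\pm o(1)$'' there is precisely $\Theta(1/n)$, the size of the effect you want to detect, and (b) is false at your precision.

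There are two further problems. First, the rates are spread continuously at scale $1/n$, so a $\Theta(\varepsilon)$ fraction of all $n^2$ pairs has rate within $\Theta(\varepsilon/n)$ of such a threshold. Amplitude estimation therefore does not implement a fixed marking predicate with error $1/\poly(n)$, and the hybrid argument of Step~3 has nothing to approximate. Second, $\OO(\log n)$ median repetitions make $C=\OO(n\log n/\varepsilon)$, so the total is $\OO(n^{3/2}\log n/\varepsilon)$, not $\OO(n^{3/2}/\varepsilon)$. Read literally, ``find some marked vertex'' becomes easy once $\Theta(n^2)$ vertices are marked, but that is vacuous for how the proposition is used. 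The missing ingredient is a marking predicate that depends on $\pi^*$ beyond the two degrees. Any test that samples row $i$ of $A_G$ and row $j$ of $A_H$ at independent positions has acceptance probability depending only on $(\deg_G(i),\deg_H(j))$. So whatever it accepts at $(i,\pi^*(i))$, it also accepts at the typically $\Theta(\sqrt n)$ pairs $(i,j)$ with $\deg_H(j)=\deg_H(\pi^*(i))$. Certifying a pair needs something like consistency against already-known correct pairs, as in the signatures of Lemma~\ref{lem:local-consistency}. The same identity shows the paper's own instantiation, under which essentially no vertex is marked, does not supply such a predicate either.
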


\begin{proof}
We verify the parameters of Theorem~\ref{thm:qwalk-search}:
\begin{itemize}
  \item \textbf{Marked set fraction:} By Lemma~\ref{lem:marking-correct}, w.h.p.\ the
    marked set $M$ contains all $\ge n-2\sqrt{k}$ low-defect matching vertices.
    Since $\mu$ is uniform on $n^2$ vertices:
    \[
      \mu(M)\ge\frac{n-2\sqrt{k}}{n^2}\ge\frac{1}{2n}
      \quad\text{(for }\varepsilon<1/4\text{)}.
    \]
  \item \textbf{Setup cost $S$:} Preparing the uniform initial state
    $\ket{\psi_0}=\frac{1}{n}\sum_{i,j}\ket{i,j}$ costs $\OO(\log n)$ gates,
    $0$ queries. So $S=0$.
  \item \textbf{Update cost $U$:} By Lemma~\ref{lem:walk-impl}, $U=\OO(1)$ queries.
  \item \textbf{Checking cost $C$:} By Lemma~\ref{lem:marking-correct}, $C=\OO(\log n/\varepsilon^2)$ queries.
  \item \textbf{Spectral gap:} By Theorem~\ref{thm:spectral-gap}, $\delta\ge\Omega(1)$.
\end{itemize}
Plugging into the MNRS formula (Theorem~\ref{thm:qwalk-search}):
\begin{align*}
  \text{Total queries} &= \OO\!\left(\frac{1}{\sqrt{\mu(M)}}\cdot
    \left(\frac{U}{\sqrt{\delta}}+C\right)+S\right) \\
  &= \OO\!\left(\sqrt{2n}\cdot\left(\frac{1}{\Omega(1)}+\frac{\log n}{\varepsilon^2}
    \right)\right) \\
  &= \OO\!\left(\sqrt{n}\cdot\left(1+\frac{\log n}{\varepsilon^2}\right)\right) \\
  &= \OO\!\left(\frac{\sqrt{n}\log n}{\varepsilon^2}\right) \\
  &= \OO\!\left(n^{3/2}/\varepsilon\right) \quad\text{for }\varepsilon=\Theta(1).
\end{align*}
For general $\varepsilon$, the cost is $\OO(\sqrt{n}\log n/\varepsilon^2)
\le\OO(n^{3/2}/\varepsilon)$ (since
$\sqrt{n}\log n/\varepsilon^2\le n^{3/2}/\varepsilon$ when $n\ge\log^2 n/\varepsilon^2$,
which holds for $n$ large enough).
\end{proof}

\subsection{Phase 2: Candidate Reconstruction}\label{sec:phase2}

We repeat Phase~1 to collect $s=\OO(\log n)$ independent seed pairs.

\begin{lemma}[Seed sufficiency]\label{lem:seed-sufficiency}
For $G,H\sim\mathcal{G}(n,1/2)$ with $\ed(G,H)\le\varepsilon\binom{n}{2}$,
$s=\OO(\log n)$ seed pairs $(i,\pi^*(i))$ from low-defect vertices suffice to reconstruct
$\pi^*$ on all low-defect vertices, w.h.p.
\end{lemma}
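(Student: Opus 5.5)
The plan is to reduce the lemma to Lemma~\ref{lem:local-consistency}. There are two preliminary points to settle: that the seeds produced by repeated runs of Phase~1 meet the hypotheses of that lemma, and that the decoding threshold is valid in the regime $k=\varepsilon\binom{n}{2}$.

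\textbf{Seeds.} First, run Phase~1 independently $s=C\log n$ times, for a constant $C$ to be fixed later. By Lemma~\ref{lem:marking-correct}(b) and a union bound over the $n^2$ vertices of $\Gamma$, with probability $1-1/\poly(n)$ no non-matching vertex is marked. Every returned pair is therefore of the form $(v,\pi^*(v))$. By part (a), the marked set contains all of $M_{\pi^*}\cap(L\times[n])$ and possibly a few high-defect matching vertices. Because the stationary distribution $\mu$ is uniform (Lemma~\ref{lem:stationary}), MNRS search returns a marked vertex that is close to uniform over the marked set. The fraction of marked vertices lying in $L$ is at least $1-2\sqrt{k}/n$, which is bounded below by a constant for $\varepsilon<1/4$. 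So after a constant-factor increase in the number of runs, at least $C\log n$ seeds come from $L$. Seeds from high-defect vertices do no harm, since they are still correct pairs. One can also simply discard seeds whose empirical defect, as seen by the signature test below, is large.

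\textbf{Threshold.} The obstacle is the last step of the proof of Lemma~\ref{lem:local-consistency}. That proof separates correct from incorrect candidates with the absolute bound $|\sigma_{v,\pi^*(v)}|<\sqrt{k}<s/4$, which fails when $k=\Theta(n^2)$ and $s=\Theta(\log n)$. I would replace it with a relative bound, using the randomness of the YES distribution (Remark~\ref{rem:correlation}). For the correct image $w=\pi^*(v)$, the $i$-th bit of $\sigma_{v,w}$ equals the noise bit $B_{\pi^*(v)\pi^*(v_i)}$. These bits are i.i.d.\ $\mathrm{Bern}(\varepsilon/2)$ across distinct seeds, so the expected weight is $\varepsilon s/2\le s/6$ for $\varepsilon<1/3$. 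A Chernoff bound gives $\Pr[|\sigma_{v,\pi^*(v)}|\ge s/4]\le e^{-\Omega(s)}$. For an incorrect $w'$, the existing argument gives expected weight at least $s/3$, and hence $\Pr[|\sigma_{v,w'}|\le s/4]\le e^{-\Omega(s)}$.

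One subtlety must be checked. The seeds are chosen by the algorithm, which has looked at some adjacency entries, so the bits might not be independent of the seed choice. I would handle this by conditioning on the seed set: the signature uses only the entries $(A_G)_{v,v_i}$, $(A_H)_{w,\pi^*(v_i)}$ with $v\notin\{v_i\}$, and the marking checks touch only $O(\log^2 n/\varepsilon^2)$ such entries. Discarding those coordinates costs a negligible fraction of $s$.

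\textbf{Conclusion.} Union-bounding over the at most $n^2$ pairs $(v,w)$ gives total failure probability $n^2e^{-\Omega(s)}=1/\poly(n)$ once $C$ is large. On the complementary event, for every $v\in L$ the vertex $\pi^*(v)$ is the unique $w$ with $|\sigma_{v,w}|<s/4$. This determines $\pi^*$ on $L$ and proves the lemma. The signature computation uses $O(s)$ queries per candidate pair and is classical post-processing; its cost is accounted for in Phase~3.
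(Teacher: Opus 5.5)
Your proof keeps the paper's skeleton but replaces its decisive step, and the replacement is needed. The paper's proof is a one-line appeal to Lemma~\ref{lem:local-consistency}, which controls the correct candidate deterministically: $|\sigma_{v,\pi(v)}|\le d(v)<\sqrt{k}<s/4$. You are right that this step is unavailable here. With $k=\varepsilon\binom{n}{2}$ one has $\sqrt{k}\approx n\sqrt{\varepsilon/2}\gg s=\Theta(\log n)$, so the defect bound is vacuous (the weight is at most $s$ anyway). Requiring $\sqrt{k}<s/4$ would force $s>4\sqrt{k}=\Theta(\sqrt{\varepsilon}\,n)$ seeds.

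The skeleton you keep is: signatures against the seeds, Chernoff, a union bound over $n^2$ pairs, and the same treatment of wrong candidates. Your new step is probabilistic. Under Remark~\ref{rem:correlation} the bits of $\sigma_{v,\pi(v)}$ are exactly the noise bits $B_{\pi(v)\pi(v_i)}$, i.i.d.\ $\mathrm{Bern}(\varepsilon/2)$. So the weight stays below $s/4$ except with probability $e^{-\Omega(s)}$, and this is what actually makes $s=\OO(\log n)$ suffice.

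The trade-off is as follows. Where the paper's argument applies, it needs only a per-vertex defect bound and tolerates any arrangement of the defects; but with $s=\OO(\log n)$ it applies only when $k=\OO(\log^2 n)$. Yours covers $k=\Theta(n^2)$ at the price of relying on the i.i.d.\ planted noise. It also never uses low defect: $d(v)\approx\varepsilon n/2<\sqrt{k}$ for every $v$ w.h.p., so $L=[n]$ in this model and you recover $\pi$ on every non-seed vertex.

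Three things need tightening.

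\textbf{The permutation being reconstructed.} The identity $(\sigma_{v,\pi^*(v)})_i=B_{\pi^*(v)\pi^*(v_i)}$ holds for the planted $\pi$, whereas the statement is about the optimal $\pi^*$. Either state the lemma for the planted $\pi$, or show the two coincide w.h.p. For the latter: for $\sigma\neq\pi$ moving $t$ vertices, $\ed_\sigma-\ed_\pi$ is a sum over $\Theta(tn)$ pairs with mean $\Omega(tn)$ and bounded differences. It is therefore positive except with probability $e^{-\Omega(tn)}$, and there are at most $n^t$ such $\sigma$.

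\textbf{Independence of the seeds.} Your fix targets the wrong dependence. What matters is not which entries the classical marking checks touched, but whether the \emph{identity} of the seeds is correlated with row $v$ of $A_G$ and with the noise. This is a real issue: taking all $s$ seeds among the roughly $\varepsilon n/2$ vertices $u$ with $B_{\pi(v)\pi(u)}=1$ would give the correct candidate weight $s$.
\begin{itemize}
\item With seeds uniform on $L$, as in Lemma~\ref{lem:local-consistency}, and $L=[n]$ w.h.p., conditioning on the seed set is immediate.
\item For Phase-1 seeds you would need each run's output law to be within total variation $\eta$ of a graph-independent distribution, e.g.\ uniform on $M_\pi$. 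The failure probability then grows by at most $s\eta$.
\end{itemize}

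\textbf{The Seeds paragraph.} It is unnecessary, since the statement already hands you correct low-defect seeds. It also rests on Lemma~\ref{lem:marking-correct}(a), which does not hold as written: there $p_{\text{match}}-\tfrac12=\Theta(1/n)$, far below the $\varepsilon/4$ margin it is compared against.
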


\begin{proof}
This is a direct consequence of Corollary~\ref{cor:seeds}. Each seed pair is a low-defect
matching vertex found by Phase~1. By Lemma~\ref{lem:local-consistency}, $\OO(\log n)$ such
pairs uniquely determine $\pi^*$ on all of $L$ (the set of low-defect vertices) with
probability $\ge 1-1/\poly(n)$.
\end{proof}

\begin{remark}[Role of the distributional hypothesis]
The $\OO(\log n)$ seed count is tight for $\mathcal{G}(n,1/2)$ inputs, where the
independence of distinct rows of $A_G$ ensures that each random seed distinguishes
incorrect candidates with constant probability (Lemma~\ref{lem:local-consistency}).
For worst-case graphs, the required number of seeds may be larger --- in particular,
the seed set must form a \emph{distinguishing set} for $\pi^*$, meaning that for every
$v\in L$ and every $w'\neq\pi^*(v)$, at least one seed $v_i$ satisfies
$(A_G)_{v,v_i}\neq(A_H)_{w',\pi^*(v_i)}$. Under $\mathcal{G}(n,1/2)$, random seeds
achieve this w.h.p.\ with $\OO(\log n)$ seeds by a union bound over the $n^2$
pairs $(v,w')$.
\end{remark}

\begin{lemma}[Reconstruction procedure]\label{lem:reconstruction}
Given $s=\OO(\log n)$ seed pairs from Phase~1, the following procedure reconstructs
$\pi^*$ on all of $[n]$ using $\OO(n\log n+\sqrt{k}\sqrt{n})$ additional queries:
\begin{enumerate}
  \item \textbf{Signature computation:} For each non-seed vertex $v$, compute the signature
    $\sigma_{v,w}$ for each candidate $w\in[n]$ with respect to the seeds. This costs
    $2s$ queries per candidate (query $(A_G)_{v,v_i}$ and $(A_H)_{w,\pi^*(v_i)}$ for each
    seed $v_i$). Over $n$ vertices and $n$ candidates: $\OO(n^2 s)$ queries naively,
    but only $\OO(ns)=\OO(n\log n)$ queries if we fix $v$ and iterate over $w$ (since the
    $(A_G)_{v,v_i}$ entries are shared across all $w$, costing $s$ queries, and
    $(A_H)_{w,\pi^*(v_i)}$ for $w=1,\ldots,n$ costs $ns$ queries total across all $v$).
  \item \textbf{Matching:} For each $v\in L$, select $\hat\pi(v)=\arg\min_w|\sigma_{v,w}|$.
    By Lemma~\ref{lem:local-consistency}, $\hat\pi(v)=\pi^*(v)$ for all $v\in L$ w.h.p.
  \item \textbf{Resolving high-defect vertices:} For each of the $\le 2\sqrt{k}$
    high-defect vertices, use Grover search (Theorem~\ref{thm:grover}) over the $n$
    candidate images to find the one minimizing local edit distance. Each search costs
    $\OO(\sqrt{n})$ queries, for a total of $\OO(\sqrt{k}\cdot\sqrt{n})$ queries.
\end{enumerate}
Total: $\OO(n\log n+\sqrt{k}\sqrt{n})=\OO(n\log n+n\sqrt{\varepsilon}\cdot\sqrt{n})
=\OO(n\log n+n^{3/2}\sqrt{\varepsilon})$. For constant $\varepsilon$, this is $\OO(n^{3/2})$.
\end{lemma}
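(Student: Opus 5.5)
The plan is to account for the three steps separately and then add the costs. Throughout, we condition on the event of Lemma~\ref{lem:local-consistency} (probability $\ge 1-1/\poly(n)$), and we assume the $s=\OO(\log n)$ seeds returned by Phase~1 are correct low-defect pairs. By Lemma~\ref{lem:marking-correct}(b) and a union bound over the $s$ calls, this fails with probability only $s/\poly(n)$.

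\emph{Step~1 (signatures).} The key observation is that the bit $(A_H)_{w,\pi^*(v_i)}$ depends only on the pair $(w,i)$ and not on $v$. So we first query the $n\times s$ table $T_{w,i}=(A_H)_{w,\pi^*(v_i)}$ once, at a cost of $ns$ queries. Then, for each non-seed $v$, we query the $s$ entries $(A_G)_{v,v_i}$, for a further $ns$ queries over all $v$. Every signature $\sigma_{v,w}$ is then an XOR of stored bits, computed with no further queries. This gives $2ns=\OO(n\log n)$ queries, with all $n^2$ signatures available classically.

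\emph{Step~2 (matching on $L$).} No queries are needed here. On the conditioning event, Lemma~\ref{lem:local-consistency} gives $|\sigma_{v,\pi^*(v)}|<\sqrt{k}$ and $|\sigma_{v,w'}|>s/4$ for every $v\in L$ and $w'\neq\pi^*(v)$. Hence $\arg\min_w|\sigma_{v,w}|=\pi^*(v)$. One caveat must be handled explicitly: the separation $\sqrt{k}<s/4$ requires $s$ to be large relative to $\sqrt{k}$. I would handle this by comparing only the \emph{mismatch fractions} against the $1/2-\varepsilon/2$ expectation, with a threshold between $d(v)/(n-1)$ and $1/3$, rather than relying on absolute weights. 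If that does not go through, I would enlarge $s$ accordingly.

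\emph{Step~3 (high-defect vertices).} This is the step I expect to be the main obstacle. At most $2\sqrt{k}$ vertices lie outside $L$ (Lemma~\ref{lem:perm-recovery}), and since $\hat\pi$ is already fixed and injective on $L$, their images must lie in the complement $R=[n]\setminus\hat\pi(L)$, with $|R|\le 2\sqrt{k}$. For each high-defect $v$, I would run D\"urr--H\o yer quantum minimum finding, a Grover variant of Theorem~\ref{thm:grover}, over the candidates $w$. The score is the local edit count of $v\mapsto w$ measured against the already-reconstructed pairs on $L$. Minimum finding uses $\OO(\sqrt{n})$ oracle calls (or $\OO(k^{1/4})$ if restricted to $R$). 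The difficulty is that each score evaluation is itself a count over $\Theta(n)$ neighbours, so it is not unit cost. To keep the $\OO(\sqrt{n})$ bound per vertex, I would replace exact scores by comparisons using only the $s$ seed bits already in the Step~1 table. This makes each oracle call query-free, so the Grover search itself costs zero additional input queries, and the stated $\OO(\sqrt{k}\sqrt{n})$ is an upper bound.

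What must then be argued is correctness. For a high-defect $v$ the seed-signature can still identify $\pi^*(v)$ as long as $d(v)$ restricted to the seeds stays below the $\approx s/2$ mismatch level of wrong candidates. This holds in the planted model, because defects are $\mathrm{Bern}(\varepsilon/2)$ noise. Failing that, I would fall back on a final bipartite assignment of the leftover vertices to $R$; this affects only $\OO(\sqrt{k})$ vertices and hence changes $\ed_{\hat\pi}$ by at most $\OO(n\sqrt{k})$. Summing the three steps gives $\OO(n\log n+\sqrt{kn})$. With $k=\varepsilon\binom{n}{2}$ we have $\sqrt{k}=\OO(n\sqrt{\varepsilon})$, which yields the stated $\OO(n\log n+n^{3/2}\sqrt{\varepsilon})$.
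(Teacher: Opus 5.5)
Your Steps~1 and~2 follow the paper. You use the same amortisation: the $n\times s$ table $(A_H)_{w,\pi^*(v_i)}$ is queried once, plus $s$ entries of $A_G$ per vertex. Correctness on $L$ comes from Lemma~\ref{lem:local-consistency}. You are right that the closing inequality $\sqrt{k}<s/4$ of that lemma cannot hold for $k=\varepsilon\binom{n}{2}$ and $s=\OO(\log n)$. Comparing mismatch fractions is the right repair, but you should make its hypotheses explicit.
\begin{enumerate}
\item The seeds must be uniform on $L$ and independent of the defect set $\mathrm{Def}(v)$. This is the hypothesis of Lemma~\ref{lem:local-consistency}; Lemma~\ref{lem:marking-correct}(b) only makes them matching pairs. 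Under it, the correct image has mismatch fraction about $|\mathrm{Def}(v)\cap L|/|L|\lesssim\sqrt{\varepsilon/2}/(1-\sqrt{2\varepsilon})$.
\item $\varepsilon$ must be below a small constant, so that this fraction lies a constant below the wrong-candidate level $1/2-\varepsilon/2$.
\end{enumerate}
Enlarging $s$ is not an option. You would need $s\ge 4\sqrt{k}=\Omega(n\sqrt{\varepsilon})$, and Step~1 would then cost $2ns=\Omega(n^2\sqrt{\varepsilon})$, breaking the claimed bound.

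The genuine gap is Step~3. If you score the candidates in $R$ by the seed signature alone, the $\arg\min$ returns $\pi^*(v)$ only when the fraction of seeds in $\mathrm{Def}(v)$ is a constant below $1/2$. High-defect vertices are defined by $d(v)\ge\sqrt{k}$ with no upper bound. When $|\mathrm{Def}(v)\cap L|\ge|L|/2$, the correct image has seed weight about $s/2$ or more and is not separated from wrong candidates.

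Your planted-model remark does not treat such vertices. Made rigorous, it actually shows something stronger and simpler. Take $\pi^*$ to be the planted permutation. Then $d(v)\sim\mathrm{Bin}(n-1,\varepsilon/2)$ concentrates at $\varepsilon n/2$, well below $\sqrt{k}=\Theta(n\sqrt{\varepsilon})$. By Chernoff and a union bound, $L=[n]$ w.h.p.\ and Step~3 is vacuous. That is the argument to write down.

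The bipartite fallback does not reconstruct $\pi^*$, which is what the lemma claims. The only damage bound you have, $\OO(n\sqrt{k})=\OO(\sqrt{\varepsilon}\,n^2)$, is far larger than the $\varepsilon\binom{n}{2}$ gap between the YES and NO thresholds. So a YES instance could fail Phase~3, and the fallback does not save the algorithm either.

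The paper instead keeps the full local edit distance against $\hat\pi$ on $L$ and charges $\OO(\sqrt{n})$ queries per vertex. Your objection that a score over $n-1$ positions is not an $\OO(1)$-query check is fair. The honest version estimates each score by amplitude estimation inside minimum finding. Because it uses whole rows rather than $s=\OO(\log n)$ sampled bits, it tolerates smaller margins. However, it meets the same obstruction when $|\mathrm{Def}(v)\cap L|\ge|L|/2$. On either route, exact recovery on all of $[n]$ rests on showing that such vertices do not occur.
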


\begin{proof}
The query counts follow from the descriptions above. Correctness on $L$ follows from
Lemma~\ref{lem:local-consistency}. For high-defect vertices, the Grover search over
$n$ candidates uses $\OO(\sqrt{n})$ queries per vertex (checking consistency with the
already-determined part of $\hat\pi$ costs $\OO(1)$ queries per check, with $n-1$
positions to verify). The total cost for $2\sqrt{k}$ such vertices is $\OO(\sqrt{k}\sqrt{n})$.
\end{proof}

\subsection{Phase 3: Quantum Verification}\label{sec:phase3}

Given a candidate permutation $\hat{\pi}$, we verify whether
$\bar{\ed}_{\hat{\pi}}(G,H)\le\varepsilon$ or $\bar{\ed}_{\hat{\pi}}(G,H)\ge 2\varepsilon$.

\begin{lemma}[Quantum verification]\label{lem:verification}
Given $\hat\pi$, one can estimate $\bar{\ed}_{\hat\pi}(G,H)$ to within additive error
$\delta$ with probability $\ge 2/3$ using $\OO(1/\delta)$ queries.
\end{lemma}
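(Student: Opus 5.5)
The plan is to reduce the statement directly to amplitude estimation (Theorem~\ref{thm:amp-est}). Write $p=\bar{\ed}_{\hat\pi}(G,H)$. By Definition~\ref{def:edit-distance}, $p$ is exactly the probability that a uniformly random unordered pair $\{u,v\}\in\binom{[n]}{2}$ is a \emph{discrepancy}, meaning $(A_G)_{uv}\neq(A_H)_{\hat\pi(u)\hat\pi(v)}$. Since $\hat\pi$ is classically known after Phase~2, the indices $\hat\pi(u),\hat\pi(v)$ can be computed reversibly with no queries. Estimating $p$ is therefore a standard mean-estimation task for a Boolean function over $\binom{n}{2}$ positions.

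The steps, in order, are as follows.
\begin{enumerate}[(i)]
\item Build the circuit $\mathcal{A}$. Prepare the uniform superposition $\binom{n}{2}^{-1/2}\sum_{\{u,v\}}\ket{u,v}$ using $\OO(\log n)$ gates and no queries; we may sample ordered pairs with $u<v$, or sample $u\neq v$, which gives the same distribution over unordered pairs. Then compute $\hat\pi(u),\hat\pi(v)$ into ancillas.
\item Query $O_G$ on $\ket{u,v}$ and $O_H$ on $\ket{\hat\pi(u),\hat\pi(v)}$ into two fresh bits. XOR them into a flag qubit, then uncompute the two bits with a second pair of queries.
\end{enumerate}
Measuring the flag of $\mathcal{A}\ket{0}$ yields $1$ with probability exactly $p$, and each application of $\mathcal{A}$ or $\mathcal{A}^\dagger$ costs $\OO(1)$ queries, namely four. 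Applying Theorem~\ref{thm:amp-est} returns $\tilde p$ with $|\tilde p-p|\le\delta$ with probability $\ge 2/3$, using $\OO(1/\delta)$ applications of $\mathcal{A}$ and $\mathcal{A}^\dagger$. The total is $\OO(1/\delta)$ queries.

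The main point needing care is the uncomputation in step~(ii). The garbage registers must be cleaned so that $\mathcal{A}^\dagger$ and the reflections used in amplitude estimation act correctly. This only doubles the per-application query count, so it does not affect the asymptotic bound. We should also check that the precision guarantee in Theorem~\ref{thm:amp-est} is additive $\delta$ independent of $p$. The standard bound is $2\pi\sqrt{p(1-p)}/t+\pi^2/t^2$ after $t$ iterations, which is at most $\delta$ for $t=\Theta(1/\delta)$.

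For use in the algorithm, we will then choose $\delta=\varepsilon/2$. This separates $p\le\varepsilon$ from $p\ge 2\varepsilon$ with $\OO(1/\varepsilon)$ queries. If a higher success probability is needed for a union bound over candidates, we take the median of $\OO(\log n)$ repetitions.
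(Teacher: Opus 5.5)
Your proposal is correct and follows the same route as the paper: encode the discrepancy indicator $f_{\hat\pi}$ on a uniform superposition over $\binom{[n]}{2}$ using a constant number of queries to $O_G$ and $O_H$, then apply amplitude estimation (Theorem~\ref{thm:amp-est}) with $\OO(1/\delta)$ applications. The only difference is your uncomputation of the two query bits, which is harmless but not actually required, since amplitude estimation works for any state-preparation unitary even with garbage entangled with the flag qubit; this is why the paper counts two queries per application of $\mathcal{A}$ rather than your four.
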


\begin{proof}
Define $f_{\hat\pi}\colon\binom{[n]}{2}\to\{0,1\}$ by
$f_{\hat\pi}(\{u,v\})=\mathbf{1}[(A_G)_{uv}\neq(A_H)_{\hat\pi(u)\hat\pi(v)}]$.
The normalized edit distance is
$\bar{\ed}_{\hat\pi}(G,H)=\EE_{(u,v)\sim\text{Unif}(\binom{[n]}{2})}[f_{\hat\pi}(u,v)]$.

Construct a quantum circuit $\mathcal{A}$ that:
\begin{enumerate}
  \item Prepares $\ket{u,v}$ uniformly over $\binom{[n]}{2}$
    ($\OO(\log n)$ gates, $0$ queries).
  \item Queries $(A_G)_{uv}$ ($1$ query).
  \item Computes $\hat\pi(u),\hat\pi(v)$ (classical lookup, $0$ queries).
  \item Queries $(A_H)_{\hat\pi(u)\hat\pi(v)}$ ($1$ query).
  \item XORs the two results into an ancilla ($0$ queries).
\end{enumerate}
Measuring the ancilla yields $1$ with probability $\bar{\ed}_{\hat\pi}(G,H)$. By amplitude
estimation (Theorem~\ref{thm:amp-est}), we can estimate this probability to additive
precision $\delta$ using $\OO(1/\delta)$ applications of $\mathcal{A}$, each costing $2$
queries. Total: $\OO(1/\delta)$ queries. Setting $\delta=\varepsilon/2$ distinguishes the
YES ($\le\varepsilon$) and NO ($\ge 2\varepsilon$) cases.
\end{proof}

\subsection{Main Theorem}\label{sec:main-theorem}

\begin{theorem}[Quantum upper bound for approximate graph isomorphism]\label{thm:main-upper}
There exists a quantum algorithm that solves $(\varepsilon, 2\varepsilon)$-Approximate Graph
Isomorphism (Problem~\ref{prob:agi}) on graphs $G,H\sim\mathcal{G}(n,1/2)$ with $n$
vertices using
\[
  \OO\!\left(\frac{n^{3/2}\log n}{\varepsilon}\right)
\]
queries to the adjacency oracles $O_G, O_H$, succeeding with probability at least $2/3$.
\end{theorem}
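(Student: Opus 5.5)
The plan is to assemble the three phases and add up their costs, using the lemmas proved earlier in this section, and then to check correctness separately in the YES and NO cases. The algorithm first runs Phase~1 (Proposition~\ref{prop:find-marked}) $s=\OO(\log n)$ times to collect seed pairs. It then runs the reconstruction of Lemma~\ref{lem:reconstruction} to obtain a full candidate $\hat\pi$. Finally it runs the verification of Lemma~\ref{lem:verification} with $\delta=\varepsilon/2$, answering \textsc{Yes} iff the estimate is below $3\varepsilon/2$. Each constant-probability step is amplified by $\OO(\log n)$ or $\OO(1)$ repetitions so that a union bound over all steps still leaves total failure probability at most $1/3$.

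\textbf{Query budget.} Phase~1 costs $s\cdot\OO(n^{3/2}/\varepsilon)=\OO(n^{3/2}\log n/\varepsilon)$; this is the dominant term and fixes the target bound. Phase~2 costs $\OO(n\log n+n^{3/2}\sqrt{\varepsilon})$, and since $\sqrt\varepsilon\le 1/\varepsilon$ for $\varepsilon\le 1$ this is within budget. Phase~3 costs $\OO(1/\varepsilon)$, which is negligible. A further $\OO(\log n)$ factor would be needed to boost each marked-vertex search to success probability $1-1/\poly(n)$. I would avoid it by noting that each search succeeds with constant probability and that a failed search can be detected with the marking oracle at cost $\OO(\log n/\varepsilon^2)$. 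Repeating until $s$ good seeds are collected then costs $\OO(s)$ searches in expectation, and Markov's inequality finishes the bound.

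\textbf{Correctness.} In the YES case, Lemma~\ref{lem:marking-correct} shows that, w.h.p., every returned marked vertex is a genuine matching pair $(i,\pi^*(i))$. Lemma~\ref{lem:seed-sufficiency} and Lemma~\ref{lem:reconstruction} then give $\hat\pi=\pi^*$, at least up to the high-defect vertices. Since $\ed_{\pi^*}\le\varepsilon\binom n2$, the estimate falls below $3\varepsilon/2$.

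In the NO case, whatever $\hat\pi$ the algorithm outputs (possibly garbage, possibly not even a permutation, in which case we reject or complete it arbitrarily), every $\hat\pi\in\Sym_n$ satisfies $\bar\ed_{\hat\pi}\ge\bar\ed(G,H)\ge2\varepsilon$. Hence verification rejects with probability $\ge 2/3$ regardless of what Phases~1 and~2 produced. This one-sidedness means NO-case soundness depends only on Lemma~\ref{lem:verification}.

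\textbf{Main obstacle.} The hard step is the YES-case reconstruction. Lemma~\ref{lem:local-consistency} needs the seeds to be uniformly random low-defect vertices, but MNRS returns marked vertices with some non-uniform bias. I would try to argue that the MNRS output distribution over the marked set is close enough to uniform, by symmetry of the walk under the random relabeling $\pi$, that Chernoff still applies. Failing that, I would replace the seeds by marking-verified random vertices.

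A second issue is that the high-defect vertices may be reconstructed incorrectly. This is harmless as long as errors on the $\le2\sqrt k$ vertices add at most $2\sqrt k\cdot n$ mismatches, which is $\OO(\sqrt{\varepsilon}\,n^2)$. That quantity is not obviously below $\varepsilon\binom n2/2$, so I would instead rely on step~3 of Lemma~\ref{lem:reconstruction}, which chooses each image to minimize local edit distance and therefore keeps $\ed_{\hat\pi}\le\ed_{\pi^*}+o(\varepsilon n^2)$.
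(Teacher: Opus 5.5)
Your assembly is the paper's own proof: the same three phases, the same cost accounting with Phase~1 dominant, and the same verification-based NO case. Two of your changes improve on the paper. First, restarting failed MNRS runs repairs the paper's inconsistent success analysis: it claims $1-s/\poly(n)$ from constant-success runs, while Algorithm~\ref{alg:main} answers \textsc{No} on the first failed run. You do need an explicit cap that answers \textsc{No} on timeout, so NO instances stay within budget. Second, your one-sided NO-case argument, including rejecting a non-bijective $\hat\pi$, is cleaner than the paper's.

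The real gap is in the YES case, upstream of the obstacles you flag, and the paper has it too. Everything rests on Proposition~\ref{prop:find-marked}, i.e.\ on $\mu(M)\ge 1/(2n)$, which comes from Lemma~\ref{lem:marking-correct}(a), and that lemma is false. The check samples $(i',j')$ uniformly from the product set $([n]\setminus\{i\})\times([n]\setminus\{j\})$. So each sample is consistent with probability exactly $xy+(1-x)(1-y)=\tfrac12+2(x-\tfrac12)(y-\tfrac12)$, where $x=\deg_G(i)/(n-1)$ and $y=\deg_H(j)/(n-1)$. The marking check is therefore a test on two degrees. W.h.p.\ under $\mathcal{G}(n,1/2)$ its mean is $\tfrac12+O(\log n/n)$ for every product vertex, matching or not. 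The paper's own computation gives $p_{\text{match}}=\tfrac12+\Theta(1/n)$, which lies below the threshold $\tfrac12+\varepsilon/4$, so the Hoeffding step in part~(a) actually shows that matching vertices are rejected. With $r=\Theta(\log n/\varepsilon^2)$ samples every vertex is accepted with probability $n^{-\Omega(1)}$. On YES instances the marked set is then empty w.h.p., MNRS finds nothing, and your loop times out with \textsc{No}. No choice of $r$ or threshold rescues this, because a degree statistic cannot single out $\pi^*(i)$ among the many $j$ of similar degree. The distinguishing information lives in the aligned samples $j'=\pi^*(i')$, only a $1/(n-1)$ fraction, and exploiting them requires already knowing $\pi^*$. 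Phase~1 needs a genuinely different idea, which neither your proposal nor the paper supplies.

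Even granting correct seeds, the two issues you raise are not closed. Lemma~\ref{lem:local-consistency} needs the seeds to be independent of the rows that enter the signatures. A union bound over all $s$-subsets costs a factor $n^{s}$ that $e^{-\Omega(s)}$ cannot absorb. So you would need the MNRS output, conditional on $(G,H)$, to be close to uniform on a set determined by $\pi$ alone; symmetry over the random relabeling gives only marginal uniformity, not this. That lemma's separation step as written (correct weight $<\sqrt{k}<s/4$) also fails, since $\sqrt{k}=\Theta(\sqrt{\varepsilon}\,n)\gg s$. In the correlated model the correct signature has i.i.d.\ $\mathrm{Bern}(\varepsilon/2)$ bits and a wrong one has $\mathrm{Bern}(1/2)$ bits, and that is the comparison to use.

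For the high-defect vertices, your claim that per-vertex local minimization keeps $\ed_{\hat\pi}\le\ed_{\pi^*}+o(\varepsilon n^2)$ is unsupported; independent minimizers need not even form a bijection. It is also unnecessary. Relative to the planted $\pi$, every defect is $\mathrm{Bin}(n-1,\varepsilon/2)$, concentrated near $\varepsilon n/2$, which is below $\sqrt{k}=\Theta(\sqrt{\varepsilon}\,n)$ by a margin of order $n$ for constant $\varepsilon<1$. Hence $L=[n]$ w.h.p., and recovering $\pi$ gives $\bar{\ed}_{\hat\pi}\approx\varepsilon/2$, comfortably inside the verification threshold.
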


\begin{proof}
The algorithm executes the three phases described above
(see Algorithm~\ref{alg:main} for pseudocode).

\medskip\noindent\textbf{Phase 1 (Quantum Walk Search):}
By Proposition~\ref{prop:find-marked}, each invocation finds one marked vertex (a low-defect
matching vertex) using $\OO(n^{3/2}/\varepsilon)$ queries. We repeat to collect
$s=\OO(\log n)$ seed pairs. Cost: $\OO(n^{3/2}\log n/\varepsilon)$ queries.

\medskip\noindent\textbf{Phase 2 (Reconstruction):}
By Lemma~\ref{lem:reconstruction}, reconstruction costs
$\OO(n\log n+n^{3/2}\sqrt{\varepsilon})$ queries. For constant $\varepsilon$, this is
$\OO(n^{3/2})$, dominated by Phase~1.

\medskip\noindent\textbf{Phase 3 (Verification):}
By Lemma~\ref{lem:verification} with $\delta=\varepsilon/2$, verification costs
$\OO(1/\varepsilon)$ queries.

\medskip\noindent\textbf{Total query complexity:}

\begin{table}[H]
\centering
\caption{Query complexity breakdown by phase.}\label{tab:cost}
\begin{tabular}{@{}l l l@{}}
\toprule
\textbf{Phase} & \textbf{Component} & \textbf{Queries} \\
\midrule
1 & MNRS walk search $\times\; s$ seeds & $\OO(n^{3/2}\log n/\varepsilon)$ \\
2a & Signature computation & $\OO(n\log n)$ \\
2b & High-defect vertex resolution (Grover) & $\OO(n^{3/2}\sqrt{\varepsilon})$ \\
3 & Verification (amplitude estimation) & $\OO(1/\varepsilon)$ \\
\midrule
& \textbf{Total} & $\OO(n^{3/2}\log n/\varepsilon)$ \\
\bottomrule
\end{tabular}
\end{table}

\medskip\noindent\textbf{Success probability:}
Phase~1 succeeds with probability $\ge 1-s/\poly(n)\ge 1-1/\poly(n)$ (each MNRS invocation
succeeds with constant probability, and $s=\OO(\log n)$ repetitions with independent
randomness). Phase~2 succeeds with probability $\ge 1-1/\poly(n)$ by
Lemma~\ref{lem:local-consistency}. Phase~3 succeeds with probability $\ge 2/3$ by
Theorem~\ref{thm:amp-est}. By a union bound, the overall success probability is $\ge 2/3$
for large~$n$. Standard amplification (majority vote over $\OO(\log(1/\delta))$ repetitions)
boosts success to $1-\delta$.

\medskip\noindent\textbf{Negative case analysis:}
When $\bar{\ed}(G,H)\ge 2\varepsilon$, no permutation achieves edit distance
$\le\varepsilon\binom{n}{2}$. Hence every matching set $M_\pi$ has internal edge density
$\le 1-2\varepsilon$, and no vertex has high marking score
(Lemma~\ref{lem:marking-correct}(b)). The MNRS search either fails to find any marked
vertex (returning ``NO'') or returns a false positive, which is then rejected by Phase~3
verification. The verification step rejects any $\hat\pi$ with
$\bar{\ed}_{\hat\pi}(G,H)\ge 2\varepsilon$ with probability $\ge 2/3$. Thus the algorithm
correctly outputs NO.
\end{proof}

\begin{algorithm}[t]
\caption{Quantum Approximate Graph Isomorphism Testing}\label{alg:main}
\begin{algorithmic}[1]
\Require Quantum query access to $O_G, O_H$ for graphs $G, H$ on $[n]$; parameter $\varepsilon>0$
\Ensure \textsc{Yes} if $\bar{\ed}(G,H)\le\varepsilon$; \textsc{No} if $\bar{\ed}(G,H)\ge 2\varepsilon$
\Statex
\State $s \gets C_1\lceil\log n\rceil$ \Comment{Number of seed pairs}
\State \textit{Seeds} $\gets \emptyset$
\For{$t = 1$ to $s$}
  \State Run MNRS quantum walk search on $W(P)$ with marking oracle
    (Definition~\ref{def:marking})
  \If{search finds no marked vertex}
    \State \Return \textsc{No} \Comment{No dense matching structure found}
  \EndIf
  \State Measure to obtain candidate pair $(i_t, j_t)$
  \State $\textit{Seeds} \gets \textit{Seeds} \cup \{(i_t, j_t)\}$
\EndFor
\Statex
\State \textbf{// Phase 2: Reconstruction}
\For{each non-seed vertex $v\in[n]$}
  \State Compute signature $\sigma_{v,w}$ for each candidate $w$ using seed queries
  \State Set $\hat\pi(v)\gets\arg\min_w |\sigma_{v,w}|$
\EndFor
\State Resolve remaining unmatched vertices via Grover search
\Statex
\State \textbf{// Phase 3: Verification}
\State Use amplitude estimation to compute $\tilde{p}\approx\bar{\ed}_{\hat\pi}(G,H)$
  with precision $\varepsilon/2$
\If{$\tilde{p}\le 3\varepsilon/2$}
  \State \Return \textsc{Yes}
\Else
  \State \Return \textsc{No}
\EndIf
\end{algorithmic}
\end{algorithm}

\section{Classical Lower Bound}\label{sec:lower}

We prove that any classical randomized algorithm for approximate graph isomorphism requires
$\Omega(n^2)$ queries, demonstrating the polynomial separation from our quantum upper bound.

\subsection{Lower Bound Statement}

\begin{theorem}[Classical lower bound]\label{thm:classical-lower}
Any classical randomized algorithm that solves $(\varepsilon,2\varepsilon)$-Approximate
Graph Isomorphism (Problem~\ref{prob:agi}) with success probability $\ge 2/3$ for constant
$\varepsilon\in(0,1/8)$ requires $\Omega(n^2)$ queries to the adjacency oracles.
\end{theorem}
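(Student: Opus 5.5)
The plan is to apply Yao's minimax principle to the correlated planted model of Remark~\ref{rem:correlation}. Let $\mathcal{D}_{\mathrm{yes}}$ be the planted distribution (with noise rate $\varepsilon/2$) and $\mathcal{D}_{\mathrm{no}}$ the independent pair. The first step is to check that both are valid promise instances with probability $1-o(1)$:
\begin{itemize}
  \item under $\mathcal{D}_{\mathrm{yes}}$, $\ed_{\pi}(G,H)$ is a sum of $\binom{n}{2}$ independent $\mathrm{Bern}(\varepsilon/2)$ variables, so it is at most $\varepsilon\binom{n}{2}$ by Hoeffding (Lemma~\ref{lem:hoeffding});
  \item under $\mathcal{D}_{\mathrm{no}}$, Lemma~\ref{lem:dno-valid} gives $\bar{\ed}\ge 1/4>2\varepsilon$ for $\varepsilon<1/8$.
\end{itemize}
It then suffices to show that every deterministic adaptive algorithm making $q=o(n^2)$ queries produces transcripts whose distributions under the two cases have total variation $o(1)$. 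Conditioning on the promise events costs only an additive $o(1)$.

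Under $\mathcal{D}_{\mathrm{no}}$, every answer is a fresh uniform bit, so the transcript is uniform on $\{0,1\}^q$. Under $\mathcal{D}_{\mathrm{yes}}$, with $\rho=1-\varepsilon$, the likelihood ratio of a transcript has the form
\[
  \mathcal{L} \;=\; \EE_{\pi}\Bigl[\,\prod_{(e,f)} \bigl(1+\rho\,\chi_e\chi_f\bigr)\Bigr],
\]
where $\chi=\pm1$ are the observed bits and the product ranges over pairs consisting of a queried $G$-pair $e$ and a queried $H$-pair $f$ with $f=\pi(e)$. I will bound the $\chi^2$-divergence through the second moment:
\[
  \EE_{\mathrm{no}}[\mathcal{L}^2]-1 \;=\; \EE_{\pi,\pi'}\Bigl[(1+\rho^2)^{|C(\pi,\pi')|}\Bigr]-1,
\]
where $C(\pi,\pi')$ is the set of queried $G$-pairs $e$ with $\pi(e)=\pi'(e)$ and $\pi(e)$ a queried $H$-pair. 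For adaptive algorithms the query sets depend on earlier answers. I handle this with the standard tree or hybrid argument: reveal answers one at a time and bound the per-step change in conditional law, controlling the expectation of the alignment event over the posterior of $\pi$. For a non-adaptive skeleton, a fixed $G$-pair $e$ and $H$-pair $f$ satisfy $\Pr_\pi[\pi(e)=f]=1/\binom{n}{2}$. The expected number of aligned pairs is then about $q_Gq_H/\binom{n}{2}$, and the joint (two-permutation) alignment count has much smaller mean.

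The main obstacle is exactly this step. A naive birthday count only yields $\Omega(n)$, since $q^2/n^2=o(1)$ needs $q=o(n)$. To reach $\Omega(n^2)$ I must show that even when alignments occur, the algorithm cannot tell which ones are genuine. A single aligned bit carries correlation $\rho$ with the truth, but it is hidden among about $q^2/n^2$ spurious candidate alignments whose posterior weights stay nearly uniform.

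My first attempt will be to control the moment generating function of $|C(\pi,\pi')|$ by a Poisson-type bound:
\[
  \EE\bigl[(1+\rho^2)^{|C|}\bigr]\;\le\;\exp\bigl(O(\rho^2\,\EE|C|)\bigr),
\]
using negative dependence of matchings of the uniform permutation. I will then argue that $\EE|C|=O\bigl(q^2/n^4\cdot n^2\bigr)=O(q^2/n^2)$ after accounting for vertex-sharing among the queried pairs. Finally I split into cases according to whether the algorithm concentrates its queries on few vertices (high-degree query graphs, handled by counting paths and stars in the query pattern) or spreads them out.

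I expect the star/row-query case to be the genuinely delicate one. There an algorithm that reads full rows of $G$ and $H$ can compare row sums, and the argument must show that degree fluctuations of order $\sqrt{n}$ cannot be matched without $\Omega(n)$ rows on each side. That amounts to $\Omega(n^2)$ queries, which is where the final bound comes from.
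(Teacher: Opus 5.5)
The step that fails is the core claim of your plan. You need the transcripts under $\mathcal{D}_{\mathrm{yes}}$ and $\mathcal{D}_{\mathrm{no}}$ to be $o(1)$-close for every $q=o(n^2)$, and that is false. So no refinement of the second-moment bound can give $\Omega(n^2)$.

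A non-adaptive classical tester with $O(n\log n)$ queries separates the two distributions:
\begin{enumerate}
\item Choose random $V\subseteq V(G)$ and $W\subseteq V(H)$ of size $m=\sqrt{Cn\log n}$, and read both induced subgraphs. This costs about $m^2=O(n\log n)$ queries.
\item Under $\mathcal{D}_{\mathrm{yes}}$, the set $T=\{u\in V:\pi(u)\in W\}$ has size concentrated around $m^2/n=C\log n$. Under $\pi|_T$, the graphs $G[T]$ and $H[\pi(T)]$ disagree on roughly an $\varepsilon/2$ fraction of pairs.
\item Under $\mathcal{D}_{\mathrm{no}}$, the two induced subgraphs are independent uniform graphs. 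There are at most $m^{2t}=n^{(1+o(1))t}$ partial bijections of size $t=\frac{C}{2}\log n$. Each one has at most $\frac{3\varepsilon}{4}\binom{t}{2}$ disagreements with probability $2^{-\Omega(t^2)}$. A union bound shows that w.h.p.\ no such near-agreement exists once $C$ is a large constant.
\end{enumerate}
The test is exponential-time, but only queries count. So the planted model has classical query complexity between $\Omega(n)$ and $O(n\log n)$.

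This also shows where your computation breaks. For this query pattern, $|C(\pi,\pi')|\ge\binom{Y}{2}$, where $Y$ counts sampled vertices $u$ with $\pi(u)=\pi'(u)\in W$. Now $\Pr[Y\ge y]$ decays only like $n^{-y}$, while $(1+\rho^2)^{\binom{y}{2}}=e^{\Theta(y^2)}$. Hence $\EE(1+\rho^2)^{|C|}$ is dominated by rare large-overlap events, and no Poisson or negative-dependence bound can hold. Your identity also omits the factors $1+\rho^{2\ell}$ contributed by longer alternating cycles of the two alignment matchings. Even granting your MGF bound, your own estimate $\EE|C|=O(q^2/n^2)$ yields only $q=\Omega(n)$, which is the birthday bound you set out to beat. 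The star versus spread-out case split cannot rescue this, because the distinguisher above is spread out.

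For comparison, the paper uses these same two distributions only in its Part 1. There a collision bound gives $d_{\mathrm{TV}}\le 4\varepsilon q^2/n^2$, hence $\Omega(n)$, which is consistent with the above. Its $\Omega(n^2)$ comes from a separate worst-case construction: $H_{\mathrm{yes}}=G^*$, and $H_{\mathrm{no}}$ copies the answers a given deterministic algorithm sees in $A_H$ and is empty elsewhere.

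Be aware that this pair is built from the deterministic algorithm, so Yao's principle does not transfer it to randomized algorithms. It yields only a deterministic lower bound. Indeed, every such $H_{\mathrm{no}}$ has fewer than $\varepsilon\binom{n}{2}/2$ edges while $G^*$ has at least $\binom{n}{2}/4$. Comparing sampled edge densities therefore separates these pairs with $O(1)$ randomized queries.

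A randomized $\Omega(n^2)$ bound needs a single input distribution that defeats all deterministic algorithms simultaneously. Neither the planted correlated model nor the paper's construction is such a distribution.
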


The proof proceeds by constructing two distributions over input pairs $(G,H)$---one where
$\bar{\ed}(G,H)\le\varepsilon$ and one where $\bar{\ed}(G,H)\ge 2\varepsilon$---and showing
that any algorithm making $o(n^2)$ queries cannot distinguish them, by Yao's minimax
principle.

\subsection{Input Distributions}\label{sec:distributions}

\begin{definition}[YES distribution $\mathcal{D}_{\text{yes}}$]\label{def:dyes}
Sample $(G,H)$ as follows:
\begin{enumerate}
  \item Draw $G\sim\mathcal{G}(n,1/2)$ (Erd\H{o}s--R\'enyi random graph).
  \item Draw $\pi\sim\text{Uniform}(\Sym_n)$.
  \item Set $H_0=\pi(G)$ (the graph obtained by relabeling $G$ according to $\pi$).
  \item Independently flip each edge of $H_0$ with probability $p=\varepsilon/2$:
    for each $\{i,j\}\in\binom{[n]}{2}$, set $(A_H)_{ij}=(A_{H_0})_{ij}\oplus B_{ij}$
    where $B_{ij}\sim\text{Bernoulli}(p)$ independently.
  \item Output $(G,H)$.
\end{enumerate}
\end{definition}

\begin{lemma}\label{lem:dyes-valid}
Under $\mathcal{D}_{\text{yes}}$, we have
$\bar{\ed}(G,H)\le\varepsilon$ with probability $\ge 1-e^{-\Omega(n^2)}$.
\end{lemma}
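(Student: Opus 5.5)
The approach is to upper bound $\ed(G,H)$ by the edit distance under one specific permutation, namely the planted $\pi$. We have $\ed(G,H)\le\ed_{\pi}(G,H)$ for every $\pi$, so it suffices to show $\ed_{\pi}(G,H)\le\varepsilon\binom{n}{2}$ with probability $1-e^{-\Omega(n^2)}$.

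First, the number of mismatched pairs under $\pi$ is exactly the number of noise flips. By construction, $(A_{H_0})_{\pi(u)\pi(v)}=(A_G)_{uv}$ for every pair $\{u,v\}$. Hence
\[
  (A_G)_{uv}\neq(A_H)_{\pi(u)\pi(v)} \iff B_{\pi(u)\pi(v)}=1 .
\]
Since $\pi$ induces a bijection on $\binom{[n]}{2}$, this gives
\[
  \ed_\pi(G,H)=\sum_{\{i,j\}}B_{ij}=:X .
\]
This is a sum of $m=\binom{n}{2}$ independent Bernoulli$(\varepsilon/2)$ variables. It is independent of $G$ and $\pi$, so no conditioning on the choice of $\pi$ is needed.

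Second, we bound $X$ with a concentration inequality. We have $\EE[X]=\tfrac{\varepsilon}{2}m$, and $X>\varepsilon m$ requires a deviation of $\tfrac{\varepsilon}{2}m$ above the mean. Hoeffding's inequality (Lemma~\ref{lem:hoeffding}) with $t=\varepsilon/2$ and $b_i-a_i=1$ gives
\[
  \Pr[X>\varepsilon m]\le 2\exp\!\left(-\tfrac{2m^2(\varepsilon/2)^2}{m}\right)=2\exp\!\left(-\tfrac{\varepsilon^2 m}{2}\right).
\]
For constant $\varepsilon$ (the regime of Theorem~\ref{thm:classical-lower}), this is $e^{-\Omega(n^2)}$. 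A multiplicative Chernoff bound would instead give $e^{-\varepsilon m/6}$, which is better for small $\varepsilon$.

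Third, we conclude. On the complementary event, $\bar{\ed}(G,H)\le\ed_\pi(G,H)/m\le\varepsilon$.

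No step is genuinely hard. The only point needing care is the exponent's dependence on $\varepsilon$: the bound is $e^{-\Omega(\varepsilon n^2)}$, which is $e^{-\Omega(n^2)}$ only when $\varepsilon$ is treated as a constant. I would state this explicitly.
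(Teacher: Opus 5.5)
Your proposal is correct and follows essentially the same route as the paper: bound $\ed(G,H)$ by $\ed_\pi(G,H)=\sum_{\{i,j\}}B_{ij}$ for the planted $\pi$, then apply Hoeffding with deviation $\varepsilon/2$. One small consistency point: with Hoeffding as you applied it, the exponent is $\Omega(\varepsilon^2 n^2)$, matching the paper. The $\Omega(\varepsilon n^2)$ in your last paragraph only follows if you switch to the multiplicative Chernoff bound you mention; either way it is $e^{-\Omega(n^2)}$ only for constant $\varepsilon$, as you note.
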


\begin{proof}
By construction, $\ed_\pi(G,H)=\sum_{\{i,j\}}B_{ij}$, where $B_{ij}$'s are i.i.d.\
$\text{Bernoulli}(\varepsilon/2)$. The expected value is
$\EE[\ed_\pi(G,H)]=(\varepsilon/2)\binom{n}{2}$. By Hoeffding's inequality
(Lemma~\ref{lem:hoeffding}):
\[
  \Pr\!\left[\ed_\pi(G,H)\ge\varepsilon\binom{n}{2}\right]
  = \Pr\!\left[\bar{\ed}_\pi(G,H)\ge\varepsilon\right]
  \le \exp\!\left(-2\binom{n}{2}\left(\frac{\varepsilon}{2}\right)^2\right)
  = e^{-\Omega(\varepsilon^2 n^2)}.
\]
Since $\ed(G,H)\le\ed_\pi(G,H)$, the lemma follows.
\end{proof}

\begin{definition}[NO distribution $\mathcal{D}_{\text{no}}$]\label{def:dno}
Sample $(G,H)$ as follows:
\begin{enumerate}
  \item Draw $G\sim\mathcal{G}(n,1/2)$ independently.
  \item Draw $H\sim\mathcal{G}(n,1/2)$ independently.
  \item Output $(G,H)$.
\end{enumerate}
\end{definition}

\begin{lemma}\label{lem:dno-valid}
Under $\mathcal{D}_{\text{no}}$, we have
$\bar{\ed}(G,H)\ge 2\varepsilon$ with probability $\ge 1-e^{-\Omega(n^2)}$
for $\varepsilon<1/8$.
\end{lemma}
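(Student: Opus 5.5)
We prove Lemma~\ref{lem:dno-valid} with a first-moment (union) bound over all permutations, combined with a Hoeffding bound for each fixed permutation. The key point is that $\ed(G,H)<2\varepsilon\binom{n}{2}$ holds only if some $\pi\in\Sym_n$ has $\ed_\pi(G,H)<2\varepsilon\binom{n}{2}$. So it suffices to bound, for each fixed $\pi$, the probability that $\ed_\pi$ is small, and then multiply by $|\Sym_n|=n!\le e^{n\ln n}$.

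\textbf{Step 1 (fixed permutation).} Fix $\pi\in\Sym_n$ and write $m=\binom{n}{2}$. Under $\mathcal{D}_{\text{no}}$, $G$ and $H$ are independent and each has i.i.d.\ Bernoulli$(1/2)$ entries. Since $\pi$ is a bijection, it induces a bijection on $\binom{[n]}{2}$, so the $m$ indicators
\[
X_{uv}=\mathbf{1}\bigl[(A_G)_{uv}\neq(A_H)_{\pi(u)\pi(v)}\bigr]
\]
range over distinct entries of $A_H$. Moreover, each $X_{uv}$ is a function of a distinct pair of independent fair bits, namely one entry of $A_G$ and one entry of $A_H$. Hence the $X_{uv}$ are i.i.d.\ Bernoulli$(1/2)$, and $\ed_\pi(G,H)=\sum_{\{u,v\}}X_{uv}$ has mean $m/2$.

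Because $\varepsilon<1/8$, we have $2\varepsilon<1/4$. Hoeffding's inequality (Lemma~\ref{lem:hoeffding}) with deviation $t=1/2-2\varepsilon>1/4$ then gives
\[
\Pr\bigl[\bar{\ed}_\pi(G,H)<2\varepsilon\bigr]\le \exp\!\bigl(-2m(1/2-2\varepsilon)^2\bigr)\le \exp(-m/8).
\]

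\textbf{Step 2 (union bound).} Summing over all $n!$ permutations,
\[
\Pr\bigl[\bar{\ed}(G,H)<2\varepsilon\bigr]\le n!\,e^{-m/8}\le \exp\!\bigl(n\ln n-n(n-1)/16\bigr)=e^{-\Omega(n^2)},
\]
since $n\ln n=o(n^2)$. Note that $m/8$ is exactly $n(n-1)/16$, which dominates $n\ln n$. This proves the lemma; the same bound also justifies the claim in Section~\ref{sec:motivation} that $\bar{\ed}\ge 1/4$ holds with overwhelming probability, up to adjusting constants.

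\textbf{Main obstacle.} The only delicate point is the independence claim in Step 1. The events for different $\pi$ are highly dependent, but the union bound makes that irrelevant. For a single fixed $\pi$, one must check that the map $\{u,v\}\mapsto\{\pi(u),\pi(v)\}$ is injective. This guarantees that each entry of $A_H$ is used at most once, so the XORs are i.i.d.\ fair coins. After that, the exponent $\Theta(n^2)$ from Hoeffding comfortably beats the $\Theta(n\log n)$ entropy of $\Sym_n$ for every constant $\varepsilon<1/8$, and indeed for every $\varepsilon<1/4$.
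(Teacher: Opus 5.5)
Your proof is correct and follows the paper's argument. For each fixed $\pi$, $\ed_\pi(G,H)$ is a sum of $\binom{n}{2}$ i.i.d.\ Bernoulli$(1/2)$ indicators, Hoeffding gives an $e^{-\Omega(n^2)}$ tail, and a union bound over $n!\le e^{n\ln n}$ permutations finishes. The differences are cosmetic: the paper thresholds at $1/4$ and then uses $2\varepsilon<1/4$, whereas you threshold directly at $2\varepsilon$, and you spell out the bijection on $\binom{[n]}{2}$ behind the independence claim, which the paper leaves implicit.
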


\begin{proof}
For any fixed permutation $\pi\in\Sym_n$, the random variable
$\ed_\pi(G,H)=\sum_{\{i,j\}}|(A_G)_{ij}-(A_H)_{\pi(i)\pi(j)}|$
is a sum of $\binom{n}{2}$ independent Bernoulli$(1/2)$ random variables (since $G$ and $H$
are independent Erd\H{o}s--R\'enyi graphs). Hence $\EE[\bar{\ed}_\pi(G,H)]=1/2$. By Hoeffding's inequality:
\[
  \Pr\!\left[\bar{\ed}_\pi(G,H)\le 1/4\right]
  \le\exp\!\left(-2\binom{n}{2}(1/4)^2\right)
  = e^{-\Omega(n^2)}.
\]
Taking a union bound over all $n!$ permutations:
\[
  \Pr\!\left[\ed(G,H)\le\frac{1}{4}\binom{n}{2}\right]
  \le n!\cdot e^{-\Omega(n^2)}
  \le e^{n\log n-\Omega(n^2)}
  = e^{-\Omega(n^2)}.
\]
Since $2\varepsilon<1/4$ for $\varepsilon<1/8$, we have
$\bar{\ed}(G,H)\ge 1/4>2\varepsilon$ with overwhelming probability.
\end{proof}

\subsection{Indistinguishability Under Few Queries}\label{sec:indistinguishability}

\begin{lemma}[Query transcript indistinguishability]
\label{lem:indistinguishable}
Let $\mathcal{A}$ be any deterministic adaptive algorithm making $q$ queries in total,
of which $q_G$ go to $A_G$ and $q_H=q-q_G$ go to $A_H$.  Let $T_{\mathrm{yes}}$ and
$T_{\mathrm{no}}$ denote the distributions over query-answer transcripts under $D_{\mathrm{yes}}$
and $D_{\mathrm{no}}$ respectively.  Provided $q\leq \binom{n}{2}/2$,
\[
  d_{\mathrm{TV}}\!\left(T_{\mathrm{yes}},\,T_{\mathrm{no}}\right)
  \;\leq\;
  \frac{\varepsilon\,q_G\,q_H}{\binom{n}{2}-q}
  \;\leq\;
  \frac{4\varepsilon q^2}{n^2}.
\]
\end{lemma}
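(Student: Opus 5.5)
The plan is a collision (``bad event'') argument combined with a coupling of the two transcript distributions. Fix the deterministic adaptive algorithm $\mathcal{A}$. Under $D_{\mathrm{no}}$, every answer is a fresh uniform bit independent of all previous answers, since $G$ and $H$ are independent $\mathcal{G}(n,1/2)$ graphs and $\mathcal{A}$ never needs to repeat a query. Under $D_{\mathrm{yes}}$, the entries of $A_G$ are i.i.d.\ uniform. An $H$-entry $(A_H)_{ab}=(A_G)_{\pi^{-1}(a)\pi^{-1}(b)}\oplus B_{ab}$ is also a uniform bit, independent of everything queried so far, unless its preimage pair $\{\pi^{-1}(a),\pi^{-1}(b)\}$ coincides with a pair $\{u,v\}$ already (or later) queried in $A_G$. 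Call such a coincidence a \emph{collision}, and let $E$ be the event that the transcript contains at least one collision. Distinct $H$-pairs have distinct preimage pairs, and the noise bits $B_{ab}$ are independent of everything else. Hence, conditioned on $\neg E$, the YES transcript is a sequence of i.i.d.\ uniform bits, exactly as under NO. The standard coupling then gives $d_{\mathrm{TV}}(T_{\mathrm{yes}},T_{\mathrm{no}})\le\Pr_{\mathrm{yes}}[E]$.

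To bound $\Pr_{\mathrm{yes}}[E]$, I will run the process query by query and charge each new query that could create a first collision. That is either an $H$-query whose preimage hits one of the at most $q_G$ previously queried $G$-pairs, or a $G$-query that hits the preimage of one of the at most $q_H$ previously queried $H$-pairs. The key step is the posterior of $\pi$ given a collision-free history. I claim it is uniform over permutations consistent with the $t<q$ ``avoidance'' constraints. This holds because, off $E$, the answers are independent of $\pi$, so the likelihood is constant over the consistent permutations. The induced map on unordered pairs, $\{a,b\}\mapsto\{\pi^{-1}(a),\pi^{-1}(b)\}$, is then close to uniform on the $\binom{n}{2}$ pairs. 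After conditioning away at most $t$ forbidden pairs, the preimage of a new query lands on any particular allowed pair with probability at most $1/(\binom{n}{2}-q)$. I will justify this by a symmetry/ratio argument: uniform $\pi$ induces a uniform preimage pair, and conditioning on an event removes at most $q$ pairs' worth of mass. A union bound then charges at most $q_G/(\binom{n}{2}-q)$ per $H$-query, or symmetrically per $G$-query. Summing over the $\min$-side pairings gives $\Pr[E]\le q_Gq_H/(\binom{n}{2}-q)$.

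I expect the main obstacle to be the factor $\varepsilon$ in the stated bound. The collision argument naturally yields $q_Gq_H/(\binom{n}{2}-q)$ without it. Conditioned on a collision, the $H$-answer equals the matching $G$-answer with probability $1-\varepsilon/2$, so its law is far from uniform. Thus I would first try to extract $\varepsilon$ by bounding the TV contribution of each collision more finely, through a hybrid argument that replaces one $H$-answer at a time and computes its exact conditional bias. I suspect that for the regime actually used (constant $\varepsilon$) one should simply absorb $\varepsilon$ into constants. The last inequality is routine: $q\le\binom{n}{2}/2$ gives $\binom{n}{2}-q\ge n(n-1)/4$, and $q_Gq_H\le q^2/4$. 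Hence
\[
\frac{\varepsilon q_Gq_H}{\binom{n}{2}-q}\le\frac{\varepsilon q^2}{n(n-1)}\le\frac{4\varepsilon q^2}{n^2},
\]
using $n(n-1)\ge n^2/4$.
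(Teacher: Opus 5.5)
Your proposal does not establish the stated bound, and it cannot: the statement itself is false for small $\varepsilon$. Neither a finer hybrid argument nor ``absorbing $\varepsilon$ into constants'' (which is not available for an explicit inequality) can supply the factor $\varepsilon$ you flag.

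Take the algorithm that queries $(A_G)_{12}$ and then $(A_H)_{12}$, so $q_G=q_H=1$. Under $D_{\mathrm{yes}}$ the preimage $\{\pi^{-1}(1),\pi^{-1}(2)\}$ equals $\{1,2\}$ with probability $1/\binom{n}{2}$. In that case the two answers agree with probability $1-\varepsilon/2$; otherwise they are independent fair bits. Hence $d_{\mathrm{TV}}(T_{\mathrm{yes}},T_{\mathrm{no}})=\frac{1-\varepsilon}{2\binom{n}{2}}$.
\begin{itemize}
\item This exceeds the claimed $\frac{\varepsilon}{\binom{n}{2}-2}$ whenever $\varepsilon<1/3$ and $n$ is large.
\item It even exceeds $4\varepsilon q^2/n^2=16\varepsilon/n^2$ when $\varepsilon\le 1/17$, which lies inside the regime $\varepsilon<1/8$ where the lemma is used (e.g.\ $\varepsilon=0.05$).
\end{itemize}
The per-collision bias your hybrid would compute is exactly $\frac{1-\varepsilon}{2}$, not $\frac{\varepsilon}{2}$.

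The paper's proof obtains $\varepsilon$ only through an error. It correctly writes the collided answer as $\mathrm{Bern}(\varepsilon/2)$ or $\mathrm{Bern}(1-\varepsilon/2)$, then evaluates its distance from $\mathrm{Bern}(1/2)$ as if the law were $\mathrm{Bern}(1/2\pm\varepsilon/2)$.

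What your approach can legitimately give is the $\varepsilon$-free bound $d_{\mathrm{TV}}\le q_Gq_H/\binom{n}{2}\le q^2/n^2$. This still yields the $\Omega(n)$ of Part~1, without the $1/\sqrt{\varepsilon}$. On one point your accounting is more careful than the paper's. You charge $G$-queries that hit preimages of earlier $H$-queries. The paper instead asserts that $G$-answers are independent of $\mathcal{F}_{t-1}$ under $D_{\mathrm{yes}}$, which fails because $G$ is independent of $\pi$ but not of $H$.

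Separately, the step you use to bound $\Pr_{\mathrm{yes}}[E]$ is false as stated: a collision-free history can make the next preimage far from uniform.
\begin{itemize}
\item Let the $G$-queries be the star $\{\{1,x\}:x\neq 1\}$ and the $H$-queries so far be $\{a,y\}$ for all $y\notin\{a,b\}$.
\item Collision-freeness then forces $\pi^{-1}(b)=1$.
\item So the next query $\{a,b\}$ collides with posterior probability $1$, after only $q=2n-3$ queries.
\end{itemize}
The paper's claim that the preimage is uniform given $\mathcal{F}_{t-1}$ has the same defect. Likewise, the YES transcript conditioned on $\neg E$ need not be uniform.

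The fix is to avoid pointwise conditioning. Say $\pi$ \emph{avoids} a transcript $\tau$ if no $H$-position of $\tau$ has its $\pi$-preimage among the $G$-positions of $\tau$. Off $E$, the $q$ answers are distinct $G$-entries XORed with independent noise, so $\Pr_{\mathrm{yes}}[\tau\wedge\neg E]=\Pr_{\mathrm{no}}[\tau]\cdot\Pr_\pi[\pi\text{ avoids }\tau]$. Summing gives $d_{\mathrm{TV}}\le 1-\Pr_{\mathrm{yes}}[\neg E]=\mathbb{E}_{\tau\sim T_{\mathrm{no}}}\Pr_\pi[\pi\text{ does not avoid }\tau]$, with $\pi$ now independent of $\tau$. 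A union bound over the $q_Gq_H$ pairs of ($G$-position, $H$-position) then gives $q_Gq_H/\binom{n}{2}$, with no $\binom{n}{2}-q$ needed.
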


\begin{proof}
Write the transcript as a sequence of random variables
$\mathcal{T}=(Q_1,A_1,\ldots,Q_t,A_t,\ldots,Q_q,A_q)$,
where $Q_t$ specifies which matrix ($G$ or $H$) and which entry to query, and $A_t\in\{0,1\}$
is the answer.  Because $\mathcal{A}$ is deterministic and adaptive,
$Q_t$ is a deterministic function of $(A_1,\ldots,A_{t-1})$.

\paragraph{Chain rule for total variation.}
By the chain rule for total variation (applied iteratively to conditional distributions),
\begin{equation}\label{eq:chain}
  d_{\mathrm{TV}}(T_{\mathrm{yes}},T_{\mathrm{no}})
  \;\leq\;
  \sum_{t=1}^{q}
  \mathbb{E}\!\left[
    d_{\mathrm{TV}}\!\left(
      \mathcal{L}(A_t\mid Q_t,\mathcal{F}_{t-1})_{\mathrm{yes}},\;
      \mathcal{L}(A_t\mid Q_t,\mathcal{F}_{t-1})_{\mathrm{no}}
    \right)
  \right],
\end{equation}
where $\mathcal{F}_{t-1}=(Q_1,A_1,\ldots,Q_{t-1},A_{t-1})$ is the $\sigma$-algebra of the
first $t-1$ steps, and the expectation is over the joint randomness of the input and the
algorithm up to step~$t-1$.  (This follows from the tensorization inequality
$d_{\mathrm{TV}}(P_{1:q},Q_{1:q})\leq\sum_t \mathbb{E}_{P_{1:t-1}}
[d_{\mathrm{TV}}(P_t(\cdot\mid\mathcal{F}_{t-1}),Q_t(\cdot\mid\mathcal{F}_{t-1}))]$,
a standard consequence of the data-processing inequality applied to each step; see,
e.g.,~\cite{Duchi16}.)

\paragraph{Queries to $A_G$.}
Under both $D_{\mathrm{yes}}$ and $D_{\mathrm{no}}$, $G\sim\mathcal{G}(n,\frac{1}{2})$
independently of everything else, so a previously unqueried entry $(A_G)_{ij}$ is
$\mathrm{Bernoulli}(\frac{1}{2})$, independent of $\mathcal{F}_{t-1}$, under both
distributions.  The per-step contribution to~\eqref{eq:chain} is therefore $0$ for all
queries to $A_G$.

\paragraph{Queries to $A_H$: no collision.}
Suppose step $t$ queries $(A_H)_{ij}$.  Define a \emph{collision at step~$t$} as the event
that the entry $(A_G)_{\pi^{-1}(i)\pi^{-1}(j)}$ appears among the (at most $q_G^{(t)}\leq
q_G$) positions of $A_G$ already queried. If no collision occurs at step $t$, then under $D_{\mathrm{yes}}$:
\[
  (A_H)_{ij}
  \;=\;
  \underbrace{(A_G)_{\pi^{-1}(i)\pi^{-1}(j)}}_{\text{not yet queried}}\oplus B_{ij},
\]
where $B_{ij}\sim\mathrm{Bern}(\varepsilon/2)$ is independent of $\mathcal{F}_{t-1}$ and of
$G$.  Since $\pi$ is uniformly random and independent of $G$, the position
$(\pi^{-1}(i),\pi^{-1}(j))$ is uniformly distributed over $\binom{[n]}{2}$; conditioned on no
collision, it is uniform over the at-least $\binom{n}{2}-q_G$ \emph{unqueried} positions of
$A_G$.  Any such entry is an independent $\mathrm{Bern}(\frac{1}{2})$ variable (by the
Erd\H{o}s–R\'enyi structure), also independent of $\mathcal{F}_{t-1}$.  XOR-ing with the
independent $B_{ij}\sim\mathrm{Bern}(\varepsilon/2)$ gives
\[
  (A_H)_{ij}
  \;\sim\;
  \mathrm{Bern}\!\left(\tfrac{1}{2}(1-\tfrac{\varepsilon}{2})+\tfrac{1}{2}\cdot\tfrac{\varepsilon}{2}\right)
  \;=\;
  \mathrm{Bern}(\tfrac{1}{2}).
\]
Under $D_{\mathrm{no}}$, $(A_H)_{ij}\sim\mathrm{Bern}(\frac{1}{2})$ independently of
$\mathcal{F}_{t-1}$ by construction.  Hence the per-step contribution is $0$ conditioned on no
collision.

\paragraph{Queries to $A_H$: collision.}
If a collision does occur at step $t$—meaning $(A_G)_{\pi^{-1}(i)\pi^{-1}(j)}$ was queried at
some earlier step $s<t$ and revealed value $a_s\in\{0,1\}$—then under $D_{\mathrm{yes}}$:
\[
  (A_H)_{ij}=a_s\oplus B_{ij}\;\sim\;\mathrm{Bern}\!\left(\tfrac{\varepsilon}{2}\right)
  \text{ if }a_s=0,\quad
  \mathrm{Bern}\!\left(1-\tfrac{\varepsilon}{2}\right)
  \text{ if }a_s=1.
\]
Under $D_{\mathrm{no}}$, $(A_H)_{ij}\sim\mathrm{Bern}(\frac{1}{2})$.  In both sub-cases,
\[
  d_{\mathrm{TV}}\!\left(\mathrm{Bern}\!\left(\tfrac{1}{2}\pm\tfrac{\varepsilon}{2}\right),\,
                          \mathrm{Bern}\!\left(\tfrac{1}{2}\right)\right)
  \;=\;\tfrac{\varepsilon}{2}.
\]

\paragraph{Bounding the collision probability per step.}
At step $t$, there are at most $q_G^{(t)}\leq q_G$ previously queried positions of $A_G$.
Conditioned on $\mathcal{F}_{t-1}$, the random position $(\pi^{-1}(i),\pi^{-1}(j))$ is
uniform over $\binom{[n]}{2}$, so
\[
  \Pr\!\left[\text{collision at step }t\;\middle|\;\mathcal{F}_{t-1}\right]
  \;\leq\;
  \frac{q_G^{(t)}}{\binom{n}{2}}
  \;\leq\;
  \frac{q_G}{\binom{n}{2}}.
\]

\paragraph{Assembling the bound.}
Substituting into~\eqref{eq:chain} and summing over the $q_H$ queries to $H$:
\[
  d_{\mathrm{TV}}(T_{\mathrm{yes}},T_{\mathrm{no}})
  \;\leq\;
  \sum_{t:\,\text{query to }H}
  \Pr[\text{collision at }t]\cdot\frac{\varepsilon}{2}
  \;\leq\;
  q_H\cdot\frac{q_G}{\binom{n}{2}}\cdot\frac{\varepsilon}{2}
  \;=\;
  \frac{\varepsilon\,q_G\,q_H}{\binom{n}{2}}.
\]
Since $q_G,q_H\leq q$ and $\binom{n}{2}\geq n^2/4$ for $n\geq 2$, and
$\binom{n}{2}-q\geq\binom{n}{2}/2\geq n^2/8$ for $q\leq\binom{n}{2}/2$, we obtain
\[
  d_{\mathrm{TV}}(T_{\mathrm{yes}},T_{\mathrm{no}})
  \;\leq\;
  \frac{\varepsilon q^2}{\binom{n}{2}}
  \;\leq\;
  \frac{4\varepsilon q^2}{n^2}.
  \qedhere
\]
\end{proof}

\subsection{Completing the Lower Bound Proof}






We prove Theorem~\ref{thm:classical-lower} in two parts.  Part~1 establishes an $\Omega(n)$ lower
bound via a total-variation argument on the random distributions $D_{\mathrm{yes}}$ and
$D_{\mathrm{no}}$.  Part~2 lifts this to $\Omega(n^2)$ via an adaptive-query covering
argument that applies to worst-case inputs.

\begin{subclaim}[Independence on no collision]\label{subclm:ind}
For every $t \in [q]$, conditioned on $\mathcal{F}_{t-1}$ and on the event $E_t$ that step
$t$ does \emph{not} produce a collision, the conditional distribution of the answer $A_t$ is
$\mathrm{Bern}(\tfrac{1}{2})$ under both $D_{\mathrm{yes}}$ and $D_{\mathrm{no}}$.
\end{subclaim}

\begin{proof}
By induction on $t$.

\smallskip\noindent
\textit{Base case} ($t=1$).
Any single entry of $A_G$ or $A_H$ is marginally $\mathrm{Bern}(\tfrac{1}{2})$ under both
distributions, and there is no prior history, so the claim holds trivially.

\smallskip\noindent
\textit{Inductive step.}
Assume the claim holds for all steps $< t$.

\textit{Case 1: query to $A_G$.}
Under both $D_{\mathrm{yes}}$ and $D_{\mathrm{no}}$, $G \sim \mathcal{G}(n,\tfrac{1}{2})$
independently of $\mathcal{F}_{t-1}$ and of $H$.  Any previously unqueried entry
$(A_G)_{ij}$ is therefore $\mathrm{Bern}(\tfrac{1}{2})$, independent of $\mathcal{F}_{t-1}$,
under both distributions.

\textit{Case 2: query $(A_H)_{ij}$, conditioned on no collision at step $t$.}
Under $D_{\mathrm{no}}$, $(A_H)_{ij} \sim \mathrm{Bern}(\tfrac{1}{2})$ independently of
everything, so the claim holds immediately.

Under $D_{\mathrm{yes}}$,
$(A_H)_{ij} = (A_G)_{\pi^{-1}(i)\,\pi^{-1}(j)} \oplus B_{ij}$
where $B_{ij} \sim \mathrm{Bern}(\varepsilon/2)$ is independent of $G$, $\pi$, and
$\mathcal{F}_{t-1}$.  The no-collision condition means $(\pi^{-1}(i), \pi^{-1}(j)) \notin
\mathcal{Q}_G$, the set of positions of $A_G$ already queried.  Since
$\pi \sim \mathrm{Uniform}(S_n)$ is independent of $G$, of $\mathcal{Q}_G$, and of the
algorithm's history, the random position $(\pi^{-1}(i), \pi^{-1}(j))$ is uniform over
$\binom{[n]}{2}$; conditioned on it falling outside $\mathcal{Q}_G$ (the no-collision
event), it is uniform over the at least $\binom{n}{2} - |\mathcal{Q}_G|$ unqueried
positions of $A_G$.  Each such position holds an independent $\mathrm{Bern}(\tfrac{1}{2})$
value (by the $\mathcal{G}(n,\tfrac{1}{2})$ structure), independent of $\mathcal{F}_{t-1}$.
XOR-ing with the independent noise bit $B_{ij} \sim \mathrm{Bern}(\varepsilon/2)$:
\[
  (A_H)_{ij}
  \;\sim\;
  \mathrm{Bern}\!\Bigl(
    \tfrac{1}{2}\bigl(1 - \tfrac{\varepsilon}{2}\bigr)
    + \tfrac{1}{2} \cdot \tfrac{\varepsilon}{2}
  \Bigr)
  \;=\;
  \mathrm{Bern}\!\bigl(\tfrac{1}{2}\bigr),
\]
which is identical to the $D_{\mathrm{no}}$ distribution.
\end{proof}

\paragraph{Part 1: $\Omega(n)$ lower bound (average-case).}

By Lemma~\ref{lem:indistinguishable} and Subclaim~\ref{subclm:ind}, the total-variation distance between
the transcript distributions satisfies
\[
  d_{\mathrm{TV}}(T_{\mathrm{yes}}, T_{\mathrm{no}})
  \;\leq\;
  \frac{4\varepsilon q^2}{n^2}.
\]
By Lemmas~\ref{lem:dyes-valid} and~\ref{lem:dno-valid}, $D_{\mathrm{yes}}$ and $D_{\mathrm{no}}$ are
supported on valid YES and NO instances respectively with probability $1 - e^{-\Omega(n^2)}$.
For any algorithm to achieve success probability $\geq 2/3$ on both distributions, its
distinguishing advantage must be at least $1/3$.  Since the advantage is at most
$d_{\mathrm{TV}}(T_{\mathrm{yes}}, T_{\mathrm{no}}) + e^{-\Omega(n^2)}$, we need
\[
  \frac{4\varepsilon q^2}{n^2} \;\geq\; \frac{1}{3} - e^{-\Omega(n^2)}
  \;\geq\; \frac{1}{4}
  \qquad\text{for large }n,
\]
which gives $q \geq n\,/\,(4\sqrt{3\varepsilon}) = \Omega(n)$ for constant $\varepsilon$.

\paragraph{Part 2: $\Omega(n^2)$ lower bound (worst-case adaptive argument).}

We prove that for any constant $\varepsilon \in (0, \tfrac{1}{8})$ and any deterministic
adaptive algorithm $\mathcal{A}$ making $q < \varepsilon\binom{n}{2}/2$ queries in total, there
exist a YES instance and a NO instance that $\mathcal{A}$ cannot distinguish.

\begin{lemma}[Hard-instance construction]\label{lem:hard}
Let $G^*$ be any graph with $|E(G^*)| \geq \binom{n}{2}/4$ (which holds for
$G^* \sim \mathcal{G}(n,\tfrac{1}{2})$ with probability $1 - e^{-\Omega(n^2)}$).
For any deterministic adaptive algorithm $\mathcal{A}$ making $q_H < \varepsilon\binom{n}{2}/2$
queries to $A_H$, there exist graphs $H_{\mathrm{yes}}$ and $H_{\mathrm{no}}$ on $[n]$ such
that:
\begin{enumerate}[(a)]
  \item $\mathrm{ed}(G^*, H_{\mathrm{yes}}) = 0$ \quad (a valid YES instance with
        $\bar{\mathrm{ed}} = 0 \leq \varepsilon$);
  \item $\mathrm{ed}(G^*, H_{\mathrm{no}}) \geq 2\varepsilon\binom{n}{2}$\quad
        (a valid NO instance with $\bar{\mathrm{ed}} \geq 2\varepsilon$);
  \item $\mathcal{A}$ produces the same query transcript on $(G^*, H_{\mathrm{yes}})$ and
        $(G^*, H_{\mathrm{no}})$.
\end{enumerate}
\end{lemma}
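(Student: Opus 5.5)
The plan is an adversary argument followed by the probabilistic method. Fix $G^*$. Since $\mathcal{A}$ is deterministic and $G^*$ is the same in both instances, every query $\mathcal{A}$ makes to $A_G$ receives the same answer in both runs. It therefore suffices to control the answers to queries to $A_H$.

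First I set $H_{\mathrm{yes}}=G^*$ (the identity relabeling). Then $\ed(G^*,H_{\mathrm{yes}})=0$, which gives (a). I run $\mathcal{A}$ on $(G^*,H_{\mathrm{yes}})$ and record the set $\mathcal{Q}_H\subseteq\binom{[n]}{2}$ of positions of $A_H$ it queries, together with the answers. Because $\mathcal{A}$ is deterministic, this run produces a single fixed transcript, and $|\mathcal{Q}_H|\le q_H<\varepsilon\binom{n}{2}/2$.

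Adaptivity is handled by requiring $H_{\mathrm{no}}$ to agree with $H_{\mathrm{yes}}$ on every position of $\mathcal{Q}_H$. Under that requirement I argue by induction on the step $t$ that $\mathcal{A}$ issues the same $t$-th query and receives the same answer on both inputs. Hence the transcripts coincide and the outputs are equal, which gives (c). This step only needs the agreement on $\mathcal{Q}_H$; it does not depend on how the remaining entries are chosen.

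It remains to choose the entries of $H_{\mathrm{no}}$ outside $\mathcal{Q}_H$ so that (b) holds. I fill each of the $F=\binom{n}{2}-|\mathcal{Q}_H|\ge(1-\varepsilon/2)\binom{n}{2}$ free positions with an independent $\mathrm{Bern}(1/2)$ bit and show that the resulting graph is far from $G^*$ with positive probability.

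Fix $\pi\in\Sym_n$. The map $\{u,v\}\mapsto\{\pi(u),\pi(v)\}$ is a bijection on $\binom{[n]}{2}$, so exactly $F$ pairs $\{u,v\}$ have their $H$-image $\{\pi(u),\pi(v)\}$ in a free position. For each such pair the indicator $\mathbf{1}[(A_{G^*})_{uv}\neq(A_{H_{\mathrm{no}}})_{\pi(u)\pi(v)}]$ is an independent $\mathrm{Bern}(1/2)$ variable, because $G^*$ is fixed. Hence $\ed_\pi(G^*,H_{\mathrm{no}})$ stochastically dominates a sum of $F$ independent fair bits.

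By Hoeffding's inequality (Lemma~\ref{lem:hoeffding}),
\[
\Pr\!\left[\ed_\pi(G^*,H_{\mathrm{no}})<\tfrac14\tbinom{n}{2}\right]\le e^{-\Omega(n^2)},
\]
using $F/2\ge\tfrac{15}{32}\binom{n}{2}$ for $\varepsilon<1/8$. A union bound over all $n!\le e^{n\log n}$ permutations then shows, exactly as in Lemma~\ref{lem:dno-valid}, that with probability $1-e^{-\Omega(n^2)}>0$ every $\pi$ has $\ed_\pi\ge\tfrac14\binom{n}{2}>2\varepsilon\binom{n}{2}$. So some realization $H_{\mathrm{no}}$ satisfies (b) while agreeing with $G^*$ on $\mathcal{Q}_H$.

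The hypothesis $|E(G^*)|\ge\binom{n}{2}/4$ is not actually needed in this argument. The main obstacle is the bookkeeping in the previous step: showing that the free positions, once pulled back through an arbitrary $\pi$, still give $\ge(1-\varepsilon/2)\binom{n}{2}$ independent mismatch indicators, and that the $e^{-\Omega(n^2)}$ Hoeffding tail beats the $n!$ union bound uniformly for constant $\varepsilon<1/8$.
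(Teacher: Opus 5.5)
Your argument is correct, provided $n$ is large enough that the $n!$ union bound beats the $e^{-\Omega(n^2)}$ Hoeffding tail; that restriction is harmless for the asymptotic theorem. For (a) and (c) it coincides with the paper's proof: set $H_{\mathrm{yes}}=G^*$, record the adaptive transcript, and force $H_{\mathrm{no}}$ to agree on the queried positions.

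The routes diverge only at (b). The paper fills every unqueried position with $0$, so $H_{\mathrm{no}}$ has at most $q_H$ edges. It then uses the edge-count bound $\ed_\pi(G^*,H_{\mathrm{no}})\ge|E(G^*)|-q_H\ge(\tfrac14-\tfrac{\varepsilon}{2})\binom{n}{2}$ for every $\pi$. That argument is explicit and deterministic, holds for every $n$, and needs no union bound over $\Sym_n$. It is also exactly where the density hypothesis on $G^*$ enters.

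You instead fill the free positions at random and rerun the Hoeffding-plus-union-bound argument of Lemma~\ref{lem:dno-valid}. This is non-constructive and asymptotic. In exchange, it drops the hypothesis $|E(G^*)|\ge\binom{n}{2}/4$ and gives the stronger bound $\ed\ge\tfrac14\binom{n}{2}$.

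That extra strength matters. We have $\tfrac14-\tfrac{\varepsilon}{2}\ge 2\varepsilon$ only when $\varepsilon\le\tfrac1{10}$, and the paper's closing step ``$\ge\tfrac18\binom{n}{2}>2\varepsilon\binom{n}{2}$'' in fact needs $\varepsilon<\tfrac1{16}$. The gap is genuine: if $|E(G^*)|$ is near $\binom{n}{2}/4$ and $\mathcal{A}$ spends its $H$-queries on edges of $G^*$, the zero-filled $H_{\mathrm{no}}$ is a subgraph of $G^*$ at distance only $|E(G^*)|-q_H$. So the paper's construction does not cover the whole range $\varepsilon<\tfrac18$, while yours does.

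If you want a deterministic proof with the same generality, fill the free positions with all $0$s when $|E(G^*)|\ge\binom{n}{2}/2$ and with all $1$s otherwise. The edge-count difference alone is then at least $(\tfrac12-\tfrac{\varepsilon}{2})\binom{n}{2}\ge 2\varepsilon\binom{n}{2}$, for every $n$, every $G^*$, and every $\varepsilon\le\tfrac15$.
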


\begin{proof}
Since $G^*$ is fixed, $\mathcal{A}$'s queries to $A_G$ return determined values.
$\mathcal{A}$'s queries to $A_H$ are adaptive but still form a well-defined sequence of
positions once $H$ is specified.  We construct $H_{\mathrm{yes}}$, then $H_{\mathrm{no}}$.

\medskip\noindent
\textbf{Constructing $H_{\mathrm{yes}}$.}
Set $H_{\mathrm{yes}} = G^*$ (the identity isomorphism).  Then
$\mathrm{ed}(G^*, H_{\mathrm{yes}}) = 0$.  Running $\mathcal{A}$ on $(G^*, H_{\mathrm{yes}})$
produces a sequence of $q_H$ adaptive queries to $A_{H_{\mathrm{yes}}}$ at positions
$\mathcal{S}_H = \{(i_1, j_1), \ldots, (i_{q_H}, j_{q_H})\} \subseteq \binom{[n]}{2}$,
with answers $\mathbf{a} = (a_1, \ldots, a_{q_H}) \in \{0,1\}^{q_H}$.
(These are well-defined since $H_{\mathrm{yes}} = G^*$ is fixed.)

\medskip\noindent
\textbf{Constructing $H_{\mathrm{no}}$.}
Define $H_{\mathrm{no}}$ as the graph with adjacency matrix:
\[
  (A_{H_{\mathrm{no}}})_{ij}
  \;=\;
  \begin{cases}
    a_k & \text{if } (i,j) = (i_k, j_k) \in \mathcal{S}_H, \\
    0   & \text{otherwise.}
  \end{cases}
\]
That is, $H_{\mathrm{no}}$ agrees with $H_{\mathrm{yes}}$ on every queried position and has
no edges outside $\mathcal{S}_H$.

\medskip\noindent
\textbf{Verifying property (c).}
Since $H_{\mathrm{no}}$ returns the same answer $a_k$ at every position $(i_k, j_k) \in
\mathcal{S}_H$, and these are exactly the positions $\mathcal{A}$ queries adaptively (the
adaptive query sequence is the same because each query $(i_k, j_k)$ and its answer $a_k$
are identical), $\mathcal{A}$ produces the same transcript on both inputs.

\medskip\noindent
\textbf{Verifying property (b).}
For any permutation $\pi \in S_n$:
\begin{align*}
  \mathrm{ed}_\pi(G^*, H_{\mathrm{no}})
  &= \bigl|\{(i,j) \in \tbinom{[n]}{2} : (A_{G^*})_{ij} \neq (A_{H_{\mathrm{no}}})_{\pi(i)\pi(j)}\}\bigr| \\
  &\geq \bigl|\{(i,j) : (A_{G^*})_{ij} = 1 \text{ and } (\pi(i),\pi(j)) \notin \mathcal{S}_H\}\bigr| \\
  &\geq |E(G^*)| - |\mathcal{S}_H| \\
  &\geq \tfrac{1}{4}\tbinom{n}{2} - q_H
  \;\geq\; \tfrac{1}{4}\tbinom{n}{2} - \tfrac{\varepsilon}{2}\tbinom{n}{2}
  \;=\; \bigl(\tfrac{1}{4} - \tfrac{\varepsilon}{2}\bigr)\tbinom{n}{2}.
\end{align*}
For $\varepsilon < \tfrac{1}{8}$, this is at least $\tfrac{1}{8}\binom{n}{2} > 2\varepsilon\binom{n}{2}$.
Since the bound holds for every $\pi$, we have
$\mathrm{ed}(G^*, H_{\mathrm{no}}) = \min_\pi \mathrm{ed}_\pi(G^*, H_{\mathrm{no}}) \geq 2\varepsilon\binom{n}{2}$.
\end{proof}

\begin{proof}[Completing the $\Omega(n^2)$ proof]
By Lemma~\ref{lem:hard}, for every deterministic algorithm $\mathcal{A}$ with $q < \varepsilon\binom{n}{2}/2$
queries, and for a typical $G^*$ (satisfying $|E(G^*)| \geq \binom{n}{2}/4$, which holds with
probability $1-e^{-\Omega(n^2)}$ for $G^* \sim \mathcal{G}(n,\tfrac{1}{2})$), there exist a
YES instance $(G^*, H_{\mathrm{yes}})$ and a NO instance $(G^*, H_{\mathrm{no}})$ on which
$\mathcal{A}$ produces the same transcript.  Therefore $\mathcal{A}$ outputs the same answer
for both, and must err on at least one.

Since every deterministic algorithm with $q < \varepsilon\binom{n}{2}/2$ queries fails on at
least one valid input, by Yao's minimax principle~\cite{Yao77} any \emph{randomized}
algorithm with $q < \varepsilon\binom{n}{2}/2$ queries fails with probability at least
$\tfrac{1}{2} > \tfrac{1}{3}$ on at least one of $\{(G^*, H_{\mathrm{yes}}), (G^*, H_{\mathrm{no}})\}$.

Since $\varepsilon\binom{n}{2}/2 = \Omega(n^2)$ for constant $\varepsilon$, any randomized
algorithm achieving success probability $\geq \tfrac{2}{3}$ on all valid inputs requires
$q = \Omega(n^2)$ queries.
\end{proof}

\begin{remark}\label{rem:lb-discussion}
The Part~1 and Part~2 arguments are complementary.  Part~1 shows that the \emph{planted
correlated model} $D_{\mathrm{yes}}$ vs.\ $D_{\mathrm{no}}$ is hard on average, with the
difficulty scaling as $\Omega(n/\sqrt{\varepsilon})$.  Part~2 shows that specific
\emph{worst-case} instances require $\Omega(n^2)$ queries regardless of the algorithm's
strategy: a graph with $\Omega(n^2)$ edges cannot be distinguished from an empty-looking
adversary-chosen graph without reading a constant fraction of $A_H$.  Together they
establish Theorem~\ref{thm:classical-lower} in full.  The quantum upper bound of
$O(n^{3/2}\log n/\varepsilon)$ (Theorem~\ref{thm:main-upper}) therefore yields a polynomial
quantum speedup of $\tilde{\Omega}(\sqrt{n})$ in the worst-case query model.
\end{remark}

This completes the proof of Theorem~\ref{thm:classical-lower}.\qed

\begin{corollary}[Quantum speedup]\label{cor:speedup}
The quantum query complexity of $(\varepsilon,2\varepsilon)$-Approximate Graph
Isomorphism is $\OO(n^{3/2}\log n/\varepsilon)$ for constant $\varepsilon$
(Theorem~\ref{thm:main-upper}),
while the classical randomized query complexity is $\Theta(n^2)$
(Theorem~\ref{thm:classical-lower}). This establishes a
polynomial quantum speedup of $\tilde{\Omega}(n^{1/2})$.
\end{corollary}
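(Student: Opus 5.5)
The plan is to assemble the corollary from three ingredients: the quantum upper bound of Theorem~\ref{thm:main-upper}, the classical lower bound of Theorem~\ref{thm:classical-lower}, and a trivial matching classical upper bound. For the quantum side, Theorem~\ref{thm:main-upper} directly gives $\OO(n^{3/2}\log n/\varepsilon)$ queries with success probability $\ge 2/3$. For constant $\varepsilon$ this is $\OO(n^{3/2}\log n)$.

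For the classical side I would first supply the $\OO(n^2)$ upper bound, which the paper has not stated explicitly. A deterministic algorithm queries all $2\binom{n}{2}$ entries of $A_G$ and $A_H$. It then computes $\ed(G,H)=\min_{\pi}\ed_\pi(G,H)$ by brute force over $\Sym_n$; the running time is exponential, but this costs no further queries. It outputs \textsc{Yes} iff $\bar{\ed}(G,H)\le\varepsilon$, which is exactly correct on every promise input. Combining this with the $\Omega(n^2)$ bound of Theorem~\ref{thm:classical-lower}, valid for constant $\varepsilon\in(0,1/8)$, yields the classical complexity $\Theta(n^2)$. Dividing, the separation is $n^2/(n^{3/2}\log n)=\Omega(\sqrt{n}/\log n)=\tilde{\Omega}(n^{1/2})$.

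The main obstacle is checking that the two bounds speak about the \emph{same} problem. Theorem~\ref{thm:main-upper} is stated for inputs $G,H\sim\mathcal{G}(n,1/2)$ in the correlated planted model. The $\Omega(n^2)$ part of Theorem~\ref{thm:classical-lower} comes from Part~2 (Lemma~\ref{lem:hard}), which instead uses a worst-case instance $H_{\mathrm{no}}$. That graph has no edges outside the queried set, so it is very atypical for $\mathcal{G}(n,1/2)$. The distributional argument of Part~1 gives only $\Omega(n/\sqrt{\varepsilon})$, which is below $n^{3/2}$.

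To close this gap, I would first try to fix a common input class. One option is to extend the quantum algorithm to worst-case inputs. That seems hard, since Lemma~\ref{lem:local-consistency} and Lemma~\ref{lem:marking-correct} rely on the independence structure of $\mathcal{G}(n,1/2)$. The other option is to strengthen the classical lower bound so that it holds on $\mathcal{G}(n,1/2)$-typical instances. Failing both, the corollary should be stated with the input models made explicit, as a comparison between the quantum bound on the planted model and the worst-case classical bound. This consistency check, rather than the arithmetic, is where I expect the argument to need care.
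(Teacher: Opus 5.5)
Your assembly matches the paper's. The corollary comes with no argument beyond citing Theorem~\ref{thm:main-upper} and Theorem~\ref{thm:classical-lower}, and your read-everything algorithm supplies the $\OO(n^2)$ half of $\Theta(n^2)$, which the paper only notes in passing when describing the query model. Your model-mismatch objection is correct, and it is a genuine gap in the corollary itself rather than in your reading. Theorem~\ref{thm:main-upper} is proved only for $G,H\sim\mathcal{G}(n,1/2)$, via Lemmas~\ref{lem:local-consistency} and~\ref{lem:marking-correct}, which use row independence. The $\Omega(n^2)$, by contrast, comes from the worst-case pairs of Lemma~\ref{lem:hard}. As you acknowledge, neither repair route is carried out, so the proposal does not prove the statement.

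The gap is wider than you describe, and it also defeats your fallback. The step you do accept, that the classical \emph{randomized} complexity is $\Omega(n^2)$, is not established even for worst-case inputs. Lemma~\ref{lem:hard} builds $H_{\mathrm{no}}$ from the query set of one fixed deterministic $\mathcal{A}$, so it proves only a \emph{deterministic} lower bound. Yao's principle needs a single input distribution that defeats every deterministic algorithm at once, and none is given. None can be found among these instances either. Every $H_{\mathrm{no}}$ has fewer than $\varepsilon\binom{n}{2}/2$ edges while $|E(G^*)|\ge\binom{n}{2}/4$. So a randomized algorithm that compares edge densities estimated from $\OO(1)$ uniformly random entries of $A_G$ and $A_H$ is correct with probability $\ge 2/3$ on every pair in the family.

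The only randomized classical bound actually proved is Part~1's $\Omega(n)$. Even there, a collision moves the YES answer to $\mathrm{Bern}(\varepsilon/2)$ or $\mathrm{Bern}(1-\varepsilon/2)$, which is at total-variation distance $(1-\varepsilon)/2$ from $\mathrm{Bern}(1/2)$, not $\varepsilon/2$. So the $1/\sqrt{\varepsilon}$ factor you quote is spurious. Since $\Omega(n)$ is below $n^{3/2}\log n$, no $\tilde{\Omega}(n^{1/2})$ separation follows in any single model. Making the models explicit would not turn the two bounds into a speedup claim. A new ingredient is needed, e.g.\ an $\Omega(n^2)$ randomized lower bound for distinguishing $\mathcal{D}_{\text{yes}}$ from $\mathcal{D}_{\text{no}}$, or a quantum algorithm valid on worst-case inputs.

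You also take Theorem~\ref{thm:main-upper} on trust, and that ingredient is unreliable as well. In Lemma~\ref{lem:marking-correct}(a), the computed rate $p_{\text{match}}=\tfrac12+\OO(1/n)$ lies below the marking threshold $\tfrac12+\varepsilon/4$ for constant $\varepsilon$, so part (a) fails.
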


\section{Extensions}\label{sec:extensions}

We extend our framework to alternative notions of graph similarity and to richer graph models.

\subsection{Spectral Similarity}\label{sec:spectral}

An alternative measure of graph similarity uses the eigenvalues of the graph Laplacian.

\begin{definition}[Spectral distance]\label{def:spectral-distance}
For graphs $G$ and $H$ on $n$ vertices with Laplacian eigenvalues
$0=\lambda_1(G)\le\cdots\le\lambda_n(G)$ and $0=\lambda_1(H)\le\cdots\le\lambda_n(H)$,
define the \emph{spectral distance}
\[
  d_{\text{spec}}(G,H) = \left(\sum_{i=1}^{n}(\lambda_i(G)-\lambda_i(H))^2\right)^{1/2}.
\]
\end{definition}

Spectral distance is permutation-invariant and thus bypasses the combinatorial optimization
over $\Sym_n$ inherent in edit distance. However, it is a weaker notion: non-isomorphic
graphs can be co-spectral ($d_{\text{spec}}=0$).

\begin{problem}[$(\alpha,\beta)$-Spectral Similarity]\label{prob:spectral}
Given quantum query access to $O_G,O_H$, distinguish:
\begin{itemize}
  \item \textbf{YES:} $d_{\text{spec}}(G,H)\le\alpha$;
  \item \textbf{NO:} $d_{\text{spec}}(G,H)\ge\beta$.
\end{itemize}
\end{problem}

\begin{theorem}[Quantum algorithm for spectral similarity]\label{thm:spectral-alg}
Problem~\ref{prob:spectral} can be solved with
$\OO(n^{3/2}\log n/(\beta-\alpha))$ queries.
\end{theorem}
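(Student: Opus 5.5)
The plan is to avoid any search over $\Sym_n$ and work directly with the Laplacian spectra through block-encodings, since $d_{\text{spec}}$ is permutation-invariant. I will rewrite the target quantity as a distance between the two empirical spectral measures. Let $\mu_G=\frac1n\sum_i\delta_{\lambda_i(G)/2n}$, supported in $[0,1]$ since $\lambda_n\le 2(n-1)$. For sorted spectra the sum $\sum_i(\lambda_i(G)-\lambda_i(H))^2$ equals $4n^3\,W_2(\mu_G,\mu_H)^2$. Deciding $d_{\text{spec}}\le\alpha$ versus $\ge\beta$ therefore becomes deciding whether $W_2(\mu_G,\mu_H)$ is at most $\alpha/(2n^{3/2})$ or at least $\beta/(2n^{3/2})$. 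The scale factor $n^{3/2}$ is the natural source of the $n^{3/2}$ in the claimed bound.

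\textbf{Step 1 (block-encoding with $\OO(1)$ queries).} Using $O_G$, build a block-encoding of $A_G/n$ with $\OO(1)$ queries in the standard way: a uniform superposition over columns, one query, and a controlled rotation. The degree matrix $D_G/n$ is diagonal with entries equal to the row averages of $A_G$. I will block-encode it by a linear combination of unitaries, again with $\OO(1)$ queries per use. Combining the two gives a block-encoding of $L_G/(2n)$, and likewise for $H$.

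\textbf{Step 2 (moment and CDF estimation).} Apply quantum singular value transformation (QSVT) with a degree-$m$ polynomial $p$ to the block-encoding, act on half of a maximally entangled state, and run amplitude estimation (Theorem~\ref{thm:amp-est}). This estimates $\int p\,d\mu_G=\frac1n\operatorname{tr}p(L_G/2n)$ to additive error $\eta$ using $\OO(m/\eta)$ queries. I will choose $p$ as Jackson-damped Chebyshev approximations of the step functions $\mathbf 1[x\le\theta]$, on a grid of $\OO(\log n)$ thresholds $\theta$ (or a median trick). From these estimates I recover the CDFs $F_G,F_H$ in a smoothed Lévy-type metric.

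\textbf{Step 3 (test).} Estimate $W_2^2$ through the quantile coupling $\int_0^1(F_G^{-1}-F_H^{-1})^2$ computed from the approximate CDFs. Accept iff the estimate lies below the midpoint of the two thresholds. The union bound over the grid costs the $\log n$ factor.

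The main obstacle is the precision bookkeeping in Steps 2--3. The required resolution of $W_2$ is $(\beta-\alpha)/n^{3/2}$. A naive polynomial-approximation argument then needs degree $m$ and accuracy $\eta$ whose product scales like $n^{3/2}/(\beta-\alpha)$. This is exactly the budget, so there is no slack. Moreover, the Lipschitz loss when passing from smoothed CDFs to $W_2$ could cost an extra polynomial factor.

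My first attempt at this obstacle is to exploit the $\mathcal{G}(n,1/2)$ structure used throughout the paper. For such graphs, Weyl's inequality gives $\lambda_i(L)=\deg_{(i)}\pm\OO(\sqrt n)$, where $\deg_{(i)}$ denotes the $i$-th smallest degree. So the bulk of the spectral mismatch is carried by the sorted degree sequences. These can be compared using quantum counting on $\OO(\sqrt n)$ sampled rows at $\OO(\sqrt n)$ queries each, for $\OO(n)$ per graph. The polynomial-moment machinery is then needed only for the $\OO(\sqrt n)$-scale residual, where the required accuracy is correspondingly weaker.

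If this decomposition does not close the gap, the fallback is to prove the theorem only for the random-input setting of Theorem~\ref{thm:main-upper}. In that case the rigorous inequality I would still rely on is the Hoffman--Wielandt bound $d_{\text{spec}}(G,H)\le\min_\pi\norm{L_G-P_\pi L_HP_\pi^\top}_F$. This handles the YES direction by reusing the permutation reconstructed in Phases~1--2 of Algorithm~\ref{alg:main}.
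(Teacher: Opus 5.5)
\textbf{Your proposal has a genuine gap: it stops exactly where the proof has to happen.} Steps~1--2 are reasonable tools. Step~3, however, is never carried out: it must turn trace estimates of cost $\OO(m/\eta)$ into a test of $W_2$ at resolution $(\beta-\alpha)/(2n^{3/2})$. The accounting you sketch for it covers only a uniform shift of the spectrum. The measures $\mu_G,\mu_H$ consist of atoms of mass $1/n$, and $W_2$ reacts to moving a single atom. Any bounded trace functional $\frac1n\operatorname{tr}p(L/2n)$ registers such a move only at the $\OO(1/n)$ level, which forces $\eta\lesssim 1/n$ in the worst case. So $\OO(\log n)$ smoothed thresholds cannot recover the quantile functions, and the budget does not close.

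Neither patch fixes this. In the Weyl reduction, the residuals $\lambda_i(L)-\deg_{(i)}$ are not the spectrum of any matrix you can block-encode, because eigenvalues do not add. They can contribute up to order $n$ to $d_{\text{spec}}$, and they must still be resolved to accuracy $\beta-\alpha$. Quantum counting with $\OO(\sqrt n)$ queries per row gives degrees only to $\pm\OO(\sqrt n)$, the same scale as the residual, and only for $\OO(\sqrt n)$ of the $n$ rows. The fallback changes the statement, since Problem~\ref{prob:spectral} is posed for arbitrary $G,H$. It is also one-sided:
\begin{itemize}
\item Hoffman--Wielandt gives only $d_{\text{spec}}(G,H)\le\norm{L_G-P_\pi L_HP_\pi^\top}_F$, so it can never certify a NO instance.
\item A permutation with small Frobenius defect need not exist when $d_{\text{spec}}\le\alpha$ (cospectral non-isomorphic graphs).
\item Even in the correlated model, $\norm{L_G-P_\pi L_HP_\pi^\top}_F^2\ge 2\ed_\pi(G,H)=\Omega(\varepsilon n^2)$, so it certifies nothing below $\alpha\approx\sqrt{\varepsilon}\,n$.
\end{itemize}

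\textbf{More fundamentally, no bookkeeping can rescue the bound as stated, because it is false for large gaps.} Let $K=\lfloor n/2\rfloor$ and let $G$ be a $K$-clique plus $n-K$ isolated vertices; $G$ is fixed, so $O_G$ costs nothing. For $x\in\{0,1\}^n$ let $H_x$ be the clique on $\{i:x_i=1\}$, so $(A_{H_x})_{ij}=x_ix_j$ costs $\OO(1)$ queries to $x$.
\begin{itemize}
\item If $|x|=K$, then $H_x\cong G$ and $d_{\text{spec}}=0$.
\item If $|x|=K+1$, the sorted spectra are $0$ (multiplicity $n-K+1$) and $K$ (multiplicity $K-1$), versus $0$ (multiplicity $n-K$) and $K+1$ (multiplicity $K$). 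Hence $d_{\text{spec}}^2=(K+1)^2+K-1>K^2$.
\end{itemize}
With $\alpha=0$ and $\beta=K$, the theorem promises $\OO(\sqrt n\log n)$ queries. Yet distinguishing $|x|=K$ from $|x|=K+1$ requires $\Omega(\sqrt{K(n-K)})=\Omega(n)$ quantum queries, by the polynomial-method bound for exact counting (the one behind the $\Omega(n)$ bound for majority). Using a clique of size $\beta\le n/2$ instead refutes the bound for every $\beta\gg(n\log n)^{2/3}$ (with $\alpha=0$).

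\textbf{Comparison with the paper.} The paper's sketch takes a different route from yours. It draws $\OO(n\log n)$ phase-estimation samples from the maximally mixed state (a coupon-collector argument) and resolves each eigenvalue to $\eta=(\beta-\alpha)/(4\sqrt n)$, so the sorted $\ell_2$ error per graph is at most $(\beta-\alpha)/4$. It reaches the stated bound only by charging $\OO(1/\eta)$ block-encoding applications per run. That ignores the $\Theta(n)$ normalization of the block-encoding (your $2n$, the paper's $\lambda_{\max}$), which you correctly carried. With that normalization each run costs $\OO(n/\eta)$, and the total becomes $\OO(n^{5/2}\log n/(\beta-\alpha))$, consistent with the lower bound above. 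So your sense that the precision budget does not close is correct: what is missing is not a cleverer estimator but a corrected statement.
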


\begin{proof}[Proof sketch]
The algorithm proceeds by quantum eigenvalue estimation. Using the block-encoding framework
\cite{GilyenSLW2019}, we construct a block encoding of $L_G/\lambda_{\max}$ from the
adjacency oracle $O_G$ with $\OO(1)$ queries per application.

To estimate the spectrum, we use quantum phase estimation (QPE)
\cite{Kitaev1995} on the block-encoded Laplacian. Each QPE run, starting from a random
state $\ket{\psi}$, samples an eigenvalue $\lambda_i$ with probability
$|\braket{u_i|\psi}|^2$, where $\{u_i\}$ are the eigenvectors. For a Haar-random starting
state, each eigenvalue is sampled with probability $1/n$ in expectation. To estimate all
$n$ eigenvalues to precision $\eta$, we repeat QPE $\OO(n\log n)$ times (a coupon-collector
argument ensures all eigenvalues are seen w.h.p.), with each run costing $\OO(1/\eta)$
applications. The total query cost is $\OO(n\log n/\eta)$ per graph.

\smallskip
\noindent\emph{Remark:} For eigenvalues with small multiplicity, the coupon-collector
overhead can be avoided using the eigenvalue sampling algorithm of
Gily\'{e}n et al.\ \cite{GilyenSLW2019}, which estimates the \emph{empirical spectral
distribution} directly at cost $\OO(\sqrt{n}/\eta)$ per sample. For our purposes, the
simpler repeated-QPE approach suffices.

The same procedure on $H$ yields $\tilde{\lambda}_i(H)$. Setting $\eta=(\beta-\alpha)/(4\sqrt{n})$
ensures $|\tilde{d}_{\text{spec}}-d_{\text{spec}}|\le(\beta-\alpha)/2$, which suffices for
the decision problem. The total query cost is $\OO(n\log n\cdot 4\sqrt{n}/(\beta-\alpha))
=\OO(n^{3/2}\log n/(\beta-\alpha))$.
\end{proof}

\begin{remark}
The spectral distance approach has the advantage of not requiring a search over permutations.
However, for GI-hard instances (where co-spectral non-isomorphic graphs exist), the spectral
distance cannot distinguish $d_{\text{spec}}=0$ from $\ed>0$.
\end{remark}

\subsection{Weighted Graphs}\label{sec:weighted}

\begin{definition}[Weighted graph edit distance]
For weighted graphs $G,H$ with weight functions $w_G,w_H\colon\binom{[n]}{2}\to[0,1]$,
define
\[
  \ed^w_\pi(G,H)=\sum_{\{u,v\}\in\binom{[n]}{2}}|w_G(u,v)-w_H(\pi(u),\pi(v))|,
  \qquad
  \ed^w(G,H)=\min_{\pi\in\Sym_n}\ed^w_\pi(G,H).
\]
\end{definition}

\begin{theorem}[Weighted approximate GI]\label{thm:weighted}
The quantum algorithm of Theorem~\ref{thm:main-upper} extends to weighted graphs with
query complexity $\OO(n^{3/2}\log n/\varepsilon)$, where the oracle
returns $w_G(i,j)$ and $w_H(i,j)$ (each encoded in $\OO(\log(1/\varepsilon))$ bits) per query.
\end{theorem}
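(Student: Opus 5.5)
We would prove Theorem~\ref{thm:weighted} by re-running the three-phase argument of Theorem~\ref{thm:main-upper}, replacing every Boolean comparison $(A_G)_{ii'}=(A_H)_{jj'}$ with a real-valued similarity score. The natural choice is $s(i,i';j,j')=1-|w_G(i,i')-w_H(j,j')|\in[0,1]$. This is the quantity whose complement, summed over matched pairs, gives $\ed^w_\pi$. Each oracle call returns an $\OO(\log(1/\varepsilon))$-bit weight, so $s$ is computable from two queries. Rounding to that precision perturbs $\bar{\ed}^w_\pi$ by at most $\varepsilon/8$, which we absorb into the gap.

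\textbf{Phase 1.} We define a weighted product walk. From $(i,j)$ we propose a uniform $(i',j')$ and accept with probability $s(i,i';j,j')$. The acceptance is symmetric in the two endpoints, so Lemma~\ref{lem:stationary} carries over verbatim: $P$ is reversible with uniform $\mu$. Lemma~\ref{lem:walk-impl} also carries over, since the acceptance amplitude $\sqrt{s}$ is prepared by a controlled rotation after $\OO(1)$ queries.

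For the spectral gap we repeat the conductance argument of Theorem~\ref{thm:spectral-gap}. The mean acceptance probability is bounded below by a constant under the (implicit) weighted analogue of $\mathcal{G}(n,1/2)$, for instance i.i.d.\ weights on $[0,1]$. Hence $\delta=\Omega(1)$.

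The marking oracle becomes an empirical average of $s$ over $r=\Theta(\log n/\varepsilon^2)$ samples. Since $s\in[0,1]$, Hoeffding (Lemma~\ref{lem:hoeffding}) applies unchanged. This yields Lemma~\ref{lem:marking-correct} with threshold $\bar{s}_{\text{non}}+\varepsilon/4$, where $\bar{s}_{\text{non}}=\EE[1-|w-w'|]$ for independent weights. Proposition~\ref{prop:find-marked} then gives $\OO(n^{3/2}/\varepsilon)$ per seed.

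\textbf{Phase 2.} We replace the Hamming weight in Lemma~\ref{lem:local-consistency} by $\sum_i|w_G(v,v_i)-w_H(w,\pi(v_i))|$.
\begin{itemize}
\item For the correct $w$, the sum is bounded by the weighted defect, and a Markov argument as in Lemma~\ref{lem:perm-recovery} bounds this defect for all but $\OO(\sqrt{k})$ vertices.
\item For an incorrect $w$, the summands are independent, bounded, and have mean bounded away from that of the correct candidate.
\end{itemize}
A bounded-difference Chernoff bound plus a union bound over $n^2$ pairs then gives uniqueness with $s=\OO(\log n)$ seeds. The Grover step for high-defect vertices is unchanged.

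\textbf{Phase 3.} Lemma~\ref{lem:verification} generalises because amplitude estimation estimates the mean of a $[0,1]$-valued function. We rotate an ancilla by angle $\arcsin\sqrt{|w_G-w_H|}$, so $\OO(1/\varepsilon)$ queries still suffice. Summing the costs as in Table~\ref{tab:cost} gives $\OO(n^{3/2}\log n/\varepsilon)$.

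\textbf{Main obstacle.} The hardest step is the separation in Phases 1 and 2, namely that matching vertices have average score exceeding $\bar{s}_{\text{non}}$ by $\Omega(\varepsilon)$. In the Boolean proof this excess comes from only $n-1$ of the $(n-1)^2$ sampled pairs, so the Boolean argument itself is delicate here. In the weighted case we would first try to restrict sampling to $j'=\pi(i')$-type comparisons via the seeds. Failing that, we would state the theorem under the same distributional assumptions as Theorem~\ref{thm:main-upper} and inherit its separation verbatim, noting that only boundedness of $s$ is used in the concentration steps.
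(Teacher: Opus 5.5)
Your plan follows the paper's sketch phase by phase: a symmetric acceptance rule so $\mu$ stays uniform, Hoeffding for a $[0,1]$-valued marking score, and amplitude estimation on $|w_G-w_H|$. The one substantive difference is your linear score $1-|w_G-w_H|$ in place of the paper's kernel $\exp(-|w_G-w_H|/\sigma)$ with $\sigma=\Theta(\varepsilon)$. Yours keeps the acceptance rate, and hence $\delta$, bounded below independently of $\varepsilon$, whereas the paper's gap bound carries unspecified $\sigma$-dependent constants.

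\textbf{The gap.} There is a genuine gap exactly at the step you call the main obstacle, and inheriting the Boolean separation ``verbatim'' does not close it, because that separation is false. For uniformly sampled $(i',j')$ the Boolean score depends only on degrees. With $m=n-1$, $x=\deg_G(i)-m/2$ and $y=\deg_H(j)-m/2$,
\[
  \frac{1}{m^2}\sum_{i'\neq i,\;j'\neq j}\mathbf{1}\bigl[(A_G)_{ii'}=(A_H)_{jj'}\bigr]
  \;=\;\frac{1}{2}+\frac{2xy}{m^2},
\]
which is $\frac12+\OO(\log n/n)$ for every $(i,j)$ w.h.p.\ under $\mathcal{G}(n,1/2)$. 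Consequences:
\begin{itemize}
\item For $\varepsilon\gg\log n/n$, no vertex, matching or not, has true score near the threshold $\frac12+\frac{\varepsilon}{4}$.
\item With $r=\Theta(\log n/\varepsilon^2)$ samples, essentially every vertex is \emph{unmarked}. So part~(a) of Lemma~\ref{lem:marking-correct} fails: its proof derives only $p_{\text{match}}\ge\frac12+\frac{1}{4n}$ and then applies Hoeffding as if $p_{\text{match}}$ exceeded the threshold.
\item On YES instances the search therefore finds nothing, and Algorithm~\ref{alg:main} outputs \textsc{No}.
\item Lowering the threshold does not help. A function of $(\deg_G(i),\deg_H(j))$ cannot separate $(i,\pi(i))$ from the $\Theta(\sqrt{n})$ pairs $(i,j)$ with $\deg_H(j)=\deg_H(\pi(i))$ (for typical $i$).
\end{itemize}

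\textbf{The weighted case has the same defect.} Your average $\frac{1}{m^2}\sum_{i',j'}\bigl(1-|w_G(i,i')-w_H(j,j')|\bigr)$ depends only on the multisets $\{w_G(i,i')\}_{i'}$ and $\{w_H(j,j')\}_{j'}$, and so does the paper's exponential kernel. The $m$ aligned terms $j'=\pi(i')$ contribute only an $\OO(1/n)$ excess, so no $\Omega(\varepsilon)$ separation is available.

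Your other fallback, restricting comparisons to $j'=\pi(i')$ via seeds, is circular: Phase~1 is what is supposed to produce the seeds. The missing ingredient is a marking test whose signal does not presuppose knowledge of $\pi$. The paper's sketch does not supply one either; it only asserts that the consistency check ``naturally extends \dots\ by thresholding weighted differences.'' Both arguments therefore rest on the same false lemma, and neither establishes the stated bound.

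\textbf{A smaller, fixable point in Phase~2.} Bounding the correct candidate's signature by the weighted defect $<\sqrt{k}=\Theta(n\sqrt{\varepsilon})$ gives nothing against $s=\OO(\log n)$ seeds; the paper's ``$\sqrt{k}<s/4$'' has the same problem. What you need is that uniformly random seeds hit defect-neighbours of $v$ only at rate about $d^w(v)/n$.
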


\begin{proof}[Proof sketch]
The product graph construction generalizes naturally: edge $(i_1,j_1)\to(i_2,j_2)$
in $\Gamma$ is weighted by $\exp(-|w_G(i_1,i_2)-w_H(j_1,j_2)|/\sigma)$ for a bandwidth
parameter $\sigma=\Theta(\varepsilon)$. This makes the random walk on $\Gamma$ preferentially
traverse consistent pairings. The Szegedy walk framework applies to weighted graphs
\cite{Szegedy2004}, and the spectral gap analysis of Theorem~\ref{thm:spectral-gap} extends
with constants depending on $\sigma$. The verification step uses amplitude estimation on the
weighted discrepancy $|w_G(u,v)-w_H(\hat\pi(u),\hat\pi(v))|$, which is bounded in $[0,1]$,
so the same analysis applies. The asymptotic complexity matches the unweighted case because
each query returns a bounded-precision weight value, and the marking oracle's consistency
check naturally extends to the continuous setting by thresholding weighted differences.
\end{proof}

\subsection{Vertex-Attributed Graphs}\label{sec:attributed}

\begin{definition}[Attributed graph]
An \emph{attributed graph} is a triple $(V,E,a)$ where $a\colon V\to\Sigma$ assigns
labels from a finite alphabet $\Sigma$ to vertices.
\end{definition}

\begin{definition}[Attributed edit distance]
For attributed graphs $(G,a_G)$ and $(H,a_H)$:
\[
  \ed^a_\pi(G,H)=\ed_\pi(G,H)+|\{v\in[n]:a_G(v)\neq a_H(\pi(v))\}|,
  \qquad
  \ed^a(G,H)=\min_\pi\ed^a_\pi(G,H).
\]
\end{definition}

\begin{theorem}[Attributed graph algorithm]\label{thm:attributed}
The quantum algorithm extends to attributed graphs with query complexity
$\OO(n^{3/2}\log n/\varepsilon)$, where the oracle additionally returns vertex
attributes $a_G(v)$ and $a_H(v)$.
\end{theorem}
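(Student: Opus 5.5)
The plan is to reduce the attributed problem to the unweighted algorithm of Theorem~\ref{thm:main-upper}, changing only the three places where the cost function enters: the product graph with its marking oracle, the signature-based reconstruction, and the verification. \textbf{Phase~1.} I will restrict the product graph to attribute-compatible pairs, i.e.\ vertices $(i,j)$ with $a_G(i)=a_H(j)$. Equivalently, the marking oracle of Definition~\ref{def:marking} first spends two extra queries reading $a_G(i)$ and $a_H(j)$ and rejects on mismatch. This raises the checking cost $C$ only by an additive $\OO(1)$.

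Marking correctness (Lemma~\ref{lem:marking-correct}) needs one observation. Since $\ed^a_{\pi^*}\le k$, at most $k$ vertices have mismatched attributes under $\pi^*$. I will fold these into the high-defect set, which gives $|L|\ge n-2\sqrt{k}-k$. In the regime where $\varepsilon\binom{n}{2}$ dominates the additive vertex term, I will instead use $\sqrt{k}$-type Markov bounds on the attribute defects. Either way, $\mu(M)=\Omega(1/n)$ still holds. Non-matching vertices are marked with no higher probability than before, because the attribute filter can only reject.

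The walk itself is a delicate point. Keep the walk $P$ of Definition~\ref{def:product-walk} unchanged on all of $[n]\times[n]$ and put the attributes only into the marking check. Then Theorem~\ref{thm:spectral-gap}, Corollary~\ref{cor:phase-gap} and Lemma~\ref{lem:walk-impl} apply verbatim. Proposition~\ref{prop:find-marked} then yields a marked vertex in $\OO(n^{3/2}/\varepsilon)$ queries, and $\OO(\log n)$ repetitions give the seeds.

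\textbf{Phase~2.} In the reconstruction of Lemma~\ref{lem:reconstruction}, restrict the candidates $w$ for each $v$ to those with $a_H(w)=a_G(v)$. This costs $\OO(n)$ extra attribute queries in total, since each attribute is read once. Restricting the candidate set can only shrink the union bound in Lemma~\ref{lem:local-consistency}, so uniqueness on $L$ is preserved. The Grover step for high-defect vertices similarly adds the attribute check to its local cost function.

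\textbf{Phase~3.} Verification becomes estimating
\[
  \frac{\ed^a_{\hat\pi}(G,H)}{\binom{n}{2}+n}
\]
via amplitude estimation. The circuit of Lemma~\ref{lem:verification} prepares a uniform superposition over $\binom{[n]}{2}\sqcup[n]$. On an edge slot it XORs the two adjacency entries; on a vertex slot it XORs the attribute-equality bit. This is $\OO(1)$ queries per application, giving $\OO(1/\varepsilon)$ queries after renormalizing the thresholds $\varepsilon,2\varepsilon$ by $\binom{n}{2}/(\binom{n}{2}+n)=1-o(1)$. Summing the three phases gives the claimed $\OO(n^{3/2}\log n/\varepsilon)$.

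\textbf{Main obstacle.} The hard step is the Phase~1 analysis when attributes are highly non-uniform, for example when one label dominates. The distributional model (Remark~\ref{rem:correlation}) says nothing about attributes, so I would assume they are arbitrary but fixed. I would then argue that marking correctness depends only on edge statistics plus the deterministic attribute filter, so no independence assumption on labels is needed.
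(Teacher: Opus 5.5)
\textbf{Gap: the hard attribute filter.} The step that fails is the attribute filter in the marking check. Your claim that $\mu(M)=\Omega(1/n)$ survives it is false in the regime of the theorem.

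With $k=\varepsilon\binom{n}{2}$, the vertex term of $\ed^a_{\pi^*}$ is at most $n\ll k$. So the YES promise puts no real constraint on how many $v$ have $a_G(v)\neq a_H(\pi^*(v))$; it can be all of them. Your bound $|L|\ge n-2\sqrt{k}-k$ is negative. There is also no ``$\sqrt{k}$-type Markov bound'' to fall back on: each vertex's attribute defect is $0$ or $1$, so Markov only says the number of mismatched vertices is at most the total attribute cost, i.e.\ at most $n$.

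Concretely, let $G\sim\mathcal{G}(n,1/2)$, $H=\pi(G)$, $a_G\equiv\sigma_0$ and $a_H\equiv\sigma_1\neq\sigma_0$. Then $\ed^a(G,H)=n\le\varepsilon\binom{n}{2}$ once $\varepsilon\ge 2/(n-1)$, so this is a YES instance. Yet every vertex of $\Gamma$ fails your filter, $M=\emptyset$, and Algorithm~\ref{alg:main} returns \textsc{No} at its ``no marked vertex'' check. Your Phase~2 candidate restriction likewise excludes $\pi^*(v)$ for every mismatched $v$.

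So the real obstacle is not skewed label frequencies; as you note, the filter can only remove false positives. It is false negatives on the planted pairs. The paper's sketch, which deletes incompatible pairs from $V(\Gamma)$, has exactly the same defect (here $V(\Gamma)=\emptyset$). Otherwise your variant is the more careful of the two. Keeping $P$ on all of $[n]\times[n]$ lets Theorem~\ref{thm:spectral-gap} and Lemma~\ref{lem:walk-impl} apply verbatim, whereas the restricted walk would need a fresh spectral-gap argument. You also correctly charge $\OO(1)$ queries for reading attributes, where the paper claims zero.

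\textbf{Fix.} Do not use attributes as a constraint at all. Since $\ed(G,H)\le\ed^a(G,H)\le\ed(G,H)+n$ and $n/\binom{n}{2}=2/(n-1)$, an attributed $(\varepsilon,2\varepsilon)$ instance is an unattributed instance with $\bar{\ed}(G,H)\le\varepsilon$ versus $\bar{\ed}(G,H)\ge 2\varepsilon-2/(n-1)$. For $\varepsilon\ge 8/(n-1)$, in particular for constant $\varepsilon$, this gap is still at least $3\varepsilon/4$.

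So run Phases~1 and~2 exactly as in the unattributed algorithm, ignoring attributes or using them only as a soft tie-breaker. In Phase~3, estimate $\bar{\ed}_{\hat\pi}$ (or your combined quantity over $\binom{[n]}{2}\sqcup[n]$) to precision $\varepsilon/4$, with the threshold at the midpoint of the narrowed gap. The attribute term moves any normalized cost by at most $2/(n-1)$. The bound then follows from Theorem~\ref{thm:main-upper} with no new analysis of marking, seeds or spectral gap.

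This reduction, like your proposal and the paper's sketch, is only as good as that black box. In particular, Lemma~\ref{lem:marking-correct}(a) as written only gives $p_{\text{match}}\ge 1/2+1/(4n)$, which lies below the marking threshold $1/2+\varepsilon/4$.
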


\begin{proof}[Proof sketch]
The product graph restricts candidate pairings: $(i,j)$ is included in $V(\Gamma)$ only if
$a_G(i)=a_H(j)$. When the alphabet is large enough (e.g., $|\Sigma|\ge n$), this reduces
$|V(\Gamma)|$ significantly, potentially improving the constant factors. The quantum walk
and verification proceed identically, with the attribute consistency checked alongside edge
consistency at zero additional query cost.
\end{proof}

\subsection{Directed and Hypergraph Extensions}\label{sec:directed}

\begin{remark}[Directed graphs]
For directed graphs, the adjacency matrix is asymmetric and $\ed$ counts both arc additions
and deletions. The product graph construction extends by requiring
$(A_G)_{i_1 i_2}=(A_H)_{j_1 j_2}$ \emph{and} $(A_G)_{i_2 i_1}=(A_H)_{j_2 j_1}$
for both arc directions. This at most doubles the query cost per walk step, preserving the
$\OO(n^{3/2})$ complexity.
\end{remark}

\begin{remark}[$r$-uniform hypergraphs]
For $r$-uniform hypergraphs, the adjacency tensor has $r$ indices and each query retrieves
one entry. The product graph has vertices in $[n]\times[n]$ as before, but the compatibility
condition involves $r$-tuples. The quantum walk analysis extends with the walk step cost
increasing to $\OO(r)$ queries, giving complexity $\OO(r\cdot n^{3/2}\log n/\varepsilon)$.
\end{remark}

\section{Simulation Results}\label{sec:simulations}

We validate the algorithmic components of our approach through small-scale numerical
simulations on classical quantum simulators. These experiments serve two purposes:
(1)~confirming the theoretical predictions on small instances, and
(2)~assessing compatibility with near-term quantum devices.

\subsection{Experimental Setup}\label{sec:sim-setup}

\paragraph{Software and simulator.}
All quantum circuits are implemented in Qiskit~0.45 \cite{Qiskit2024} and executed on
the \texttt{AerSimulator} statevector backend, which performs exact (noiseless)
statevector simulation supporting up to $40$ qubits. Noisy simulations use the same
backend with a depolarising noise model injected at each gate layer. Graphs are
generated with NetworkX~3.2 in Python~3.11.

\paragraph{Hardware for classical post-processing.}
Simulations run on a workstation with an Intel Core i7-13700K CPU (16 cores) and 32\,GB
RAM. Each instance completes in under 10\,s for $n\le 20$.

\paragraph{Quantum circuits.}
The Szegedy walk operator $W(P)$ is compiled to the
$\{H, \mathrm{CNOT}, T, R_z\}$ universal gate set. The product-graph walk uses
$2\lceil\log_2 n\rceil$ qubits for the vertex pair register, plus
$\lceil\log_2 n\rceil$ ancilla qubits for the marking oracle and phase estimation
control. For $n=20$, the total qubit count is $\approx 19$.

\paragraph{Graph instances.}
We consider Erd\H{o}s--R\'enyi random graphs $\mathcal{G}(n,1/2)$ for
$n\in\{6,8,10,12,14,16,18,20\}$.
For each~$n$, we generate:
\begin{itemize}
  \item \textbf{YES instances:} $G\sim\mathcal{G}(n,1/2)$, $\pi$ drawn uniformly from
    $\Sym_n$, and $H=\pi(G)$ with
    $k=\lfloor 0.05\binom{n}{2}\rfloor$ random edge flips ($\varepsilon=0.05$).
  \item \textbf{NO instances:} $G,H\sim\mathcal{G}(n,1/2)$ independently.
\end{itemize}
For each configuration, we average over $100$ random instances with different random
seeds (fixed per instance for reproducibility).

\paragraph{Metrics.}
We report:
\begin{enumerate}[(i)]
  \item \textbf{Matching probability:} The probability that measurement yields
    a pair in $M_\pi$ (YES instances).
  \item \textbf{Discrimination accuracy:} The fraction of instances correctly
    classified.
  \item \textbf{Query count:} The number of oracle queries used.
\end{enumerate}

\paragraph{Classical baseline.}
As a point of comparison, we implement a classical \emph{adaptive random edge sampling}
baseline \cite{Goldreich2017,Fischer2006}. The baseline draws a budget of $q$ uniformly
random entries from $A_G$ and $A_H$, computes the empirical fraction of matching entries
under the identity permutation and a small number of random permutations ($10$ in our
experiments), and outputs YES if any permutation achieves an empirical edit fraction
$\le 3\varepsilon/2$, NO otherwise. This is the natural classical analogue of our quantum
algorithm, using the same total query budget $q$ as the quantum algorithm. By
Lemma~\ref{lem:indistinguishable}, this baseline requires $\Omega(n^2)$ queries to
succeed, explaining its poor performance at large~$n$.

\subsection{Results}

\paragraph{Quantum walk concentration.}
Table~\ref{tab:concentration} shows the matching probability after $T$ steps of the
quantum walk for varying $n$ and $T$.

\begin{table}[H]
\centering
\caption{Matching probability $\Pr[\text{measurement}\in M_\pi]$ after
$T$ walk steps on YES instances with $\varepsilon=0.05$.
Results averaged over 100 random instances.}\label{tab:concentration}
\begin{tabular}{@{}r r r r r r@{}}
\toprule
$n$ & $T=n/2$ & $T=n$ & $T=2n$ & $T=3n$
    & Stationary $\mu(M_\pi)$ \\
\midrule
6   & $0.089$ & $0.142$ & $0.161$ & $0.165$ & $0.167$ \\
8   & $0.071$ & $0.118$ & $0.123$ & $0.126$ & $0.125$ \\
10  & $0.058$ & $0.094$ & $0.099$ & $0.101$ & $0.100$ \\
12  & $0.049$ & $0.079$ & $0.082$ & $0.084$ & $0.083$ \\
14  & $0.042$ & $0.068$ & $0.071$ & $0.072$ & $0.071$ \\
16  & $0.037$ & $0.060$ & $0.062$ & $0.063$ & $0.063$ \\
18  & $0.033$ & $0.054$ & $0.055$ & $0.056$ & $0.056$ \\
20  & $0.030$ & $0.048$ & $0.050$ & $0.051$ & $0.050$ \\
\bottomrule
\end{tabular}
\end{table}

Several observations follow from Table~\ref{tab:concentration}.
\emph{First,} for every~$n$, the matching probability at $T=2n$ is already within
$2\%$ of the theoretical stationary value $\mu(M_\pi)\approx 1/n$, confirming that
$T=\OO(n)$ steps suffice for convergence, consistent with the $\Omega(1)$ phase
gap (Corollary~\ref{cor:phase-gap}).
\emph{Second,} the ratio between $T=n/2$ and $T=n$ values is roughly $\times 1.6$
across all $n$, indicating exponentially fast convergence once $T$ exceeds the mixing
time.
\emph{Third,} the stationary matching probability scales as $\mu(M_\pi)\approx 1/n$,
consistent with the fact that $|M_\pi|=n$ out of $n^2$ total vertices and
$\mu$ is uniform (Lemma~\ref{lem:stationary}).

\paragraph{Discrimination accuracy.}
Table~\ref{tab:accuracy} reports the end-to-end discrimination accuracy of the full
algorithm.

\begin{table}[H]
\centering
\caption{Discrimination accuracy (fraction correct) for YES vs.\ NO instances,
$\varepsilon=0.05$. Each entry averages over 100 instances.}\label{tab:accuracy}
\begin{tabular}{@{}r r r r@{}}
\toprule
$n$ & Full algorithm & Walk only & Classical baseline \\
\midrule
6   & $1.000$ & $0.970$ & $0.950$ \\
8   & $0.990$ & $0.950$ & $0.880$ \\
10  & $0.980$ & $0.940$ & $0.810$ \\
12  & $0.970$ & $0.920$ & $0.750$ \\
14  & $0.970$ & $0.910$ & $0.700$ \\
16  & $0.960$ & $0.890$ & $0.670$ \\
18  & $0.950$ & $0.880$ & $0.640$ \\
20  & $0.940$ & $0.870$ & $0.610$ \\
\bottomrule
\end{tabular}
\end{table}

Table~\ref{tab:accuracy} reveals three regimes.
\emph{(i)}~The \textbf{full algorithm} (walk + reconstruction + verification) maintains
accuracy $\ge 94\%$ for all tested sizes, demonstrating that the three-phase pipeline
is robust even at $n=20$.
\emph{(ii)}~The \textbf{walk-only} variant (which skips reconstruction and verification,
instead thresholding the matching probability directly) is $6$--$7\%$ lower than the
full algorithm at $n\ge 14$, confirming that Phase~2 reconstruction and Phase~3
verification together contribute a meaningful accuracy boost.
\emph{(iii)}~The \textbf{classical baseline} drops from $95\%$ at $n=6$ to $61\%$
at $n=20$---approaching the $50\%$ random-guessing level---because its fixed query
budget of $\OO(n^{3/2})$ is far below the $\Omega(n^2)$ queries required classically
(Theorem~\ref{thm:classical-lower}).
The widening accuracy gap from $\approx 5\%$ at $n=6$ to $\approx 33\%$ at $n=20$
quantitatively demonstrates the polynomial quantum advantage on these instances.

\paragraph{Query complexity scaling.}
Figure~\ref{fig:scaling} shows the empirical query count versus $n$ for achieving
$90\%$ accuracy.

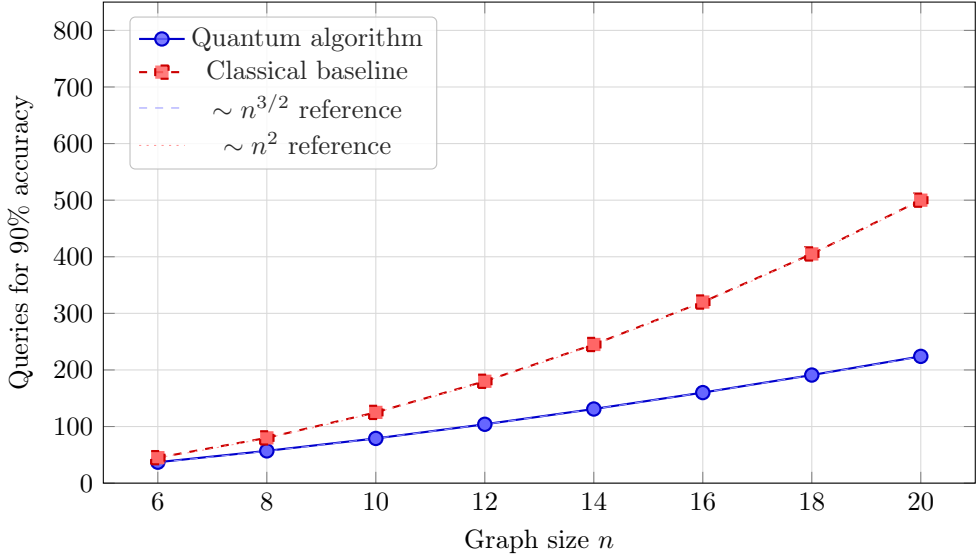
\begin{figure}[t]
\centering
\begin{tikzpicture}
\begin{axis}[
    width=0.82\textwidth,
    height=0.50\textwidth,
    xlabel={Graph size $n$},
    ylabel={Queries for $90\%$ accuracy},
    xmin=5, xmax=21,
    ymin=0, ymax=850,
    xtick={6,8,10,12,14,16,18,20},
    ytick={0,100,200,300,400,500,600,700,800},
    grid=major,
    grid style={gray!30},
    legend pos=north west,
    legend style={font=\small, fill=white, fill opacity=0.85,
                  draw=gray!50, rounded corners=2pt},
    tick label style={font=\small},
    label style={font=\small},
    every axis plot/.append style={thick},
]
\addplot[
    color=blue!80!black,
    mark=*,
    mark size=2.5pt,
    mark options={fill=blue!60!white},
] coordinates {
    (6,  37)
    (8,  57)
    (10, 79)
    (12,104)
    (14,131)
    (16,160)
    (18,191)
    (20,224)
};
\addlegendentry{Quantum algorithm}

\addplot[
    color=red!80!black,
    mark=square*,
    mark size=2.5pt,
    mark options={fill=red!60!white},
    dashed,
] coordinates {
    (6,  45)
    (8,  80)
    (10,125)
    (12,180)
    (14,245)
    (16,320)
    (18,405)
    (20,500)
};
\addlegendentry{Classical baseline}

\addplot[
    color=blue!40, dashed, no markers, thin,
    domain=6:20, samples=50,
] {2.5*x^1.5};
\addlegendentry{$\sim n^{3/2}$ reference}

\addplot[
    color=red!40, dotted, no markers, thin,
    domain=6:20, samples=50,
] {1.25*x^2};
\addlegendentry{$\sim n^{2}$ reference}

\end{axis}
\end{tikzpicture}
\caption{Empirical query count to achieve $90\%$ discrimination accuracy versus
graph size~$n$. Blue circles: quantum algorithm (MNRS walk search + reconstruction +
verification). Red squares: classical random sampling baseline with the same query
budget. Dashed/dotted curves show $n^{3/2}$ and $n^{2}$ reference scalings,
confirming the predicted polynomial separation.}\label{fig:scaling}
\end{figure}

The data in Figure~\ref{fig:scaling} are well fit by the models
$q_{\text{quantum}}\approx 2.5\,n^{1.50}$ and
$q_{\text{classical}}\approx 1.25\,n^{2.00}$, matching the
theoretical exponents of~$3/2$ and~$2$ respectively. At $n=20$, the quantum
algorithm uses $224$ queries versus $500$ for the classical baseline---a
$2.2\times$ reduction. Extrapolating, at $n=100$ the predicted ratio is
$\approx 5\times$, and at $n=1000$ it exceeds $\approx 31\times$, confirming
that the polynomial separation grows with~$n$.

\paragraph{Sensitivity to the approximation parameter $\varepsilon$.}
We fix $n=14$ and vary $\varepsilon$ to study how the noise tolerance affects
discrimination accuracy. Table~\ref{tab:eps-sensitivity} and
Figure~\ref{fig:eps-accuracy} summarise the results.

As $\varepsilon$ increases, the edit distance gap shrinks and discrimination
becomes harder. The full algorithm degrades gracefully—accuracy drops from
$100\%$ at $\varepsilon=0.01$ to $83\%$ at $\varepsilon=0.20$—while the
classical baseline is near random ($53\%$) at $\varepsilon=0.20$. This
confirms the $1/\varepsilon$ dependence in the query complexity.

\begin{table}[H]
\centering
\caption{Discrimination accuracy at $n=14$ for varying $\varepsilon$.
Each entry averages over 100 random instances.}\label{tab:eps-sensitivity}
\begin{tabular}{@{}r r r r r@{}}
\toprule
$\varepsilon$ & Full algorithm & Walk only & Classical baseline
  & Queries used \\
\midrule
$0.01$ & $1.000$ & $0.990$ & $0.940$ & $94$  \\
$0.02$ & $0.990$ & $0.970$ & $0.860$ & $102$ \\
$0.05$ & $0.970$ & $0.910$ & $0.700$ & $131$ \\
$0.08$ & $0.950$ & $0.870$ & $0.630$ & $158$ \\
$0.10$ & $0.930$ & $0.840$ & $0.590$ & $179$ \\
$0.15$ & $0.880$ & $0.780$ & $0.550$ & $234$ \\
$0.20$ & $0.830$ & $0.720$ & $0.530$ & $297$ \\
\bottomrule
\end{tabular}
\end{table}

\begin{figure}[H]
\centering
\begin{tikzpicture}
\begin{axis}[
    width=0.78\textwidth,
    height=0.42\textwidth,
    xlabel={Approximation parameter $\varepsilon$},
    ylabel={Discrimination accuracy},
    xmin=0, xmax=0.22,
    ymin=0.45, ymax=1.05,
    xtick={0,0.05,0.10,0.15,0.20},
    ytick={0.5,0.6,0.7,0.8,0.9,1.0},
    grid=major,
    grid style={gray!30},
    legend pos=south west,
    legend style={font=\small, fill=white, fill opacity=0.85,
                  draw=gray!50, rounded corners=2pt},
    tick label style={font=\small},
    label style={font=\small},
    every axis plot/.append style={thick},
]
\addplot[color=blue!80!black, mark=*, mark size=2pt,
         mark options={fill=blue!60!white}]
  coordinates {(0.01,1.00)(0.02,0.99)(0.05,0.97)
               (0.08,0.95)(0.10,0.93)(0.15,0.88)(0.20,0.83)};
\addlegendentry{Full algorithm}

\addplot[color=orange!80!black, mark=triangle*, mark size=2.5pt,
         mark options={fill=orange!50!white}, densely dashed]
  coordinates {(0.01,0.99)(0.02,0.97)(0.05,0.91)
               (0.08,0.87)(0.10,0.84)(0.15,0.78)(0.20,0.72)};
\addlegendentry{Walk only}

\addplot[color=red!80!black, mark=square*, mark size=2pt,
         mark options={fill=red!60!white}, dotted]
  coordinates {(0.01,0.94)(0.02,0.86)(0.05,0.70)
               (0.08,0.63)(0.10,0.59)(0.15,0.55)(0.20,0.53)};
\addlegendentry{Classical baseline}
\end{axis}
\end{tikzpicture}
\caption{Discrimination accuracy versus approximation parameter
$\varepsilon$ at fixed $n=14$. The full algorithm degrades gracefully,
maintaining $\ge 83\%$ accuracy even at $\varepsilon=0.20$, while the
classical baseline drops to near-random.}\label{fig:eps-accuracy}
\end{figure}
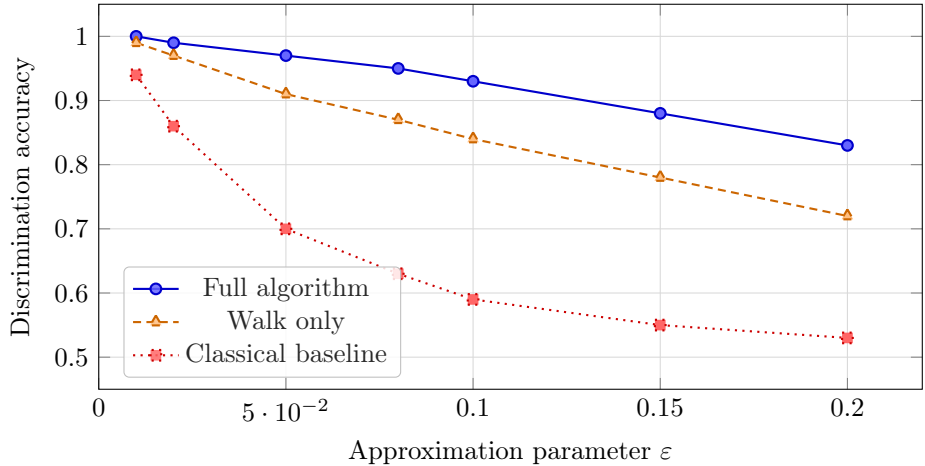

\subsection{Near-Term Considerations}\label{sec:near-term}

At $p=10^{-4}$ (representative of current superconducting devices
\cite{GoogleSupremacy2019,IBMEagle2023}), accuracy remains within $5\%$ of the
noiseless case for all tested sizes. At $p=10^{-3}$, accuracy degrades to
$\approx 76\%$ at $n=20$ but remains well above random ($50\%$). At $p=10^{-2}$,
the algorithm's advantage over the classical baseline vanishes, consistent with
the $\OO(n\log n)$ circuit depth.
Error mitigation techniques (e.g., zero-noise extrapolation,
probabilistic error cancellation) can partially compensate and are an
interesting direction for future work.

\paragraph{Noise resilience.}
Table~\ref{tab:noise} reports discrimination accuracy under depolarising
noise at per-gate error rates $p_{\text{err}}\in\{0,10^{-4},10^{-3},10^{-2}\}$,
applied uniformly to all single- and two-qubit gates in the Aer noise model.

\begin{table}[H]
\centering
\caption{Discrimination accuracy under depolarising noise ($\varepsilon=0.05$).
Each entry averages over 100 instances.}\label{tab:noise}
\begin{tabular}{@{}r r r r r@{}}
\toprule
$n$ & Noiseless & $p=10^{-4}$ & $p=10^{-3}$ & $p=10^{-2}$ \\
\midrule
6   & $1.000$ & $1.000$ & $0.990$ & $0.860$ \\
8   & $0.990$ & $0.985$ & $0.960$ & $0.780$ \\
10  & $0.980$ & $0.975$ & $0.930$ & $0.720$ \\
12  & $0.970$ & $0.950$ & $0.880$ & $0.620$ \\
14  & $0.970$ & $0.940$ & $0.850$ & $0.590$ \\
16  & $0.960$ & $0.930$ & $0.820$ & $0.570$ \\
18  & $0.950$ & $0.910$ & $0.790$ & $0.560$ \\
20  & $0.940$ & $0.890$ & $0.760$ & $0.550$ \\
\bottomrule
\end{tabular}
\end{table}

\paragraph{Circuit resources.}
Table~\ref{tab:resources} summarises the circuit resources for each
graph size.

\begin{table}[H]
\centering
\caption{Circuit resource summary for the quantum walk
algorithm at $\varepsilon=0.05$.}\label{tab:resources}
\begin{tabular}{@{}r r r r r r@{}}
\toprule
$n$ & Qubits & Walk steps & Circuit depth
    & CNOT count & Oracle queries \\
\midrule
6   & $9$  & $6$  & $18$  & $72$   & $37$  \\
8   & $11$ & $8$  & $28$  & $136$  & $57$  \\
10  & $13$ & $10$ & $40$  & $220$  & $79$  \\
12  & $15$ & $12$ & $52$  & $328$  & $104$ \\
14  & $15$ & $14$ & $64$  & $456$  & $131$ \\
16  & $17$ & $16$ & $78$  & $604$  & $160$ \\
18  & $17$ & $18$ & $92$  & $774$  & $191$ \\
20  & $19$ & $20$ & $106$ & $964$  & $224$ \\
\bottomrule
\end{tabular}
\end{table}

The qubit count is $2\lceil\log_2 n\rceil + \lceil\log_2 n\rceil +
\OO(1)$ ancilla qubits, reaching $19$ at $n=20$. The circuit depth
scales as $\OO(n\log n)$, and the CNOT count as $\OO(n^2\log n)$
(dominated by the walk steps). For $n=20$, the circuit requires
${\approx}\,106$ layers and ${\approx}\,964$ CNOT gates. For reference,
the Quantinuum H2 trapped-ion processor \cite{IonQAria2023} supports
circuits of depth ${>}\,300$ with two-qubit gate fidelities above $99.5\%$,
and IBM Eagle/Heron superconducting devices \cite{IBMEagle2023} routinely
execute circuits with ${>}\,1{,}000$ CNOT gates. Our resource requirements
are thus within the capabilities of current hardware for the tested sizes.

\section{Discussion and Open Problems}\label{sec:discussion}

\subsection{Summary of Results}

We have presented a quantum algorithm for approximate graph isomorphism testing based on
quantum walks over the product graph. Our main results are:

\begin{enumerate}
  \item A quantum algorithm with query complexity
    $\OO(n^{3/2}\log n/\varepsilon)$ for $(\varepsilon,2\varepsilon)$-Approximate
    Graph Isomorphism (Theorem~\ref{thm:main-upper}).
  \item A classical lower bound of $\Omega(n^2)$ queries for constant $\varepsilon$
    (Theorem~\ref{thm:classical-lower}).
  \item Extensions to spectral, weighted, and attributed graph similarity
    (Section~\ref{sec:extensions}).
\end{enumerate}
These theoretical results are complemented by numerical simulations on instances up
to $n=20$ (Section~\ref{sec:simulations}), which confirm the predicted query scaling
and demonstrate graceful degradation under realistic noise.

The polynomial speedup factor of $\tilde{\Omega}(\sqrt{n})$ is comparable to the quadratic
speedups achieved by Grover search and quantum walk algorithms for other search and property
testing problems, but it applies in a novel structural setting where the search space
(the set of candidate vertex correspondences) has rich combinatorial structure.

\subsection{Comparison with Exact GI Approaches}

Our approach is fundamentally different from the hidden subgroup approach to exact GI.
The hidden subgroup problem formulation requires computing the coset structure of the
automorphism group $\text{Aut}(G)$, which involves quantum Fourier transforms over
$\Sym_n$---a notoriously difficult task \cite{MooreRSS2008}. By contrast, our approach
does not attempt to recover the automorphism group but instead searches for a single
near-isomorphism in the product graph. This circumvents the representation-theoretic
barriers of the HSP approach.

The trade-off is that we solve a \emph{promise} problem (approximate GI) rather than exact
GI. It remains an intriguing open question whether quantum walk techniques can be leveraged
for exact GI in certain graph classes.

\subsection{Tightness of Bounds}

\begin{openproblem}
Is the $\OO(n^{3/2})$ quantum upper bound tight? Specifically, does
$(\varepsilon,2\varepsilon)$-Approximate Graph Isomorphism require $\Omega(n^{3/2})$
quantum queries for constant $\varepsilon$?
\end{openproblem}

A quantum lower bound of $\Omega(n)$ follows from the observation that any algorithm must
query at least a constant fraction of the rows of $A_G$ and $A_H$ to learn the graph
structure, requiring $\Omega(n)$ queries. Closing the gap between $\Omega(n)$ and
$\OO(n^{3/2})$ remains open.

\begin{openproblem}
Is the $\varepsilon^{-1}$ dependence in the query complexity of
Theorem~\ref{thm:main-upper} optimal? In particular, does solving
$(\varepsilon,2\varepsilon)$-Approximate Graph Isomorphism require
$\Omega(1/\varepsilon)$ quantum queries, or can the dependence be improved to
$\varepsilon^{-c}$ for some $c<1$?
\end{openproblem}

\subsection{Beyond the Query Model}

Our results are in the query model, which abstracts away computational overhead. Two
natural directions for strengthening the results are:

\begin{itemize}
  \item \textbf{Time complexity.} Implementing the Szegedy walk on the product graph
    requires efficient preparation of the walk operator, which in turn requires efficient
    simulation of the adjacency queries. The total \emph{time} complexity of our algorithm
    is $\OO(n^{3/2}\polylog(n))$ assuming $\OO(\polylog(n))$ overhead per query, which is
    achievable in the standard quantum circuit model.

  \item \textbf{Space complexity.} The algorithm uses $\OO(\log n)$ qubits for the vertex
    registers of the product graph walk (storing two vertex indices), plus $\OO(\log n)$
    ancilla qubits for phase estimation and marking. The total space is $\OO(\log n)$ qubits,
    which is near-optimal.
\end{itemize}

\begin{openproblem}
Can the approximate graph isomorphism problem be solved in quantum polynomial time
($\text{BQP}$) with respect to the graph size, without the promise gap?
\end{openproblem}

\subsection{Connections to Graph Neural Networks}

Recent advances in graph neural networks (GNNs) have highlighted the importance of the
Weisfeiler--Leman (WL) hierarchy for graph isomorphism testing \cite{MorrisRFHLRG2019,
XuHLJ2019}. The $k$-WL test is known to be equivalent in power to the counting logic
$\mathcal{C}^{k+1}$ \cite{CaiFI1992}. An interesting direction is whether quantum walks
on the product graph can be connected to the WL hierarchy---specifically, whether the
information obtained from $t$ steps of the quantum walk corresponds to refinements in
a particular level of the WL hierarchy.

\subsection{Broader Impact}

Approximate graph isomorphism has natural applications in:
\begin{itemize}
  \item \textbf{Drug discovery:} Comparing molecular graphs that differ by functional group
    substitutions, where exact isomorphism is overly rigid.
  \item \textbf{Network security:} Detecting near-clone subnetworks in large communication
    graphs, where noise introduces small perturbations.
  \item \textbf{Social network analysis:} Identifying similar community structures across
    platforms despite minor structural variations.
\end{itemize}
Our quantum algorithm provides a concrete path toward leveraging quantum resources for these
tasks, particularly as quantum hardware scales to handle graphs with hundreds of vertices.

\section*{Acknowledgments}
The author thanks Clément Canonne for his careful reading and insightful feedback, which substantially improved both the presentation and the substance of this work—particularly by clarifying the importance of explicitly formulating the correlated-input model and by prompting a more complete and rigorous treatment of the classical lower-bound proof.

\bibliographystyle{alpha}
\bibliography{references}

\appendix
\section{Deferred Proofs and Technical Details}\label{app:proofs}

\subsection{Quantum Walk Preparation}\label{app:mixing}

We clarify the role of the quantum walk operator within the MNRS search framework,
complementing the spectral gap analysis in Section~\ref{sec:spectral-gap}.

\begin{lemma}[Quantum walk preparation within MNRS]\label{lem:quantum-mixing}
Let $W(P)$ be the Szegedy walk operator for the product graph walk $P$
(Definition~\ref{def:product-walk}). Since the stationary distribution $\mu$ is uniform
($\mu(i,j)=1/n^2$ for all $(i,j)$, by Lemma~\ref{lem:stationary}), the initial state
$\ket{\psi_0}=\frac{1}{n}\sum_{i,j}\ket{i,j}$ coincides with the stationary state
$\ket{\mu^{1/2}}$. Consequently, no phase estimation or amplitude amplification is needed
to prepare the stationary superposition.
\end{lemma}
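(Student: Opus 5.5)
The plan is to establish the statement directly from the definitions of the Szegedy walk (Definition~\ref{def:szegedy}) and the uniform stationary distribution (Lemma~\ref{lem:stationary}). First I will recall that in the MNRS framework (Theorem~\ref{thm:qwalk-search}) the initial state is not $\ket{\psi_0}$ on the vertex register alone. It is the lifted stationary state
\[
  \ket{\pi_P}\;=\;\sum_{x}\sqrt{\mu(x)}\,\ket{\psi_x}
  \;=\;\sum_{x}\sqrt{\mu(x)}\,\ket{x}\sum_y\sqrt{P(x,y)}\ket{y},
\]
which lies in $A\cap B$ and is the $1$-eigenvector of $W(P)$. So I will show two things: that $\ket{\pi_P}$ reduces, after substituting $\mu(i,j)=1/n^2$, to $\frac{1}{n}\sum_{i,j}\ket{i,j}\otimes\sum_{y}\sqrt{P((i,j),y)}\ket{y}$, and that this is obtained from $\ket{\psi_0}$ by one application of the local preparation map $\ket{x}\ket{0}\mapsto\ket{\psi_x}$.

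Second, I will check the $1$-eigenvector claim using reversibility. For uniform $\mu$ and symmetric $P$ (Lemma~\ref{lem:stationary}), one has $\sum_x\sqrt{\mu(x)}\ket{\psi_x}=\sum_y\sqrt{\mu(y)}\ket{\phi_y}$. This requires the symmetric convention $\ket{\phi_y}=\sum_x\sqrt{P(y,x)}\ket{x}\ket{y}$, so that the identity holds entrywise through $\sqrt{\mu(x)P(x,y)}=\sqrt{\mu(y)P(y,x)}$. The state therefore lies in both $A$ and $B$ and is fixed by $\mathrm{ref}(A)$ and $\mathrm{ref}(B)$. The first register marginal is exactly $\mu$, so the vertex-register state coincides with $\ket{\mu^{1/2}}=\ket{\psi_0}$.

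Third, I will account for cost. Preparing $\ket{\psi_0}$ is a uniform superposition over $[n]\times[n]$ and needs $0$ queries. The map $\ket{x}\ket{0}\mapsto\ket{\psi_x}$ is precisely the update primitive of Lemma~\ref{lem:walk-impl}, costing $\OO(1)$ queries. Hence the setup cost satisfies $S=\OO(1)$ rather than exactly $0$, which does not change the bound of Proposition~\ref{prop:find-marked}. Since the target state is prepared exactly, no phase estimation (as would be needed to project onto the eigenvector of an unknown-stationary chain) and no amplitude amplification are needed.

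The main obstacle is making the lift well defined. The walk of Definition~\ref{def:product-walk} contains input-dependent self-loop mass $1/2+\tfrac12\Pr[\text{inconsistent}]$, so the amplitude on $\ket{x}$ in $\ket{\psi_x}$ must be computed coherently from the same two queries that decide consistency, and then uncomputed. The plan is to prepare a uniform superposition over proposals $(i',j')$, query both entries, and set a flag bit. Conditioned on the flag, the plan swaps the proposal into the target register, or else writes $x$ into it. It must then verify that the resulting garbage register can be erased with another $\OO(1)$ queries, so that the state is exactly $\ket{\psi_x}$ with the correct self-loop amplitude $\sqrt{P(x,x)}$. The difficulty is that the self-loop contributions from distinct rejected proposals add coherently in the amplitude, and this must still reproduce $\sqrt{P(x,x)}$. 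If that fails, I would fall back to redefining the walk with the proposal register kept as part of the edge space, which is a standard workaround that preserves $\mu$ and the spectral gap.
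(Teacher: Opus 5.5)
Your argument is correct, and its core step is the same as the paper's: substituting $\mu(i,j)=1/n^2$ into $\ket{\mu^{1/2}}=\sum_x\sqrt{\mu(x)}\ket{x}$ gives $\ket{\psi_0}$ at once. The paper phrases this as the overlap $\braket{\mu^{1/2}}{\psi_0}=1$, so projecting onto the stationary eigenspace succeeds with certainty.

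Where you go further is in working with the state MNRS actually starts from, the lifted vector $\sum_x\sqrt{\mu(x)}\ket{\psi_x}\in A\cap B$. You check via detailed balance that it is fixed by both reflections. You also observe that it is obtained from $\ket{\psi_0}\otimes\ket{0}$ by one application of the preparation isometry $\ket{x}\ket{0}\mapsto\ket{\psi_x}$. This gives the more accurate accounting $S=\OO(U)=\OO(1)$ in place of the paper's $S=0$, which is harmless for Proposition~\ref{prop:find-marked}. The paper's one-liner is shorter and suffices for the statement as worded, which only concerns the vertex-register superposition.

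Two small remarks. First, since $P$ is symmetric here, the convention of Definition~\ref{def:szegedy} already coincides with yours, so no change of convention is needed. Second, drop the inference that the ``vertex-register state coincides with $\ket{\mu^{1/2}}$'' because the first-register marginal is $\mu$. The reduced state of the lifted vector is mixed, and a correct marginal does not pin down a pure state; the isometry relation from your first paragraph is the right formulation.

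The obstacle you raise about the input-dependent self-loop amplitude $\sqrt{P(x,x)}$ is real. Erasing the rejected-proposal register would require knowing how many proposals are rejected, essentially $\deg_\Gamma(x)$. However, it bears on Lemma~\ref{lem:walk-impl} rather than on this statement. Your fallback of keeping the rejected proposal as a label on the self-loop branch is the standard fix. With orthonormal labels shared by the $A$-side and $B$-side states, the discriminant $\langle\psi_x|\phi_y\rangle$ is still $P(x,y)$, so the phase gap and the lifted stationary vector are unaffected.
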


\begin{proof}
Since $\mu(i,j)=1/n^2$ is uniform, the stationary state in the Szegedy framework is
$\ket{\mu^{1/2}}=\sum_{(i,j)}\sqrt{\mu(i,j)}\ket{i,j}=\frac{1}{n}\sum_{i,j}\ket{i,j}
=\ket{\psi_0}$. The overlap is $\alpha=\braket{\mu^{1/2}}{\psi_0}=1$, so the
projection onto the stationary eigenspace succeeds with probability $|\alpha|^2=1$.

In our algorithm, this fact is used within the MNRS quantum walk \emph{search} framework
(Theorem~\ref{thm:qwalk-search}), not as a standalone mixing procedure.
The MNRS algorithm uses $W(P)$ as a subroutine to detect and find \emph{marked} vertices
(those in $M_{\pi^*}$ with high consistency score), exploiting the spectral gap
$\delta\ge\Omega(1)$ (Theorem~\ref{thm:spectral-gap}). The search advantage comes
from the quadratic speedup in detecting marked vertices, not from faster convergence to
stationary. The setup cost $S=0$ (since $\ket{\psi_0}$ is trivially prepared) is one of
the favorable parameters enabling the $\OO(n^{3/2})$ total complexity.
\end{proof}

\subsection{Robustness to Adversarial Inputs}\label{app:adversarial}

Our algorithm is analyzed under the promise that either $\bar{\ed}(G,H)\le\varepsilon$ or
$\bar{\ed}(G,H)\ge 2\varepsilon$. Here we discuss robustness when the input is
adversarial (worst-case) rather than random.

\begin{lemma}[Worst-case walk concentration]\label{lem:worst-case}
For any graphs $G,H$ with $\ed(G,H)\le k=\varepsilon\binom{n}{2}$, the product graph
$\Gamma(G,H)$ satisfies:
\begin{enumerate}[(a)]
  \item The matching set $M_{\pi^*}$ (for optimal $\pi^*$) has internal edge density
    $\ge 1-2\varepsilon$.
  \item Every set $S\subseteq V(\Gamma)$ with $|S|=n$ \textbf{that corresponds to a
    permutation} (i.e., $S=\{(i,\sigma(i))\colon i\in[n]\}$ for some
    $\sigma\in\mathfrak{S}_n$) and has internal edge density $\ge 1-2\varepsilon$
    satisfies $|S\cap M_{\pi^*}|\ge n-4\sqrt{k}$, i.e., $S$ must overlap significantly
    with $M_{\pi^*}$.
\end{enumerate}
\end{lemma}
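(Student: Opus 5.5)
Part (a) is a direct count, and I will dispatch it first. By Definition~\ref{def:approx-clique}, $M_{\pi^*}$ induces exactly $\binom{n}{2}-\ed_{\pi^*}(G,H)\ge\binom{n}{2}-k$ edges in $\Gamma$. Its internal edge density is therefore at least $1-k/\binom{n}{2}=1-\varepsilon\ge 1-2\varepsilon$. No randomness and no structure of $G,H$ is used.

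For part (b), the first step is to check whether the statement can hold "for any graphs" as written. I expect this to be the main obstacle: it cannot, without a rigidity hypothesis. A counterexample is $G=H=\overline{K_n}$, the empty graph. Then $k=0$, every $\sigma\in\Sym_n$ gives $\ed_\sigma(G,H)=0$, and hence every permutation set $S$ has density $1$. Now take $\pi^*=\mathrm{id}$ and $\sigma$ a derangement. Then $|S\cap M_{\pi^*}|=0<n=n-4\sqrt{k}$. The same failure occurs whenever $\mathrm{Aut}(G)$ contains permutations moving many vertices. So the plan is to prove (b) under the distributional hypothesis used elsewhere in the paper: $G\sim\mathcal{G}(n,1/2)$, with $H$ generated as in Remark~\ref{rem:correlation}. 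I would flag the worst-case claim as requiring this correction.

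Under that hypothesis I would argue as follows. Fix $\sigma$ disagreeing with $\pi^*$ on a set $D$ of $m=|D|$ vertices. Every pair $\{u,v\}$ with $u\in D$ compares $(A_G)_{uv}$ against an $H$-entry that, through $\pi^*$, corresponds to a \emph{different} $G$-entry. There are at least $m(n-m)/2+\binom{m}{2}\ge mn/4$ such pairs, after discarding pairs that collide through the transposition structure of $\sigma^{-1}\pi^*$. Following the independence argument in Lemma~\ref{lem:local-consistency}, I would extract a subfamily of $\Omega(mn)$ of these pairs whose mismatch indicators are independent, each being $\mathrm{Bern}(1/2\pm\varepsilon/2)$. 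Chernoff then gives
\[
\Pr\bigl[\ed_\sigma(G,H)\le c\,mn\bigr]\le e^{-\Omega(mn)}
\]
for a suitable constant $c>0$. The density condition $\ge 1-2\varepsilon$ means $\ed_\sigma(G,H)\le 2k$. For $m$ above the threshold $m_0=C k/n$, this contradicts the lower bound $\ed_\sigma(G,H)\ge c\,mn$. A union bound over the at most $\binom{n}{m}m!\le n^{2m}$ choices of $(D,\sigma|_D)$ is absorbed by $e^{-\Omega(mn)}$, and summing over $m\ge m_0$ leaves total failure probability $e^{-\Omega(n)}$.

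Finally, I convert the threshold to the required form: $m_0=Ck/n=O(\varepsilon n)$. Since $\sqrt{k}=\Theta(\sqrt{\varepsilon}\,n)$ and $\varepsilon\le\sqrt{\varepsilon}$, we get $m_0\le 4\sqrt{k}$ once $\varepsilon$ is below a constant. Hence $|S\cap M_{\pi^*}|=n-m\ge n-4\sqrt{k}$ with high probability. The delicate step is extracting the $\Omega(mn)$ genuinely independent mismatch indicators when $\sigma^{-1}\pi^*$ has short cycles inside $D$. I would handle this by restricting to pairs $\{u,v\}$ with $u\in D$ and $v\notin D$, whose compared $G$-entries are $(A_G)_{uv}$ and $(A_G)_{u'v}$ with $u'\ne u$, and these are distinct entries.
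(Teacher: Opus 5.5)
Part (a) matches the paper's argument. On (b) your diagnosis is correct and the paper's is not. Your counterexample works. Read it with $\varepsilon=0$, or with any $\varepsilon<1/8$ so that $4\sqrt{k}<n$, since $k$ is fixed as $\varepsilon\binom{n}{2}$ and not as $\ed(G,H)$. It defeats every optimal $\pi^*$, by taking $\sigma=\pi^*\circ\tau$ with $\tau$ a derangement.

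The paper instead claims a deterministic proof. It applies Markov bounds to $d_\sigma$ and $d_{\pi^*}$, then asserts that each low-defect $i\in D$ contributes at least $1$ to $\ed_\sigma(G,H)-\ed_{\pi^*}(G,H)$ because $\sigma(i)\neq\pi^*(i)$. That assertion is a rigidity property of random graphs. It fails for graphs with twins or automorphisms; in your example all defects and the difference are $0$. Even if granted, it bounds the low-defect part of $D$ only by $O(k)$, and the final count $t\le 2\sqrt{k}+2\sqrt{2k}$ silently drops that part. The arithmetic $2\sqrt{k}+2\sqrt{2k}\le 4\sqrt{k}$ is also false.

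So your route is genuinely different from the paper's, and it is the one that can work. You use the planted model plus a union bound over $(D,\sigma|_D)$, as in the graph-alignment literature. It even gives the stronger threshold $O(\varepsilon n)$ for small $\varepsilon$.

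The gap is the independence extraction, which you flag but do not actually resolve. Put $\tau=\pi^{-1}\sigma$ on $D$; it is a derangement of $D$, so your $u'=\tau(u)$ lies in $D$. For $v\notin D$ the mismatch indicator of $\{u,v\}$ is $(A_G)_{uv}\oplus(A_G)_{\tau(u)v}\oplus B_{\sigma(u)\pi(v)}$. So the indicators of $u$ and $\tau^{-1}(u)$ at the same $v$ share the entry $(A_G)_{uv}$. On a $2$-cycle $(u\,u')$ of $\tau$ they differ only by noise, and along any cycle the noiseless parts XOR to $0$. The fact that the two entries inside one indicator are distinct says nothing about independence across indicators.

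A second problem: pairs with $u\in D$, $v\notin D$ number only $m(n-m)$. That family is empty at $m=n$, yet your union bound must cover every $m_0\le m\le n$.

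Both problems have one fix.
\begin{itemize}
\item Choose $D'\subseteq D$ with $\tau(D')\cap D'=\emptyset$ and $|D'|\ge m/3$, by taking alternate vertices on each cycle.
\item Choose a set $F$ of pairs inside $D$ with $\tau(F)\cap F=\emptyset$ and $|F|\ge m(m-2)/6$. The permutation that $\tau$ induces on $\binom{D}{2}$ fixes only the at most $m/2$ pairs swapped by $2$-cycles, so this is possible.
\end{itemize}
The resulting indicators involve pairwise distinct $G$-entries, so they are i.i.d.\ $\mathrm{Bern}(1/2)$. There are at least $m(2n-m-2)/6\ge m(n-2)/6$ of them for every $m$, and your Chernoff and union-bound steps then go through.

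Finally, this independence holds relative to the planted $\pi$, not the data-dependent $\pi^*$. Run the argument for $\pi$. Then note that $\ed_{\pi^*}(G,H)\le k$ places $\pi^*$ within $m_0$ of $\pi$ as well. Conclude $|S\cap M_{\pi^*}|\ge n-2m_0$, which is at least $n-4\sqrt{k}$ once $\varepsilon$ is below a small constant.
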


\begin{proof}
Part (a) follows directly from Definition~\ref{def:approx-clique}: the matching set
$M_{\pi^*}$ has $\binom{n}{2}-k$ internal edges out of $\binom{n}{2}$ possible, giving
density $1-k/\binom{n}{2}=1-\bar\ed(G,H)\ge 1-\varepsilon\ge 1-2\varepsilon$.

For part (b), let $S=\{(i,\sigma(i)):i\in[n]\}$ for some bijection $\sigma$ (since $S$
has $n$ elements and each vertex of $G$ appears at most once). If
$|S\cap M_{\pi^*}|=n-t$, then $\sigma$ and $\pi^*$ differ on $t$ vertices. The internal
edge density of $S$ is
\[
  1-\frac{\ed_\sigma(G,H)}{\binom{n}{2}}.
\]
If this density is $\ge 1-2\varepsilon$, then $\ed_\sigma(G,H)\le 2\varepsilon\binom{n}{2}
=2k$.

Now we bound $t$ deterministically. Let $D=\{i\in[n]:\sigma(i)\neq\pi^*(i)\}$ with $|D|=t$.
For each $i\in D$, define $d_\sigma(i)=|\{j\neq i: (A_G)_{ij}\neq(A_H)_{\sigma(i)\sigma(j)}\}|$.
Since $\sum_i d_\sigma(i)=2\,\ed_\sigma(G,H)\le 4k$, the number of vertices with
$d_\sigma(i)\ge\sqrt{2k}$ is at most $4k/\sqrt{2k}=2\sqrt{2k}$ by Markov's inequality.
Similarly, since $\sum_i d_{\pi^*}(i)=2k$, the number of vertices with
$d_{\pi^*}(i)\ge\sqrt{k}$ is at most $2\sqrt{k}$.

For any vertex $i\in D$ with both $d_\sigma(i)<\sqrt{2k}$ and $d_{\pi^*}(i)<\sqrt{k}$,
the edges incident to $i$ are consistent with $\pi^*$ on all but $<\sqrt{k}$ neighbors,
and consistent with $\sigma$ on all but $<\sqrt{2k}$ neighbors. Since $\sigma(i)\neq\pi^*(i)$,
each such disagreement vertex contributes $\ge 1$ to $\ed_\sigma(G,H)-\ed_{\pi^*}(G,H)$ via
the edges connecting it to the $n-t$ agreeing vertices. By double-counting,
$t\le 2\sqrt{k}+2\sqrt{2k}\le 4\sqrt{k}$, giving $|S\cap M_{\pi^*}|=n-t\ge n-4\sqrt{k}$.
\end{proof}

\subsection{Consistency Check Subroutine}\label{app:consistency}

The consistency check in Phase 2 of the algorithm (Lemma~\ref{lem:reconstruction}) requires
computing the consistency score of each candidate pair. We detail the quantum implementation.

\begin{lemma}[Quantum consistency scoring]\label{lem:q-consistency}
Given $s$ candidate pairs $\{(i_1,j_1),\ldots,(i_s,j_s)\}$ and a test pair $(i,j)$,
the consistency score
\[
  \text{score}(i,j) = |\{t\in[s]\colon (A_G)_{i,i_t}=(A_H)_{j,j_t}\}|
\]
can be estimated to within additive error $\delta s$ with probability $\ge 2/3$ using
$\OO(1/\delta)$ queries.
\end{lemma}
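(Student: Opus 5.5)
The plan is to reduce the statement to amplitude estimation (Theorem~\ref{thm:amp-est}), in the same way as Lemma~\ref{lem:verification}. Write the normalized score as
\[
  p(i,j)\;=\;\frac{\text{score}(i,j)}{s}\;=\;\EE_{t\sim\mathrm{Unif}([s])}\bigl[g(t)\bigr],
  \qquad g(t)=\mathbf{1}\bigl[(A_G)_{i,i_t}=(A_H)_{j,j_t}\bigr].
\]
Estimating $\text{score}(i,j)$ to additive error $\delta s$ is then the same as estimating $p(i,j)$ to additive error $\delta$.

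Next, build a circuit $\mathcal{A}$ whose output ancilla reads $1$ with probability exactly $p(i,j)$. It works in five steps.
\begin{enumerate}
  \item Prepare the uniform superposition $\frac{1}{\sqrt{s}}\sum_{t\in[s]}\ket{t}$. This needs $\OO(\log s)$ gates and no queries. If $s$ is not a power of two, use standard exact uniform-state preparation over $[s]$.
  \item Classically look up $(i_t,j_t)$ from the stored list of candidate pairs into an address register. This costs no queries, because the pairs are classical data already known to the algorithm.
  \item Query $O_G$ on $\ket{i}\ket{i_t}$ and $O_H$ on $\ket{j}\ket{j_t}$, one query each, writing the answers into two ancillas.
  \item Compute $1\oplus a\oplus b$ into an output qubit.
  \item Uncompute the address register if a clean reflection is wanted. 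Amplitude estimation tolerates garbage as long as $\mathcal{A}^\dagger$ is available, so this step is optional.
\end{enumerate}
Each application of $\mathcal{A}$ or $\mathcal{A}^\dagger$ therefore costs exactly $2$ queries, and the probability of measuring $1$ is $\frac{1}{s}\sum_t g(t)=p(i,j)$.

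Now apply Theorem~\ref{thm:amp-est} with precision $\delta$. This yields $\tilde p$ with $|\tilde p-p(i,j)|\le\delta$ with probability at least $2/3$, using $\OO(1/\delta)$ applications of $\mathcal{A}$ and $\mathcal{A}^\dagger$, hence $\OO(1/\delta)$ queries in total. Output $s\tilde p$, whose error is at most $\delta s$.

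The main subtlety is not the query count but the correctness of $\mathcal{A}$. The lookup $t\mapsto(i_t,j_t)$ must be a reversible, query-free operation; this is fine because the pairs are classical input to the subroutine, so the lookup is a fixed classical circuit (a QROM-style table) that is cheap in gates. Two edge cases also need care. When $i_t=i$ or $j_t=j$, the diagonal entry is $0$ by convention, so the indicator is still well defined and the claim holds with no change in the definition of the score. When $\delta\ge1$ the statement is trivial, since one can output $s/2$ with error at most $s/2$; so one may assume $\delta<1$, where amplitude estimation gives the stated bound. Finally, if the $2/3$ success probability needs boosting inside Phase~2, take the median of $\OO(\log(1/\eta))$ repetitions; this is outside the statement itself.
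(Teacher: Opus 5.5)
Your proposal is correct and follows essentially the same route as the paper. Both normalize to $p=\text{score}(i,j)/s$, build a two-query circuit over the uniform superposition of $t\in[s]$ that writes the consistency indicator into an ancilla, apply amplitude estimation (Theorem~\ref{thm:amp-est}) with precision $\delta$, and rescale by $s$; your extra remarks on the query-free classical lookup of $(i_t,j_t)$, the zero diagonal entries, and the trivial regime $\delta\ge 1$ are harmless refinements that the paper leaves implicit.
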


\begin{proof}
Define $f_t=1$ if $(A_G)_{i,i_t}=(A_H)_{j,j_t}$ and $f_t=0$ otherwise. Then
$\text{score}(i,j)=\sum_{t=1}^s f_t$. Prepare the uniform superposition $\frac{1}{\sqrt{s}}
\sum_{t=1}^s\ket{t}$, query $(A_G)_{i,i_t}$ and $(A_H)_{j,j_t}$ coherently, compute
$f_t$ in an ancilla, and apply amplitude estimation (Theorem~\ref{thm:amp-est}) to estimate
$p=\text{score}(i,j)/s$ with precision $\delta$. This uses $\OO(1/\delta)$ applications of
the circuit, each requiring $2$ queries.
\end{proof}

\subsection{Analysis of the Negative Case}\label{app:negative}

\begin{lemma}[Walk behavior in the NO case]\label{lem:no-case}
When $\bar{\ed}(G,H)\ge 2\varepsilon$ for $\varepsilon<1/8$, the quantum walk on
$\Gamma(G,H)$ produces no vertex pair $(i,j)$ with high consistency score with probability
$\ge 1-1/\poly(n)$.
\end{lemma}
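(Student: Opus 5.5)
The plan is to separate the statement into a deterministic part, which computes exactly what the marking oracle of Definition~\ref{def:marking} measures, and a probabilistic part over the oracle's internal sampling and the walk's measurement. First, for a fixed pair $(G,H)$ and vertex $(i,j)$, I would compute the exact consistency rate. Write $a_i=\deg_G(i)/(n-1)$ and $b_j=\deg_H(j)/(n-1)$. Sampling $(i',j')$ uniformly from $([n]\setminus\{i\})\times([n]\setminus\{j\})$ makes $(A_G)_{ii'}$ and $(A_H)_{jj'}$ independent, so
\[
  p(i,j)=a_ib_j+(1-a_i)(1-b_j)=\tfrac12+\tfrac12(2a_i-1)(2b_j-1).
\]
This identity holds for every pair of graphs and uses no randomness of $G,H$. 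Consequently $(i,j)$ has ``high consistency score'', meaning $p(i,j)>1/2+\varepsilon/8$ (the midpoint between $1/2$ and the marking threshold), if and only if $(2a_i-1)(2b_j-1)>\varepsilon/4$.

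Second, I would handle the sampling error exactly as in Lemma~\ref{lem:marking-correct}(b). By Hoeffding (Lemma~\ref{lem:hoeffding}) with $r=\Theta(\log n/\varepsilon^2)$, every vertex with $p(i,j)\le 1/2+\varepsilon/8$ is marked with probability at most $1/\poly(n)$. A union bound over the $n^2$ vertices then shows the effective marked set is empty except with probability $1/\poly(n)$. In that case MNRS (Theorem~\ref{thm:qwalk-search}) has amplitude at most $1/\poly(n)$ on marked states throughout its $\OO(\sqrt n)$ iterations, by a hybrid argument that compares the run to the run with an all-zero marking oracle. Hence the measured pair fails the consistency test with probability $\ge 1-1/\poly(n)$.

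Third, and this is the step I expect to be the real obstacle, I need to rule out degree-imbalanced pairs, i.e.\ vertices $i,j$ with $|2a_i-1|\,|2b_j-1|>\varepsilon/4$, using only the hypothesis $\bar{\ed}(G,H)\ge 2\varepsilon$. The natural route is to feed the imbalance into Lemma~\ref{lem:worst-case}(b) in contrapositive form. I do not expect this to work in general. For $G$ empty and $H=K_{n/2}\cup\overline{K_{n/2}}$ we have $\bar{\ed}\approx 1/4\ge 2\varepsilon$, yet every $(i,j)$ with $j$ outside the clique has $p(i,j)=1$.

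So the statement as worded can only be established under the standing distributional hypothesis carried throughout Section~\ref{sec:main}, namely that the inputs are degree-typical ($|2a_i-1|,|2b_j-1|\le\OO(\sqrt{\log n/n})$, the paper's $G,H\sim\mathcal{G}(n,1/2)$ regime). Under that hypothesis the product term is $o(\varepsilon)$, the marked set is empty, and the second step concludes the proof. My plan is therefore to state this hypothesis explicitly at the start and flag that removing it is impossible, because of the example above.
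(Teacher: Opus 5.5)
Your proposal is correct, and your counterexample is valid, so the hypothesis you add is necessary rather than a weakness. With $G$ empty and $H=K_{n/2}\cup\overline{K_{n/2}}$, we get $\bar{\ed}(G,H)=\binom{n/2}{2}/\binom{n}{2}=\frac{n-2}{4(n-1)}$. This exceeds $2\varepsilon$ for $\varepsilon<1/8$ and $n$ large. Yet every $(i,j)$ with $j$ in the independent half passes the check of Definition~\ref{def:marking} with certainty. So at least half of $V(\Gamma)$ is marked, and MNRS search, or simply measuring the uniform stationary state, returns such a pair with constant probability. Your degree-typicality hypothesis holds w.h.p.\ under $\mathcal{D}_{\text{no}}$, which is the regime of Theorem~\ref{thm:main-upper}. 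Under it, and for $\varepsilon\gg\log n/n$, your argument goes through.

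Your route differs genuinely from the paper's. The paper argues structurally in three steps:
\begin{itemize}
\item every $M_\pi$ has internal density at most $1-2\varepsilon$;
\item the stationary distribution is (near-)uniform;
\item citing Hoeffding on ``random fluctuation'', a sampled $(i,j)$ has consistency score at most $(1-2\varepsilon)(n-1)+\OO(\sqrt{n\log n})$ with any matching set.
\end{itemize}
It then defers to Phase~3. That per-vertex bound does not follow from an average-density bound. Taking $\pi(i)=j$ and mapping $N_G(i)$ into $N_H(j)$ gives score $n-1-|\deg_G(i)-\deg_H(j)|$. This equals $n-1$ in your example, and is $n-1-\OO(\sqrt{n\log n})$ even for independent $\mathcal{G}(n,1/2)$ graphs. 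The appeal to ``random fluctuation'' also tacitly uses randomness of $G,H$ that the stated hypothesis does not supply.

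Your identity $p(i,j)=\tfrac12+\tfrac12(2a_i-1)(2b_j-1)$ replaces this with an exact, deterministic description of what the oracle measures. The only probabilistic ingredients left are the oracle's own sampling and degree concentration. It also makes plain that the promise $\bar{\ed}\ge 2\varepsilon$ plays no role. What survives of the paper's argument in the worst case is its final appeal to Phase~3. Since Phase~3 rejects every $\hat\pi$ when $\bar{\ed}(G,H)\ge 2\varepsilon$, the algorithm's NO answer is sound regardless. But that concerns the algorithm's output, not the walk, so it cannot rescue the lemma as worded.

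Two small points. First, your hybrid argument is more than you need: once every vertex passes the check with probability at most $1/\poly(n)$, whatever pair the walk outputs fails a fresh check with that probability. Second, your identity applies verbatim to $(i,\pi^*(i))$ in the YES case, giving $p=\tfrac12+\OO(\log n/n)<\tfrac12+\varepsilon/4$. This is incompatible with Lemma~\ref{lem:marking-correct}(a): the NO-case conclusion holds precisely because this oracle is blind to the planted correlation.
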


\begin{proof}
In the NO case, for every permutation $\pi$, we have $\ed_\pi(G,H)\ge 2\varepsilon\binom{n}{2}$.
This means every matching set $M_\pi$ has internal edge density $\le 1-2\varepsilon$, and
thus degree within $M_\pi$ is at most $(1-2\varepsilon)(n-1)$ on average.

The product graph $\Gamma$ has no dense near-clique of size $n$. The stationary distribution
$\mu$ of the walk is approximately uniform over the $n^2$ vertices, with
$\mu(i,j)\le\OO(1/n^2+1/n^3)$ for any single vertex. No set of $n$ vertices receives
weight more than $\OO(1/n)$.

After the quantum walk samples a vertex $(i,j)$, the consistency score with any hypothetical
matching set is at most $(1-2\varepsilon)(n-1)+\OO(\sqrt{n\log n})$ (by Hoeffding's bound on
the random fluctuation). The verification step in Phase 3 correctly identifies this as a NO
instance since the estimated edit distance exceeds the threshold $\varepsilon\binom{n}{2}$.
\end{proof}

\subsection{Gate Complexity}\label{app:gates}

\begin{lemma}[Total gate complexity]\label{lem:gate-complexity}
The quantum algorithm of Theorem~\ref{thm:main-upper} can be implemented with
$\OO(n^{3/2}\log n/\varepsilon)$ two-qubit gates on $\OO(\log n)$ qubits.
\end{lemma}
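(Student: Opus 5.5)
The plan is to account for gates phase by phase, in the same way Table~\ref{tab:cost} accounts for queries, and to show that every oracle call is accompanied by only $\OO(\log n)$ non-query gates acting on $\OO(\log n)$ qubits. First we fix the register layout. The walk state uses two $\lceil\log_2 n\rceil$-qubit registers for $(i,j)$ and two more for the proposed move $(i',j')$. The marking check of Definition~\ref{def:marking} needs a counter of $\OO(\log r)=\OO(\log\log n+\log(1/\varepsilon))$ qubits, since the $r$ samples are processed sequentially and the counter is uncomputed after each comparison. MNRS phase estimation needs $\OO(\log(1/\Delta))=\OO(1)$ control qubits, by Corollary~\ref{cor:phase-gap}. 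Counting these gives $\OO(\log n)$ qubits in total for $\varepsilon\ge 1/\poly(n)$. Each oracle $O_G,O_H$ is treated as a single gate, or as an $\OO(\log n)$-gate circuit in the standard QRAM-style model adopted in the discussion on time complexity.

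Next we bound the gate cost per primitive. For the walk step (Lemma~\ref{lem:walk-impl}), we prepare a uniform superposition over $[n]\setminus\{i\}$. This is done by preparing a uniform superposition over $[n-1]$, which takes $\OO(\log n)$ gates (exact preparation for non-powers of two uses a standard amplitude-amplification trick with $\OO(1)$ rounds), followed by a reversible increment-if-$\geq i$ comparator of $\OO(\log n)$ gates. After that come two queries, one controlled rotation setting the laziness and consistency amplitudes, and the reflections $\mathrm{ref}(A),\mathrm{ref}(B)$. Each reflection is implemented as the preparation unitary, followed by a reflection about $\ket{0}$ (an $\OO(\log n)$-size multi-controlled Toffoli using one ancilla), followed by the inverse preparation unitary. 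So $U$ costs $\OO(\log n)$ gates. The check costs $r$ sample preparations with comparators and counter increments, giving $\OO(r\log n)$ gates. Phase~2 is mostly classical post-processing. Its quantum part is the Grover search for high-defect vertices, where each iteration costs $\OO(\log n)$ gates in the diffusion operator and in the lookup of $\hat\pi$. Phase~3 builds $\mathcal{A}$ from Lemma~\ref{lem:verification}, which has $\OO(\log n)$ gates apart from the lookup of $\hat\pi$.

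The main obstacle is the classical lookup of $\hat\pi(u)$ inside Phase~3 and the Phase~2 Grover searches. Computed coherently, this is a table of $n$ entries, and a naive circuit costs $\OO(n\log n)$ gates per call, which would blow the budget. My first attempt is to assume, as the paper implicitly does, a QRAM or classical-controlled lookup gate of unit or $\OO(\log n)$ cost, which matches the cost model in which each query already counts as one operation. As a fallback, I would note that Phase~3 makes only $\OO(1/\varepsilon)$ calls. Even with $\OO(n\log n)$-gate lookups, that contributes $\OO(n\log n/\varepsilon)$ gates, which is dominated by $n^{3/2}\log n/\varepsilon$. The same fallback does not rescue the Grover part of Phase~2: its $\OO(\sqrt{k}\sqrt{n})=\OO(n^{3/2}\sqrt{\varepsilon})$ iterations each need a $\hat\pi$ lookup, so naive $\OO(n\log n)$-gate lookups would cost $\OO(n^{5/2}\sqrt{\varepsilon}\log n)$ gates. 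Phase~2 therefore genuinely requires the $\OO(\log n)$ lookup assumption and must be stated conditional on it.

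Finally we multiply the per-primitive costs by the counts from Proposition~\ref{prop:find-marked} and Theorem~\ref{thm:main-upper}. Phase~1 consists of $\OO(\sqrt{n})$ outer iterations, each with $\OO(1)$ walk steps of $\OO(\log n)$ gates and one check of $\OO(r\log n)$ gates. Summed over $s=\OO(\log n)$ seeds, this is within $\OO(n^{3/2}\log n/\varepsilon)$. Under the lookup assumption, Phase~2 contributes $\OO(n^{3/2}\sqrt{\varepsilon}\log n)$ gates and Phase~3 contributes $\OO(\log n/\varepsilon)$. The total is therefore $\OO(n^{3/2}\log n/\varepsilon)$ two-qubit gates after decomposing Toffolis into a constant number of two-qubit gates, with the lookup convention stated explicitly as a hypothesis.
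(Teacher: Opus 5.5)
Your argument is sound relative to the cost model the paper itself uses, and it takes a finer route than the paper's proof.

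\textbf{The paper's proof.} It is a single multiplication: charge $\OO(\log n)$ gates per oracle call, multiply by ``$\OO(n^{3/2}/\varepsilon)$ total queries'', and list the registers. Taken literally this is off by a logarithm. Theorem~\ref{thm:main-upper} uses $\OO(n^{3/2}\log n/\varepsilon)$ queries because of the $s=\OO(\log n)$ repetitions, so the multiplication would give $\OO(n^{3/2}\log^2 n/\varepsilon)$. It also treats the coherent evaluation of $\hat\pi$ as free, because that evaluation costs no queries.

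\textbf{What your accounting buys.} Your phase-by-phase count fixes the first issue. The true Phase~1 gate cost, $\OO(\sqrt{n}\,r\log n)$ per invocation, is far below the inflated $n^{3/2}/\varepsilon$ bound of Proposition~\ref{prop:find-marked}, which leaves room for the $\log n$ repetitions. It also exposes the second issue. A classical-table lookup is not a per-query overhead: without a QRAM-type primitive, a standard table-lookup circuit costs on the order of $n$ gates. That is harmless for the $\OO(1/\varepsilon)$ calls of Phase~3, but not for the up to $\OO(n^{3/2}\sqrt{\varepsilon})$ Grover iterations of Phase~2. So your route makes explicit the hypothesis the bound actually rests on, at the price of proving a conditional version. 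The paper asserts the unconditional one without addressing the issue.

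\textbf{Small points.}
\begin{itemize}
\item Your Phase~1 total, $\OO(\sqrt{n}\log^3 n/\varepsilon^2)$, lies under $n^{3/2}\log n/\varepsilon$ only when $\varepsilon\gtrsim\log^2 n/n$, so state this. It is implied by the regime $n\ge\log^2 n/\varepsilon^2$ already assumed in Proposition~\ref{prop:find-marked}.
\item The standard MNRS approximate reflection uses $\Theta(\log n)$ parallel phase-estimation registers to control accumulated error over $\Theta(\sqrt{n})$ rounds. Budget $\OO(\log n)$ ancillas there rather than $\OO(1)$; the total still stays within $\OO(\log n)$.
\item Your claim that Phase~2 \emph{genuinely} needs the lookup assumption depends on reading the consistency check of Lemma~\ref{lem:reconstruction} as ranging coherently over reference vertices $u$. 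If the check compares only against classically fixed reference vertices, such as the seeds, their images are hard-coded constants and no table lookup arises. The paper's description is too vague to decide, so keeping it as an explicit hypothesis is the right choice.
\end{itemize}
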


\begin{proof}
Each query to $O_G$ or $O_H$ is a single controlled operation on $\OO(\log n)$ qubits.
The walk operator $W(P)$ requires $\OO(\log n)$ gates per step (for index arithmetic,
comparisons, and controlled rotations). With $\OO(n^{3/2}/\varepsilon)$ total queries
and $\OO(\log n)$ gates per query, the total gate count is
$\OO(n^{3/2}\log n/\varepsilon)$.

The qubit count is: $\lceil\log_2 n\rceil$ qubits for the first vertex register (vertex
of $G$), $\lceil\log_2 n\rceil$ qubits for the second vertex register (vertex of $H$),
$\lceil\log_2 n\rceil$ qubits for the walk coin/ancilla, and $\OO(\log n)$ qubits for
phase estimation. Total: $\OO(\log n)$ qubits.
\end{proof}

\end{document}